\DeclarePairedDelimiter\abs{\lvert}{\rvert}
\newcommand\headercell[1]{%
	\smash[b]{\begin{tabular}[t]{@{}c@{}} #1 \end{tabular}}}
\newtheorem{theorem}{Theorem}[section]
\newtheorem{proposition}[theorem]{Proposition}
\newtheorem{corollary}[theorem]{Corollary}
\newtheorem{lemma}[theorem]{Lemma}
\theoremstyle{definition}
\newtheorem{remark}[theorem]{Remark}
\newtheorem{definition}[theorem]{Definition}
\newtheorem{example}[theorem]{Example}
\newtheorem{notation}[theorem]{Notation}
\begin{document}
	\title{Probability of Default modelling with L\'evy-driven Ornstein-Uhlenbeck processes and applications in credit risk under the IFRS 9}
	
	\author[1]{K. Georgiou}
	\author[1]{A.N. Yannacopoulos}
	\affil[1]{\small Department of Statistics and Stochastic Modeling and Applications Laboratory, Athenss University of Economics and Business, 76 Patission Str. 10434, Athens}
	
	\date{}
	\maketitle

	\begin{abstract}
		In this paper we develop a framework for estimating Probability of Default (PD) based on stochastic models governing an appropriate asset value processes. In particular, we build upon a L\'evy-driven Ornstein-Uhlenbeck process and consider a generalized model that incorporates multiple latent variables affecting the evolution of the process. We obtain an Integral Equation (IE) formulation for the corresponding PD as a function of the initial position of the asset value process and the time until maturity, from which we then prove that the PD function satisfies an appropriate Partial Integro-Differential Equation (PIDE). These representations allow us to show that appropriate weak (viscosity) as well as strong solutions exist, and develop subsequent numerical schemes for the estimation of the PD function. Such a framework is necessary under the newly-introduced International Financial Reporting Standards (IFRS) 9 regulation, which has imposed further requirements on the sophistication and rigor underlying credit modelling methodologies. We consider special cases of the generalized model that can be used for applications to credit risk modelling and provide examples specific to provisioning under IFRS 9, and more.
	\end{abstract}
	{\bf Keywords}: stochastic modeling, probability, default, credit risk, numerical methods. \\
	{\bf Mathematics Subject Classification}: 60H30, 45K05 (Primary), 91G40, 91G60, 91-08 (Secondary). 	

	\tableofcontents

	\section{Introduction}
	One of the main issues currently concerning financial institutions is the implementation of the new International Financial Reporting Standards (IFRS) 9. Due to the financial crisis, the purpose of the updated standards is to introduce a framework under which institutions forecast credit losses (for loan provisioning purposes). Specifically, “under the impairment approach in IFRS 9 it is no longer necessary for a credit event to have occurred before credit losses are recognised. Instead, an entity always accounts for expected credit losses, and changes in those expected credit losses. The amount of expected credit losses is updated at each reporting date to reflect changes in credit risk since initial recognition and, consequently, more timely information is provided about expected credit losses”. Furthermore, "the objective of the impairment requirements is to recognise lifetime expected credit losses for all financial instruments for which there have been significant increases in credit risk since initial recognition — whether assessed on an individual or collective basis — considering all reasonable and supportable information" (IFRS 9 Red Book). Further details and research are given in \cite{beerbaum2015significant} and \cite{xu2016estimating}. Hence, loan provisioning regulations under IFRS 9, require financial institutions to consider expected losses based on the current credit state of each loan and the possible future losses. This estimation requires knowledge of the lifetime Probability of Default (PD) for all loan exposures, as well as additional risk parameters such as the Loss Given Default (LGD) and the Exposure at Default (EAD), and finally being able to update these quantities dynamically under changing market conditions. The estimation of future losses (i.e., forward looking provisions) can therefore be tackled by employing the theory of stochastic processes and their dynamics.
	\par Under IFRS 9, it is now mandatory for financial institutions to classify loans into three distinct categories, known as the IFRS 9 Stages. Specifically, Stage 1 loans are considered performing, Stage 2 contains loans which have displayed a significant increase in credit risk and Stage 3 contains all Non-Performing loans (NPLs), considered to have defaulted. As mentioned, the institutions must forecast future losses specifically for the Stage 2 loans, which are considered to be at risk. The resulting Stage 2 provisions are referred to as the Expected Lifetime Provisions (ECL). In the present work, we will focus on portfolios of corporate and small business loans, where it is common practice to consider the company's assets to be governed by stochastic process (see e.g., \cite{barndorff2001non} and \cite{barndorff2001modelling}). Under this assumption, the PD associated with each loan depends on the underlying asset process and we can define the PD as the probability that the asset process falls below a fixed threshold. 
	
\par Perhaps one of the most influential changes due to the IFRS 9 is the requirement for financial institutions to consider Expected Lifetime Provisions (ECL), whereby future losses must be forecast using mathematically robust and rigorous methods, for all Stage 2 credit exposures, which are considered to have displayed a significant increase in risk. This estimation requires knowledge of the lifetime Probability of Default (PD) for all loan exposures, as well as additional risk parameters such as the Loss Given Default (LGD) and the Exposure at Default (EAD), and finally being able to update these quantities dynamically under changing market conditions. There exist recent papers detailing and studying the ECL calculation, such as \cite{beerbaum2015significant} and \cite{xu2016estimating}. We extend this modelling framework by employing the theory of stochastic processes and their dynamics for the estimation of lifetime PDs and future losses (i.e., forward looking provisions). In this paper, we aim to develop a stochastic modelling framework under which the PD process can be considered mathematically and practically. Using this approach we can address in a robust and efficient manner the challenging provisioning, forecasting and pricing tasks under IFRS 9.
\par In general, calculating default probabilities both analytically and numerically is of paramount importance in risk management and a broad range of financial applications. However, particularly under IFRS 9, credit loss forecasting has introduced the need for robust structural models, that can be used for pricing and provisioning purposes. To this end, we will consider stochastic models for the evolution the underlying asset value process, whose default will be studied as an appropriate firs-time-hitting problem. Therefore, it can be assumed that we are working mainly within portfolios of corporate and small business loans, where it is common practice to consider the company's assets to be governed by stochastic process (see e.g., \cite{barndorff2001non} and \cite{barndorff2001modelling}). Under this assumption, the PD associated with each loan depends on the underlying asset process and we can define the PD as the probability that the asset process falls below a fixed threshold. More specifically, we will assume that the asset process is governed by an Ornstein - Uhlenbeck (OU) process with a jump component, a member of the family of jump diffusion processes. We note that practitioners may consider the evolution of asset-dependent processes instead, e.g., returns; such processes can still be described by similar stochastic models, rendering the methods proposed in this paper applicable in these cases, as well. For brevity, hereinafter we will refer to this underlying process as the asset process, with the understanding that is can be replaced with any related dynamics considered appropriate by practitioners. Important theoretical background of such processes and their properties are given in \cite{applebaum2009levy} and \cite{oksendal2007applied}. The jump process will account for abrupt changes in the asset processes, which are very common in practice and are closely related to loan defaults. Obtaining the evolution of the PD values, based on the stochastic asset model, will allow us to then tackle various modelling tasks which are currently open problems for financial institutions under the IFRS 9 framework. 
\par It is common in the literature to consider two separate cases for default probabilities:
\begin{itemize}
	\item A variable starting time and constant time horizon, defined by the process.
	\item A variable time horizon and constant starting time.
\end{itemize}
Our results in this paper constitute a generalization that combines these two cases (as analyzed in \cite{mishura2016ruin} and \cite{moller1995stochastic}). Specifically, we consider a generalized "Probability of Default function" and prove that, under certain homogeneity assumptions, we can obtain Integral Equations (IEs) and Partial Integro-Differential Equations (PIDEs) for both aforementioned PD cases.
Finally, using the IE formulation we will prove the existence of the PD values and the solvability of the PIDEs in the viscosity sense (details are given in the corresponding Section), in order to obtain estimates that can be used for the aforementioned modelling tasks, without having to always assume and/or prove strict regularity conditions, and further consider the conditions under which these solutions can be considered strong, with the required smoothness.
\par The above methodology will be used to consider real-life examples of loan provisioning calculations, scenario analysis and pricing, exemplifying the wide range of credit risk modelling tasks the proposed methodology can address. Finally, we note that, even though motivated by credit risk, the approaches detailed in this paper can find applications in other areas of financial mathematics, such as derivatives pricing, where the use of stochastic modelling remains prevalent, e.g., in the pricing of barrier options. 
 \par This paper is structured as follows. In Section \ref{background} we recall the background of similar stochastic processes in the literature and define the corresponding Probability of Default functions, in Section \ref{OU-Section} we present the generalized model for the asset value process and discuss special cases which are applicable in the IFRS 9 framework. In Section \ref{integral-section} we obtain IEs for the PD functions under each model considered and use them to prove that the PD functions satisfy certain mathematical properties, and in Section \ref{pides-visc} we obtain the PIDEs for the PD functions. The remainder of the paper is dedicated to numerical approximations and applications: in Section \ref{num-schemes} we construct appropriate Finite Difference numerical schemes to approximate the PD functions and, finally, in Section \ref{ifrs-section}, we present examples of the proposed methods applied to IFRS 9 provisioning, credit derivatives pricing and credit optimization problems.

\section{Aims and modelling framework}\label{background}
\par We start by discussing the PD process which, in its most general form, can be written as a function of both the starting time and maturity, as well as of the initial position of the corresponding asset value process. Furthermore, to accurately model real-life dynamics, it is necessary to account for the dependence on latent variables which affect the PD. Incorporating such processes, which in practice are e.g., macroeconomic variables or different market regimes, is of paramount importance as it largely affects PD estimation and subsequent  modelling results. Rigorously accounting for these exogenous variables is therefore necessary, and will allow us to consider a large family of stochastic processes that can be used in practice. To begin, consider compact and bounded sets $\mathcal{D}, \mathcal{D}_i \subset \mathbb{R}$, for $i=1,\dots,d$. Then, define the PD function:      

\begin{definition}\label{PD_def}
	Consider $x \in \mathcal{D} $ and the vector of (discrete or continuous) stochastic processes $(X_t^i)_{t \geq 0}$ with corresponding state spaces $\mathcal{D}_i$, for $i=1,\dots, d$. Furthermore, consider the stochastic asset value process $(G_t)_{t\geq s}$, with initial value $G_s=x$ and which depends on $(X_t^i)_{t \geq 0}$, for $i= 1,\dots, d$. Then, we define the Probability of Default function $\Psi:\mathcal{D} \times \mathcal{D}_1 \times \dots \times \mathcal{D}_d \times [0,T] \times [0,T] \rightarrow [0,1]$, for some fixed $T>0$, by:
	\begin{flalign}\label{pd-1}
		\Psi(x,x_s^1,x_s^2, \dots, x_s^d, s,t)= \mathbb{P}\Big(\inf_{\substack{s \leq r \leq t}} G_r \leq 0 | G_s=x, X^1_s = x^1_s, X^2_s = x^2_s, \dots, X^d_s = x^d_s \Big),
	\end{flalign} 
	and the corresponding survival probability $\Phi:\mathcal{D} \times \mathcal{D}_1 \times \dots \times \mathcal{D}_d \times [0,T] \times [0,T]  \rightarrow [0,1]$ by:
	\begin{eqnarray}
		\Phi(x,x_s^1,x_s^2, \dots, x_s^d, s,t) = 1-\Psi(x,x_s^1,x_s^2, \dots, x_s^d, s,t).
	\end{eqnarray}
\end{definition}
To motivate this definition and its usefulness, notice that by fixing $s$ we obtain the standard finite-horizon ruin probability (see e.g., \cite{mishura2016ruin}), whereas by fixing $t$ we obtain the ruin probability with variable starting time, as defined in \cite{moller1995stochastic}, which can be used to define a martingale. Finally, allowing $t \rightarrow \infty$ we obtain the infinite-horizon ruin probability. 
\par Modelling the evolution of PD functions has become even more important under IFRS 9, due to the increased complexity of provision calculations and Staging criteria. In general, all aforementioned PD functions, corresponding to variable maturity or starting times find many applications and have been considered in the field of credit risk, such as in \cite{ballotta2015counterparty}, \cite{schoutens2010levy} and \cite{zhou1997jump}. For example, the case of a variable maturity is often referred to as the Lifetime Probability of Default and is used extensively for provisioning and pricing purposes. Particularly in the context of IFRS 9 modelling, the Lifetime Probability of Default is used to assess credit risk at origination, as well as for Expected Lifetime Provisions for Stage 2 loans. We give detailed examples of such calculations in Section \ref{ifrs-section}. 

\par As mentioned, it is standard in the field of financial mathematics to consider the evolution of a debtor's assets to be governed by a stochastic process. A well-documented process that is used in various such applications is the Ornstein-Uhlenbeck (OU) process. In Appendix \ref{ou-app} we recall important properties of the OU process, a generalized version of which we will consider in this paper. OU models have been considered in past research and many applications. For example, a well-known special case is the Vasicek model \cite{vasicek1977equilibrium}. Further work has explored  the Merton model for default with underlying dynamics given by the continuous OU process, and has been extended to cases incorporating jumps. These find important applications particularly in credit risk modelling and pricing; see e.g., \cite{hull2004merton} and  \cite{barndorff2001non}, respectively. By including a jump process to the continuous OU asset process, we obtain the L\'evy-driven (jump) Ornstein-Uhlenbeck process, defined in (\ref{hom-gou}). 
\begin{eqnarray} \label{hom-gou}
	\label{gou1}
	dG_u=k(\theta - G_u)du + \sigma dB_u +  \int_{\mathbb{R}} z N(du,dz), \,\,\,\ G_0 = x.
\end{eqnarray} 
\par This is a natural generalization, as significant changes in credit events are often abrupt and unpredictable (particularly a deterioration in creditworthiness), corresponding to a discontinuous component in the driving stochastic process. Indeed, the goal of credit risk requirements under IFRS 9 is to ensure that financial institutions and their customers are protected against such rare and unexpected events and the subsequent losses. It is therefore important to capture the effect of such events mathematically, which is why this model will form the basis of our analysis and will be used to construct more sophisticated models in the next section. We will employ the fact that the jump OU process is time homogeneous so that, rather than considering a starting time $s$ and initial position $G_s$, we can define the time until maturity by $u:=t-s$ and consider $(G_u)_{u\geq 0}$, as above, equivalently. This is an important property that we will take advantage of to simplify the PD estimation.
\par To conclude, we note that the use of L\'evy processes for financial modeling is well-documented and established. \cite{schoutens2010levy} gives an extensive analysis of L\'evy processes and their use for asset process modelling, credit derivatives pricing and more. In \cite{luciano2006multivariate} and \cite{onalan2009financial} the authors consider a L\'evy-driven OU process, and L\'evy multivariate models for assets processes. The former fits the model parameters to the General Motors stock price, while the latter considers many different indices, obtaining suprisingly accurate results. We also refer the interested reader to \cite{ballotta2016multivariate} for a detailed analysis of the properties of the multivariate model. Seminal work has also been done in the study of L\'evy-driven OU processes in \cite{barndorff2001non}. Finally, well-documented numerical methods exist for the calibration of stochastic models with jumps, such as the Yuima framework for stochastic differential equations in R statistical language \cite{brouste2014yuima}.

\section{The generalized asset value model and PD function}\label{OU-Section}

In this section we develop a stochastic model that incorporates the exogenous variables required when considering asset value processes. To incorporate such effects, we build upon the family of regime switching and stochastic volatility models, as described below. We combine these to produce a generalized model, which we will use to construct a framework that encapsulates a large family of stochastic processes that can be used for asset value modelling and subsequent credit risk calculations. In addition to the mathematical results presented in this paper, we highlight that the framework developed using the generalized model addresses the strict requirements under IFRS 9, whereby credit risk modelling is required to incorporate multiple appropriate latent variables, whilst adhering to mathematical rigor.       

\subsection{Regime switching and stochastic volatility models}
\par First recall that loan exposures under the IFRS 9 framework are now classified into three Stages. Each of these Stages correspond to a given level of risk, with the most noteworthy change being the introduction of Stage 2 loans, i.e., credit exposures which have exhibited a significant increase in credit risk (SICR event, which can be defined by the institution, e.g., as a statistically significant increase in PD, a delinquency warning flag etc.). By definition, changes in the risk profile of an exposure will correspond to changes in the dynamics of the underlying asset process. For example, a debtor may request restructuring, or may be 30 days delinquent. This will trigger a SICR event, which can then affect the underlying asset value process. To capture this dependency we consider a regime switching model for the asset process, whereby the parameters of the stochastic process vary according to the underlying rating (Stage) of the exposure. We can do this by considering the Continuous Time Markov Chain (CTMC) $(R_t)_{t \geq 0}$ describing the rating at time $t$, where the set of all loan ratings is denoted by $\mathcal{R}$, with cardinality $|\mathcal{R}| = R$. Therefore, we obtain the following jump diffusion with Markov switching model: 
\begin{eqnarray}\label{model-rs-def}
	dG_u=k(R_u)\big(\theta(R_u) - G_u\big)dt + \sigma(R_u) dB_u+  \int_{\mathbb{R}} z N(du,dz), \,\,\,\ G_0 =x, R_0 = \rho,
\end{eqnarray}
with $\rho \in \mathcal{R}$. Note that in subsequent sections we adopt the notation $k_{\rho}, \theta_{\rho}, \sigma_{\rho}$, for brevity. For a reminder of CTMC processes and their properties see Appendix \ref{Markov}.
\par 
\par In the sequel, to develop a realistic model we want to capture the effects of macroeconomic variables, which naturally affect the evolution of the asset process; this is necessary for the modelling tasks we will consider under IFRS 9, as previously discussed. Typically, such latent variables are incorporated by considering stochastic volatility models, whereby the diffusion term of the asset process also evolves according to a stochastic process, as described by the coupled process: 
\begin{eqnarray}
	\begin{cases}
		dG_t = \mu_x(G_t, Y_t)dt +\sigma_x(G_t, Y_t)dB_t + \int_{\mathbb{R}} z N(dt,dz), \,\,\,\ G_s = x, \\
		dY_t =  \mu_y(Y_t)dt +\sigma_y(Y_t)dW_t,  \,\,\,\ Y_s = y,\\
	\end{cases}
\end{eqnarray}
for $y \in \mathcal{V}$ and where $B_t$ and $W_t$ are independent Brownian motions. Standard cases are the Bates' model, introduced in \cite{bates1996jumps}, as well as the Heston model (see \cite{benhamou2010time}), a version of which we consider below. In particular, letting $\mu_x(Y_t) = k(\theta -Y_t)$ and $\sigma_x(Y_t)=\sqrt{Y_t}$, we obtain the asset process driven by a stochastic volatility process, which follows the well-established Cox–Ingersoll–Ross (CIR) model, developed in \cite{cox2005theory} (note that both processes are time-homogeneous):  
\begin{eqnarray} \label{model-sv-def}
	\begin{cases} 
		dG_{u} = k(\theta -G_u)dt +\sqrt{Y_t}dB_t + \int_{\mathbb{R}} z N(dt,dz), \,\,\,\ G_0 = x,\\
		dY_{u} = \kappa(\mu- Y_u)dt  + \xi \sqrt{Y_u}dW_t, \,\,\,\ Y_0=y.\\
	\end{cases}
\end{eqnarray}

\par The above models are widely used in mathematical finance and stochastic modelling. Regime switching is a well-documented approach in financial modelling (see \cite{hamilton2010regime}), with applications ranging from macroeconomics (e.g., \cite{aristidou2018meta}) to option pricing (e.g., \cite{duan2002option}, \cite{hainaut2014intensity}) and interest rate modelling (\cite{goutte2011conditional}). In the case of credit risk, the underlying Markov chain is considered as an indication of the market conditions, which significantly impacts credit exposures and ratings. In subsequent sections we add to the multitude of applications by using the PD function that arises from the regime switching model to estimate Lifetime provisions and scenario analysis under IFRS 9. When considering regime switching in asset processes it is important to note that financial institutions may currently have various credit rating systems, which are not compatible with the IFRS 9 staging (which requires three district Stages for exposure ratings). However, recent work has shown that this is not restrictive and IFRS 9 - compatible transition matrices can be obtained from the existing internal ratings (see \cite{georgiou2021markov}). Finally, we refer the reader to \cite{zhu2015feynman} for a detailed analysis of more general regime switching jump diffusion processes, where the authors also consider the dynamics of the underlying Markov process to be a function of the initial position of the jump diffusion.
\par The stochastic volatility model is a natural extension, as it can be seen as the limit process of the regime switching model, as $\mathcal{R} = \mathbb{R}_{+}$. Such models, in the case of both continuous and jump processes, have also been considered for numerous applications in mathematical finance, particularly in pricing and hedging, such as in \cite{toivanen2010componentwise} and \cite{goutte2013pricing}.

\subsection{The generalized model}
Both the aforementioned models have important applications in credit risk modelling under the IFRS 9 framework. Combining the two, we obtain a generalized model that captures all the observable or latent variables required to estimate the PD evolution and subsequently tackle the IFRS 9 modelling tasks. This generalized model is of the form:
\begin{eqnarray} 
	\begin{cases} 
		dG_u=k(Y_u, R_u)\big(\theta(Y_u,R_u) - G_u\big)dt + \sigma(Y_u, R_u) dB_u+  \int_{\mathbb{R}} z N(du,dz), \,\,\,\ G_0 = x, \\
		dY_{u} = \kappa(\mu- Y_u)dt  + \xi \sqrt{Y_u}dW_t. \\
	\end{cases}
\end{eqnarray}
Specifically, we will be considering a combination of (\ref{model-rs-def}) and (\ref{model-sv-def}), which gives rise to the following.
\begin{definition} Under the generalized model, the asset value process is defined by the triple $(G_t, R_t, Y_t)_{t\geq 0}$, capturing both the switching and volatility processes, and is given by:
	\begin{eqnarray} \label{model-gen-def}
		\begin{cases} 
			dG_u=k(R_u)\big(\theta(R_u) - G_u\big)dt + \sigma(R_u) \sqrt{Y_t} dB_u+  \int_{\mathbb{R}} z N(du,dz), \,\,\,\ G_0 = x,\\
			dY_{u} = \kappa(\mu- Y_u)dt  + \xi \sqrt{Y_u}dW_t, \\
		\end{cases}
	\end{eqnarray}
	with $G_0 =x, R_0 = \rho$ and $Y_0 =y$. 
\end{definition}

Before moving on to define the appropriate Probability of Default functions, it will be useful to define some notation.

\begin{remark}\label{gen-unif-cont}
	An important note is that the transition probability of the generalized OU process is also uniformly continuous. This follows from the fact that its transition probability, for given $(R_0, Y_0)=(\rho, y)$, $p(x',x,t;\rho, y)$, is simply the coupling of the transition densities of the corresponding stochastic volatility models, which are continuous functions (see e.g., \cite{ackerer2018jacobi}), and is therefore uniformly continuous on all closed and bounded intervals $\mathcal{D}$ we will consider. 
\end{remark}

\begin{notation}\hfill
	\begin{itemize}
		\item[1.] Throughout the remainder of this paper, we employ the notation $Z_u^x$ to represent the stochastic process $(Z_u)_{u\geq 0 }$, with $Z_0 = x$, where appropriate. We also generalize this notation to incorporate cases with additional underlying variables $X_t^1, X_t^2, \dots, X_t^n$, by writing $Z_u^{(x_1,x_2,\dots,x_n)}$ to represent $(Z_u)_{u\geq 0 }$ with $X_0^i=x_i$ for $i = 1,2,\dots, n$, (the superscripts are to be understood as indices, i.e., the $i-$th underlying variable is $(X^i_t)_{t \geq 0}$).
		\item[2.] In the following sections, when referring to the transition densities of the regime switching, stochastic volatility and generalized OU models, we will omit the dependence on the latent variables for brevity, as it will be obvious from the context.
	\end{itemize}
	
\end{notation}

\subsection{The Probability of Default function}
\par Following the definition of the PD function, as given in (\ref{pd-1}), under the generalized model (\ref{model-gen-def}) we will condition on the initial state of the regime switching and stochastic volatility processes, i.e., $\rho$ and $y$, to obtain:
\begin{eqnarray}
	\Psi(x,\rho, y, s,t) := \mathbb{P}\Big(\inf_{\substack{s \leq r \leq t}} G_r \leq 0 | G_s=x, R_s = \rho, Y_s = y\Big).
\end{eqnarray}
Under this assumption, we can utilize the time homogeneity property to write the PD function more succinctly, whilst still being able to obtain the evolution of the PD, both in the case of a variable maturity and variable starting time. We describe this in the lemma below.
\begin{lemma}
	Under the generalized model (\ref{model-gen-def}), the PD function with variable maturity $\Tilde{\Psi}(x,\rho, y, s;t)$ and variable starting time $\Tilde{\Psi}(x,\rho, y,s;t)$, can be retrieved from the generalized function $\Psi(x,\rho, y, u)$, where $u = t-s$ represents the remaining time until maturity.
\end{lemma}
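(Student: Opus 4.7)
The plan is to exploit the fact that the generalized asset value triple $(G_u,R_u,Y_u)_{u\ge 0}$ defined by \eqref{model-gen-def} is a time-homogeneous Markov process on $\mathcal{D}\times\mathcal{R}\times\mathcal{V}$. First I would verify time homogeneity by checking that none of the coefficients of the coupled SDE depend explicitly on the time variable: the drift $k(R_u)(\theta(R_u)-G_u)$ and diffusion $\sigma(R_u)\sqrt{Y_u}$ depend on the state only through $(G_u,R_u,Y_u)$; the CIR volatility SDE is autonomous; the CTMC $R_u$ has time-independent generator; and the driving Brownian motions $B,W$ together with the Poisson random measure $N(du,dz)$ have stationary and independent increments by construction. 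Hence the semigroup associated with $(G,R,Y)$ depends only on the length of the time interval, which is the standard sufficient condition for time homogeneity (see, e.g., the discussion in Appendix \ref{ou-app} and \cite{applebaum2009levy}).

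Next, I would unpack the definition of $\Psi(x,\rho,y,s,t)$ and perform the change of variable $r=s+u'$ inside the infimum, which yields
\begin{equation*}
\Psi(x,\rho,y,s,t)=\mathbb{P}\!\left(\inf_{0\le u'\le t-s}G_{s+u'}\le 0 \,\Big|\, G_s=x,\,R_s=\rho,\,Y_s=y\right).
\end{equation*}
By the time homogeneity just established, the conditional law of the path $(G_{s+u'},R_{s+u'},Y_{s+u'})_{u'\in[0,t-s]}$ given the initial data $(x,\rho,y)$ at time $s$ equals the conditional law of $(G_{u'},R_{u'},Y_{u'})_{u'\in[0,t-s]}$ given the same initial data at time $0$. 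Since the event $\{\inf_{0\le u'\le t-s} G_{s+u'}\le 0\}$ is a measurable functional of that path, its probability is identical under the two laws, giving
\begin{equation*}
\Psi(x,\rho,y,s,t)=\mathbb{P}\!\left(\inf_{0\le u'\le t-s}G_{u'}\le 0 \,\Big|\, G_0=x,\,R_0=\rho,\,Y_0=y\right)=:\Psi(x,\rho,y,t-s).
\end{equation*}

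It remains to reconcile this with the two named PD functions. For the variable-maturity case, $s$ is fixed and $t$ is allowed to run over $[s,T]$, so $u=t-s$ ranges over $[0,T-s]$ and $\tilde{\Psi}(x,\rho,y,s;t)=\Psi(x,\rho,y,t-s)$. For the variable-starting-time case, $t$ is fixed and $s$ runs over $[0,t]$, so $u=t-s$ ranges over $[0,t]$ and again $\tilde{\Psi}(x,\rho,y,s;t)=\Psi(x,\rho,y,t-s)$. In both regimes the full two-parameter function is recovered from the single-parameter function $u\mapsto\Psi(x,\rho,y,u)$.

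I do not expect substantive obstacles here: the entire argument is a pathwise translation combined with time homogeneity. The only point requiring care is making the joint Markov property of the triple $(G,R,Y)$ explicit — one needs all three components in the conditioning, since neither $(G_u,R_u)$ nor $(G_u,Y_u)$ alone is Markov due to the coupling through $k(R_u),\sigma(R_u),\sqrt{Y_u}$. Once this is noted, the invariance of the infimum functional under the time shift and the uniqueness in law of the solution to \eqref{model-gen-def} close the argument.
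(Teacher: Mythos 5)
Your proposal is correct and takes essentially the same approach as the paper: the paper's proof is a one-line invocation of time homogeneity that shifts the infimum from $[s,t]$ to $[0,t-s]$, exactly the translation you carry out. Your extra steps — checking that the coefficients of the coupled SDE are autonomous and that the full triple $(G,R,Y)$ must be taken as the Markov state — merely make explicit what the paper leaves implicit.
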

\begin{proof}
	As mentioned, this observation follows immediately from the time homogeneity of the asset process, since: 
	\begin{flalign}
		\Psi(x,s,t) &= \mathbb{P}\Big(\inf_{\substack{s \leq r \leq t}} G_r \leq 0 | G_s = x, R_s = \rho, Y_s = s\Big) \notag \\ &=\mathbb{P}\Big(\inf_{\substack{0 \leq r \leq t-s}} G_r \leq 0| G_0=x, R_0 = \rho, Y_0=y \Big) = \Psi(x,\rho, y, 0,t-s).
	\end{flalign}
	We can now write $\Phi(x,\rho, y, u)$ with $u:=t-s$ representing the remaining time until maturity. 
\end{proof}
The above shows that, by fixing the appropriate time and with a simple change of variables, we can obtain the evolution of both PD processes. It easily follows that this approach can be generalized to any time homogeneous stochastic processes. Throughout the remainder of this paper, we will therefore use the following formulation of the PD and corresponding survival process: 
\begin{eqnarray}
	\Psi(x,\rho, y, u):= \mathbb{P}\Big(\inf_{\substack{r \leq u}} G_r \leq 0| G_0=x, R_0 = \rho, Y_0 = y\Big) \equiv \mathbb{P}\Big(\inf_{\substack{r \leq u}} G_r^{(x,\rho, y)} \leq 0 \Big), \label{gen-pd}\\
	\Phi(x,\rho, y,u) := 1- \Psi(x,\rho, y, u)=\mathbb{P}\Big(\inf_{\substack{ r \leq u}} G_r^{(x,\rho, y)} > 0\Big).\label{gen-surv}
\end{eqnarray}

\begin{remark}
	It is worth emphasizing that we use the general term "time until maturity" purposefully, as it captures both PD cases, under the homogeneity assumption. We will continue to use this term throughout this and subsequent papers, to instill the importance of this generalization. Furthermore, the homogeneity assumption is strong, yet fair. Particularly in the case of corporate and/or small business loans, it is natural to consider such asset processes, since credit risk modelling is often done across complete financial/business cycles (e.g., years or quarters), over which the evolution of the asset process (or related return processes) will have similar dynamics, regardless of the exact point in time. However, even without this assumption, the approaches developed in this paper can be used by fixing the either the starting or maturity time in order to obtain whichever case of the PD process the modeller requires. Hence, this framework is useful for PD modelling under any asset value process.  
\end{remark}

\par Using the generalized model and the corresponding PD (or survival) process (\ref{gen-pd}) (or (\ref{gen-surv})), we can prove certain mathematical properties that are required to ensure the existence of appropriate solutions for the Partial Integro-differential Equations (PIDEs) we will obtain for the PD functions. This creates a complete and robust mathematical framework which can be applied even without assuming or proving regularity, and is therefore applicable to a wide range of asset value models. At the same time, it is important to discuss the practical implications and applicability of the approaches described in this and subsequent sections. When considering real-life credit risk modelling tasks the state of the regime (e.g., the IFRS 9 Stage), and/or the value of any underlying macroeconomic factors may be observable and can therefore be inserted explicitly into the generalized model (\ref{model-gen-def}), thereby obtaining the regime switching or stochastic volatility model, with PD functions given by:
\begin{eqnarray}
	\Psi(x,\rho, u)= \mathbb{P}\Big(\inf_{\substack{r \leq u}} G_r^{(x,\rho)} > 0 \Big), \label{pd-rs}\\
	\Psi(x,y, u)= \mathbb{P}\Big(\inf_{\substack{r \leq u}} G_r^{(x,y)} > 0 \Big), \label{pd-sv}
\end{eqnarray}
respectively, and corresponding PD (or survival) functions $\Psi(x,\rho,u)$ and $\Psi(x,y,u)$ (or $\Phi(x,\rho,u)$ and $\Phi(x,y,u)$). These models can then be used for the tasks we consider under IFRS 9, and credit risk more generally. For this reason, in Section \ref{num-schemes} we develop numerical schemes for such simplified versions of the generalized model, starting with the one-dimensional L\'evy-driven OU asset and its corresponding PD function $\Psi(x,u) =\mathbb{P}\Big(\inf_{\substack{r \leq u}} G_r^{x} \leq 0 \Big)$ (and survival $\Phi(x,u)$), and continuing with the regime switching and stochastic volatility models, which will be used for related applications. 

\begin{remark}
	We refer the reader to Appendix \ref{gens-cou} for a brief overview of useful PDEs that the survival processes and transition densities of the non-jump versions of the above models satisfy (i.e., the versions which do not contain the L\'evy jump process). These, known as Kolmogorov backward equations, can be written in terms of the infinitesimal generators of the stochastic processes, and will be referred to in the subsequent sections, where we will obtain similar equations for the L\'evy-driven models. We also note that we adopt the notation for the generator operators used in these equations for the remainder of the paper, for convenience. 
\end{remark}

\section{Integral characterization and properties of the PD function}\label{integral-section}

As previously mentioned, our approach ultimately relies on deriving and solving (PIDEs) for the PD functions. To obtain these equations, we will first consider an integral equation (IE) characterization of the PD under the generalized model, from which we can obtain similar representations for the simplified models. We develop these IEs in this section, which will allow us to prove that the PD functions enjoy the properties required so as to be considered appropriate (either weak or strong) solutions to the PIDEs. 

\subsection{Required notation for the Integral Equations}
To ease the calculations presented in this section, we first introduce some notation, which will be used for the integral equations.

\begin{definition} 
	Consider $x \in \mathcal{D}$ and the vector of stochastic processes $(X_t^i)_{t \geq 0}$, with corresponding state space $\mathcal{D}_i$, for $i= 0,1,\dots, d$. For a fixed time $u \in [0,T]$, we define the family of operators $\big(\mathcal{T}_s, s \in [0,u]\big)$, acting on a function $\phi: \mathbb{R}^{d+1} \times [0,T] \rightarrow [0,1]$, by:
	\begin{flalign}
		\mathcal{T}_s \phi(x,x^1_0,x^2_0,\dots, x^d_0,u) = \mathbb{E}[\phi(x,X^1_s, X^2_s, \dots, X^d_s, u-s)| X^1_0 = x^1_0, X^2_0 = x^2_0, \dots, X^d_0 = x^d_0].
	\end{flalign}
	In our setting, $x$ corresponds to the initial position of the asset value process, i.e., $G_0=x$, and each $X^i_t$ represents a latent variable, such as the CTMC in the regime switching model or the volatility in the stochastic volatility model. To give analytic forms that will be used in these two models, respectively, we specify the following cases:
	\begin{itemize}	
		\item[$(i)$] Let $(X_t^i)_{t \geq 0}$, for $i= 1,2,\dots, d$ be discrete space and independent stochastic processes, with state space $\mathcal{X}^i$, such that $|\mathcal{X}^i| < \infty$, and with transition probabilities $\pi_i(k,x^i_0,t):=\mathbb{P}(X^i_t = k | X^i_0=x^i_0)$ for $k \in \mathcal{X}^i$. Then: 
		\begin{flalign}
			\mathcal{T}_s \phi(x,& x_0^1,x_0^2,\dots, x_0^d,u) \notag \\ =& \sum_{x^1_s \in \mathcal{X}^1} \cdots \sum_{x^d_s \in \mathcal{X}^d} \phi(x, x^1_s, x^2_s,\dots, x^d_s, u-s) \pi_1(x^1_s, x^1_0,s) \cdots \pi_n(x^d_s, x^d_0,s).
		\end{flalign}
		
		\item[$(ii)$] Let $(X_t^i)_{t \geq 0}$, for $i= 1,2,\dots, d$, be continuous and independent random variables, with supports $\mathcal{D}^i$ and transition densities $q_i(k,x^i_0,t):=\mathbb{P}(X^i_t = k | X^i_0=x^i_0)$ for $k \in \mathcal{D}^i$. Then:
		\begin{flalign}
			&\mathcal{T}_s \phi(x, x_0^1,x_0^2,\dots, x_0^d,u) \notag \\&= \int_{x^1_s \in \mathcal{D}^1} \cdots \int_{x^d_s \in \mathcal{D}^d} \phi(x, x^1_s, x^2_s,\dots, x^d_s, u-s) q_1(x^1_s, x^1_0,s)\cdots q_n(x^d_s, x^d_0,s)dx^1_s \cdots dx^d_s.
		\end{flalign}
	\end{itemize}	
	It is easy to see that in the case where $(X^i_t)_{t \geq 0}$ contains both discrete and continuous stochastic processes the analytical expression will contains both summation and integral terms. Hereinafter, we will refer to $\mathcal{T}_s$, for a fixed $s\in[0,u]$ as the \emph{$s-$operator}. In the calculations that follow we prefer to express the relevant integral equations in terms of the $s-$operator notation for convenience and simplicity.
\end{definition}

\subsection{Integral Equation formulations of the PD functions}
The first step in our methodology entails deriving the IEs for the PD functions. As is convention in most of the literature, we perform our calculations with the survival probability and it is easy to see that the same steps can be used for the corresponding PDs. These equations prove to be very useful, as they will allow us to prove continuity and existence results for the PD function. We prove the result under the generalized model, from which analogous results for regime switching and stochastic volatility cases are easily obtained.

\begin{proposition}\label{integral_eqn_gen}
	Consider the asset value process under the generalized model (\ref{model-gen-def}) with jump rate $\lambda = \nu(\mathbb{R})$ and jump size distribution $F(z)$. Furthermore, let $Q(x,\rho,y,u)$ and $p(x',x,s)$ be the survival probability and transition density, respectively, of the non-jump generalized OU process. Then, the survival probability $\Phi(x,\rho,y, u)$ satisfies the integral equation:
	\begin{flalign}\label{int-gen-prev}
		\Phi(x,\rho, y, u)  =  \int_{0}^{u} \lambda e^{-\lambda s} \int_{0}^{\infty} \int_{\mathbb{R}} \mathcal{T}_s \Phi(x'+z, \rho, y, u) p(x',x,s) dF(z)dx' ds+ e^{-\lambda u}Q(x,\rho, y, u).
	\end{flalign} 
\end{proposition}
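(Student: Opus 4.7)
The plan is to condition on the first jump time of the compound Poisson part of $G$. Since the L\'evy jump measure has total mass $\lambda = \nu(\mathbb{R}) < \infty$, the jump component is compound Poisson, so the first jump time $\tau_1$ is exponentially distributed with rate $\lambda$ and is independent of the driving Brownian motion and of the latent processes $(R_t, Y_t)$. This partitions the survival event $\{\inf_{r \le u} G_r > 0\}$ into the two complementary pieces $\{\tau_1 > u\}$ and $\{\tau_1 \le u\}$.

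On $\{\tau_1 > u\}$ no jump occurs on $[0,u]$, so $(G_r)_{r \le u}$ coincides in law with the non-jump generalized OU process starting at $x$ under the same latent processes. By independence of $\tau_1$ from the rest, this case contributes exactly $\mathbb{P}(\tau_1 > u)\, Q(x,\rho,y,u) = e^{-\lambda u} Q(x,\rho,y,u)$, which matches the last term of (\ref{int-gen-prev}).

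On $\{\tau_1 \le u\}$ I would decompose via the density $\lambda e^{-\lambda s} ds$ of $\tau_1$. Given a first jump at time $s$, the continuous-part dynamics on $[0,s)$ must survive (stay strictly positive) and end at some $x' > 0$ at time $s-$; the joint law of this event is captured by $p(x',x,s)\,dx'$, understood as the sub-Markov (killed) transition density of the non-jump generalized OU process with absorption at $0$. Independently, the jump size $z \sim F$ takes the position to $x' + z$ at time $s$. The strong Markov property at $\tau_1 = s$ then identifies the remaining survival probability on $[s,u]$, conditional on $(G_s, R_s, Y_s) = (x'+z, R_s, Y_s)$, with $\Phi(x'+z, R_s, Y_s, u-s)$. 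Averaging over the law of $(R_s, Y_s)$ given $(R_0, Y_0) = (\rho, y)$ and invoking the definition of the $s$-operator produces $\mathcal{T}_s \Phi(x'+z, \rho, y, u)$. Using the convention $\Phi(\cdot, \cdot, \cdot, \cdot) \equiv 0$ on $\{x \le 0\}$ removes the constraint $x' + z > 0$ from the $z$-integral, so it may be taken over all of $\mathbb{R}$. Assembling the two cases by the tower property and Fubini yields (\ref{int-gen-prev}).

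The main obstacle is the measurability-and-independence bookkeeping: one must carefully exploit that the Poisson jump epochs, the Brownian increments driving the continuous part of $G$, and the latent processes $(R_t, Y_t)$ are mutually independent in the required sense so that the joint event factorizes into the product $\lambda e^{-\lambda s}\, p(x',x,s)\, \mathcal{T}_s\Phi(x'+z,\rho,y,u)\, dF(z)\, dx'\, ds$. Parallel to this is the interpretation of $p(x',x,s)$ as the killed transition density, which is what makes $\int_0^{\infty} p(x',x,s)\,dx' = Q(x,\rho,y,s)$ the correct survival weight. Once these points are cleared, the first-jump decomposition is a routine application of the strong Markov property.
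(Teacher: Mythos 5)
Your proposal is correct and follows essentially the same route as the paper: both arguments rest on a first-jump decomposition of the survival event into $\{\tau > u\}$ and $\{\tau \le u\}$, the strong Markov property at the first jump, and conditioning on the pre-jump position and the jump size (the paper merely packages the restart step as an optional-stopping argument for the martingale $M_s = \mathbb{E}[\mathds{1}(\inf_{r\le u} G_r > 0)\,|\,\mathcal{F}_s]$ stopped at the first pre-default jump time $\tau^*$). Your explicit remark that $p(x',x,s)$ must be read as the transition density killed at $0$ --- so that $\int_0^\infty p(x',x,s)\,dx' = Q(x,\rho,y,s)$ --- is a point the paper leaves implicit, and it is in fact needed for the formula to exclude paths that cross zero and return to $(0,\infty)$ before the first jump.
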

\begin{proof}
	Consider the natural filtration $\mathcal{F}_t$ generated by the by the tri-variate process $(G_t,R_t, Y_t)$. Recall that we use the notation $G_t^{(x,\rho,y)}$ to represent the asset value process depending on the regime CTMC and volatility process $Y_t$, with $G_0= x$, $R_0=\rho$ and $Y_0=y$. By definition, $\Phi(x,\rho,y,u)= \mathbb{P}\Big(\inf_{\substack{r \leq u}} G_r^{(x,\rho,y)} > 0 \Big)$ and it will therefore be useful to define the martingale:
	\begin{eqnarray}
		M_s = \mathbb{E}[\mathds{1}\big(\inf_{\substack{r \leq u}} G_r^{(x,\rho,y)} >0  \big)|\mathcal{F}_s],
	\end{eqnarray}
	for $s < u$. Furthermore, let $\tau$ be the time of the first jump and we will also define the stopping time:
	\begin{eqnarray}
		\tau^* = \inf\{ t < u : \{\Delta G_t^{(x,\rho,y)}\neq 0 \} \cap \{ G_s^{(x,\rho,y)} >0, \,\, \forall s \in[0,t] \} \},
	\end{eqnarray}
	i.e., the time the process first jumps, having not yet defaulted. It is easy to check that $\tau^*$ is indeed an $\mathcal{F}_t$-stopping time.
	\par On the event $\{\tau>u\}$ no jumps occur within the examined time horizon and therefore $\Phi(x,\rho,y,u) = Q(x,\rho,y,u)$, where recall that $Q(x,\rho,y,u)$ is the survival probability of the non-jump generalized OU process. On $\{\tau \leq u\}$, we have:
	\begin{eqnarray}
		\Phi(x,\rho,y,u) = \mathbb{E}[\mathds{1}\big(\inf_{\substack{r \leq u}} G_r^{(x,\rho,y)} > 0 \big)|\mathcal{F}_0] = M_0 = \mathbb{E}[M_{\tau^*}],
	\end{eqnarray} 
	where the last step follows from the Optional Stopping Theorem. Notice that $\mathbb{P}(\tau^* = \infty) >0$ and therefore the above is to be understood in an almost sure sense. Now, by the strong Markov property and the time homogeneity of the OU process, it follows that:
	\begin{flalign}\label{mark}
		\Phi(x,\rho,y,u) =& \mathbb{E}[\Phi(G^{(x,\rho,y)}_{\tau^*}, R_{\tau^*}, Y_{\tau^*} ,u-\tau^*)]\notag\\ = & \mathbb{E}[\Phi(G_{\tau^*}, R_{\tau^*}, Y_{\tau^*} ,u-\tau^*)| G_0 =x, R_0 = \rho, Y_0 = y].
	\end{flalign}
	Concerning the two cases for the time of the first jump, we can write:
	\begin{flalign} \label{eq}
		\Phi(x,\rho,y,u) =&  \mathbb{E}[\mathds{1}\big(\inf_{\substack{r \leq u}} G_r^{(x,\rho,y)} > 0 \big)\mathds{1}(\tau \leq u)] + \mathds{E}[\mathds{1}\big(\inf_{\substack{r \leq u}} G_r^{(x,\rho,y)} > 0 \big) \mathds{1}(\tau > u)] \notag \\ =& \mathds{E}[\mathds{1}\big(\inf_{\substack{r \leq u}} G_r^{(x,\rho,y)} > 0 \big) |\tau \leq u]\mathds{P}(\tau \leq u ) + e^{-\lambda u}Q(x,\rho,y,u).
	\end{flalign}
	Then, using (\ref{mark}), the law of total probability and the definition of the $s-$operator the first term can be written as: 
	\begin{flalign}\notag
		&\int_{0}^{u} \lambda e^{-\lambda s} \mathds{E}[\Phi(G_s, R_s, Y_s, u - s)| G_0 = x, R_0 = \rho, Y_0 =y] ds \notag\\ =& \int_{0}^{u} \lambda e^{-\lambda s} \int_{0}^{\infty} \int_{\mathbb{R}} \mathbb{E}[\Phi(x'+z,R_s,Y_s,u-s)|R_0 = \rho, Y_0 =y] p(x',x,s)dF(z)dx'ds \notag \\ =& \int_{0}^{u} \lambda e^{-\lambda s} \int_{0}^{\infty} \int_{\mathbb{R}} \mathcal{T}_s \Phi(x'+z, \rho, y, u) p(x',x,s)dF(z)dx'ds,  
	\end{flalign}  
	where we have conditioned on the pre-jump asset value and the subsequent jump size. Substituting back into (\ref{eq}), we get the required result.
\end{proof}

It is now straightforward to obtain the analogous integral equations for the PD functions under the regime switching and stochastic volatility models. We present both these results below, omitting the proofs for brevity, as they follow the proof of Proposition \ref{integral_eqn_gen}, almost identically.

\begin{corollary}\label{integral_eqn_2}\hfill
	\begin{itemize}
		\item[(i)] Consider the asset value process under the regime switching model (\ref{model-rs-def}) with jump rate and jump size distribution as in Proposition \ref{integral_eqn_gen}. Let $Q(x,\rho,u)$ and $p(x',x,s)$ be the survival probability and transition density, respectively, of the continuous regime switching OU process. Then, the survival probability $\Phi(x,\rho,u)$ satisfies the integral equation:
		\begin{flalign}\label{int-rs}
			\Phi(x,\rho, u)  = \int_{0}^{u} \lambda e^{-\lambda s}  \int_{0}^{\infty} \int_{\mathbb{R}} \mathcal{T}_s\Phi(x'+z, \rho , u) p(x',x,s) dF(z)dx'ds + e^{-\lambda u}Q(x,\rho, u).
		\end{flalign}
		
		\item[(ii)] Consider the asset value process under the stochastic volatility model (\ref{model-sv-def}) with jump rate and jump size distribution as in Proposition \ref{integral_eqn_gen}. Let $Q(x,y,u)$ and $p(x',x,s)$ be the survival probability and transition density, respectively, of the continuous stochastic volatility OU process. Then, the survival probability $\Phi(x,y,u)$ satisfies the integral equation:
		\begin{flalign}\label{int-sv}
			\Phi(x,y, u)=  \int_{0}^{u} \lambda e^{-\lambda s}  \int_{0}^{\infty} \int_{\mathbb{R}} \mathcal{T}_s\Phi(x'+z, y , u) p(x',x,s) dF(z)dx'ds + e^{-\lambda u}Q(x,y, u).
		\end{flalign}
	\end{itemize}	
\end{corollary}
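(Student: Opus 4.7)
The plan is to mirror the argument of Proposition \ref{integral_eqn_gen} in each simplified setting, exploiting the fact that the regime switching model (\ref{model-rs-def}) and the stochastic volatility model (\ref{model-sv-def}) are obtained from the generalized model (\ref{model-gen-def}) by suppressing one latent variable. In both cases the bivariate process $(G_t, R_t)$ or $(G_t, Y_t)$ remains strong Markov and time homogeneous, so the structural ingredients of the proof—the natural filtration, the martingale built from the survival indicator, the first-jump time $\tau$ with rate $\lambda$, and the stopping time $\tau^*$ describing the first jump before default—all carry over without modification.

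For part (i), I would work on the filtration $\mathcal{F}_t$ generated by $(G_t,R_t)$, introduce $M_s = \mathbb{E}[\mathds{1}(\inf_{r\leq u} G_r^{(x,\rho)} > 0)\mid \mathcal{F}_s]$ and $\tau^* = \inf\{t < u : \Delta G_t^{(x,\rho)} \neq 0,\ G_s^{(x,\rho)} > 0 \ \forall s\in[0,t]\}$, and split $\Phi(x,\rho,u)$ on $\{\tau>u\}$ versus $\{\tau\leq u\}$. The former event gives the contribution $e^{-\lambda u} Q(x,\rho,u)$ directly from the definition of $Q$. On the latter, the Optional Stopping Theorem together with the strong Markov property at $\tau^*$ yields $\mathbb{E}[\Phi(G_{\tau^*}, R_{\tau^*}, u-\tau^*)]$, which I would then unravel by conditioning successively on $\tau = s$ (contributing $\lambda e^{-\lambda s}\, ds$), on the pre-jump position $G_{s-} = x'$ (with density $p(x',x,s)$), on the independent jump size $z\sim F$, and on the regime $R_s$ given $R_0 = \rho$. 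That last conditioning is precisely the action of the $s$-operator $\mathcal{T}_s$ on the $\rho$-argument of $\Phi$, producing (\ref{int-rs}).

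Part (ii) is proved by exactly the same template with $(G_t, Y_t)$ in place of $(G_t, R_t)$; the only substantive change is that $\mathcal{T}_s$ now acts by integration against the CIR transition density of $Y_s$ rather than by summation against the CTMC transition probabilities, in line with the second case of the $s$-operator definition. The remaining ingredients—Optional Stopping, strong Markov, time homogeneity, and decomposition on the first jump—are unchanged. The only point that requires verification, and which is where one might expect friction, is that the conditional expectation at the jump instant factors cleanly so that $\mathcal{T}_s$ genuinely marginalises the latent variable alone; but this is automatic because the CTMC and the CIR volatility are each driven by sources of randomness independent of the compound Poisson jump component, so the latent trajectory up to time $s$ is unaffected by the conditioning on $\tau = s$ and on the jump size. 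Consequently no genuinely new analytic obstacle arises, and the two statements follow from the same reasoning as Proposition \ref{integral_eqn_gen}.
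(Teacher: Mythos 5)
Your proposal is correct and matches the paper's intent exactly: the paper omits these proofs, stating only that they follow the proof of Proposition \ref{integral_eqn_gen} almost identically, and your argument is precisely that specialization — same filtration, martingale, stopping time $\tau^*$, Optional Stopping, strong Markov decomposition on the first jump, and the $\mathcal{T}_s$-marginalization of the single remaining latent variable. Nothing further is needed.
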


\begin{remark}
	The proof above is based on the approach considered in \cite{mishura2016ruin} and constitutes an extension for processes where the diffusion term in non-zero. This results in having to consider appropriate stopping times, as well as the transition density of the OU process in the representation above.  Furthermore, it is worth mentioning the simple case of the non-jump OU process, the survival probability $\Phi(x,u)$ satisfies the integral equation:
	\begin{flalign}\label{simple_eqn}
		\Phi(x,u)  =  \int_{0}^{u} \lambda e^{-\lambda s}  \int_{0}^{\infty} \int_{\mathbb{R}} \Phi(x'+z,u-s) p(x',x,s) dF(z)dx'ds + e^{-\lambda u}Q(x,u),
	\end{flalign}
	which can be easily derived from the regime switching model by considering a single regime. This can be directly compared to the corresponding integral equation in \cite{mishura2016ruin}, where similar integral equations are used to study the ruin probability function for an analogous asset value process. On the other hand, in this work, the integral equations can lead to the existence and continuity results that suffice to obtain approximations of the PD functions, as we will see below. 
\end{remark}

\begin{remark}
	In the following sections we will often interchange between using the $s-$operator formulation and more analytical expressions in terms of an appropriate expected value. In particular, applying the law of total probability, the integral equation for the stochastic volatility model can be written in more detail as:
	\begin{flalign}\label{int-sv-exp}
		\Phi(x,y, u)  =  \int_{0}^{u} \lambda e^{-\lambda s}  \int_{0}^{\infty} \int_{0}^{\infty} \int_{\mathbb{R}} \Phi(x'+z, \nu , u-s) q(\nu , y,s ) p(x',x,s) dF(z)dx'd\nu ds + e^{-\lambda u}Q(x,y, u),
	\end{flalign}
	where $q(\nu , y, s)$ is the transition probability of the CIR volatility process and we have used that, by definition:
	$$ \mathcal{T}_s \Phi(x'+z, y, u) = \int_{0}^{\infty}  \Phi(x'+z, \nu , u-s) q(\nu , y,s ) d\nu.$$
	Similarly, for the generalized model, we can write:
	\begin{flalign}\label{int-gen}
		\Phi(x,&\rho, y, u) \notag\\ = &  \int_{0}^{u} \lambda e^{-\lambda s}  \int_{0}^{\infty} \int_{0}^{\infty} \int_{\mathbb{R}} \mathbb{E}[\Phi(x'+z, R_s^{\rho},\nu , u-s)] q(\nu , y, s) p(x',x,s) dF(z)dx'd\nu ds  + e^{-\lambda u}Q(x,\rho, y, u).
	\end{flalign}
	These analytical versions will be useful when using the integral equations to derive PIDEs for the PD processes, whereas the more concise form will be used when proving the required continuity results that follow.
\end{remark}

\begin{remark}
	Finally, it is also worth noting that, in obtaining the integral equations in this section, we have included the initial values of the regime and volatility processes as additional variables on which the PD function depends. An alternative approach is writing these as a system of integral equations. For example, considering the regime switching model with states in accordance to the IFRS 9 framework (i.e., Stage 1, 2 and 3) we can rewrite $\Phi_i(x,u) := \Phi(x, \rho^i,u)$ (and similarly $Q_i(x,u) := Q(x, \rho^i,u)$),  which results in the system of equations: 
	\begin{flalign} \notag
		\Phi_i(x, u) =  \int_{0}^{u} \lambda e^{-\lambda s} \int_{0}^{\infty} \int_{\mathbb{R}} \sum_{j=1}^{3} \Phi_j(x, u-s) \pi(R_j, \rho^i, s) p(x',x,s) dF(z)dx'ds + e^{-\lambda u}Q_1(x, u),
	\end{flalign}
	for $i= 1,2,3$, and where we have used that, in this case, $\mathcal{T}_s \Phi(x, \rho, u-s) = \sum_{j=1}^{3} \Phi(x,R_j, u-s) \pi(R_j, \rho, s)$, by definition. In our approach, we prefer to account for these externals parameters explicitly, and generalize this representation in the cases of the additional variables.	
\end{remark}

\subsection{Properties and existence of solutions}
\par Using the IEs derived above, we can now prove that the PD functions enjoy certain mathematical properties required to obtain the viscosity solutions to the corresponding PIDEs.
\par To consider such solutions, we require $\Psi$ (equivalently $\Phi$) to be a continuous function of $(x,u)$. To this end, we first note that $\Psi$ ($\Phi$) is a monotonically decreasing (increasing) function with respect to $x$ and a monotonically increasing (decreasing) function with respect to maturity $u$. Moreover, as it is bounded, we can conclude that it is an integrable function. We can take advantage of the integral equation forms to prove that solutions for the survival probabilities do exist, and moreover, that they are continuous with respect to $x$ and $u$. We prove this result for the generalized model, from which the other cases follow easily. 
\begin{remark}
	As mentioned, in what follows we will focus on the survival probability as a function of $(x,u)$, as the additional complexity we are interested in arises from the jump component of the OU process. It is straightforward to reproduce the proofs for the stochastic volatility variable, as well. Finally, the existence of the regime switching process simply creates a coupling that does not affect the results in this section.
\end{remark}

\begin{lemma}\label{cont-res}
	The probability of survival function under the generalized model $\Phi(x,\rho, y, u)$, as defined in (\ref{gen-surv}) is uniformly continuous as a function of $x$ and $u$.
\end{lemma}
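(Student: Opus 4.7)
The plan is to exploit the integral equation of Proposition~\ref{integral_eqn_gen} in a Banach fixed-point argument on the space $X$ of bounded uniformly continuous functions $\phi:\mathcal{D}\times[0,T]\to[0,1]$ (holding $(\rho,y)$ fixed, as the remark before the lemma suggests) endowed with the sup norm. Define the operator
$$K\phi(x,u) := \int_0^u \lambda e^{-\lambda s}\int_0^\infty\int_{\mathbb{R}} \mathcal{T}_s\phi(x'+z,\rho,y,u)\,p(x',x,s)\,dF(z)\,dx'\,ds + e^{-\lambda u}Q(x,\rho,y,u).$$
By the integral equation $\Phi$ is a fixed point of $K$, and since $\mathcal{T}_s$ is a Markov contraction and $p(\cdot,x,s)\,dx'$, $dF$ are probability measures,
$$\|K\phi_1 - K\phi_2\|_\infty \le \|\phi_1 - \phi_2\|_\infty\int_0^T\lambda e^{-\lambda s}\,ds = (1-e^{-\lambda T})\|\phi_1-\phi_2\|_\infty,$$
a strict contraction on $[0,T]$. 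Once I establish $K(X)\subset X$, the Banach fixed-point theorem delivers a unique fixed point in $X$ which must coincide with $\Phi$ by uniqueness.

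The crux is showing $K:X\to X$. For continuity of $K\phi$ in $x$, using $|\mathcal{T}_s\phi|\le\|\phi\|_\infty$ and $\int dF(z)=1$,
$$|K\phi(x_1,u)-K\phi(x_2,u)| \le \|\phi\|_\infty\int_0^u\lambda e^{-\lambda s}\|p(\cdot,x_1,s)-p(\cdot,x_2,s)\|_{L^1(dx')}\,ds + e^{-\lambda u}|Q(x_1,\rho,y,u)-Q(x_2,\rho,y,u)|.$$
The $Q$-term is uniformly continuous in $(x,u)$ by the standard regularity of first-passage survival probabilities for the non-jump generalized OU process, using the uniform continuity of its transition density noted in Remark~\ref{gen-unif-cont}. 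For the jump term, I would split $[0,u]=[0,\delta]\cup[\delta,u]$: on $[0,\delta]$ the $L^1$-difference is bounded trivially by $2$, yielding a contribution at most $2(1-e^{-\lambda\delta})$; on $[\delta,u]$ the uniform continuity of $p(x',x,s)$ on the compact set $\mathcal{D}\times\mathcal{D}\times[\delta,T]$ provides a uniform $L^1$-estimate via dominated convergence. Choosing $\delta$ small and then $|x_1-x_2|$ small controls both pieces. Continuity in $u$ is analogous: for $u_2<u_1$, the shifted upper limit contributes at most $e^{-\lambda u_2}-e^{-\lambda u_1}$, while the $u$-dependence inside $\mathcal{T}_s\phi(\cdot,u)$ is controlled by the uniform continuity of $\phi\in X$ through the linearity of $\mathcal{T}_s$.

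The main obstacle I anticipate is the $s\downarrow 0$ behavior of the transition density: as $s\to 0$, $p(\cdot,x,s)$ collapses to a delta mass at $x$, so its $L^1$-continuity in the starting point fails uniformly in $s$. The splitting trick above is precisely the tool that circumvents this, exploiting that $\int_0^\delta \lambda e^{-\lambda s}\,ds \to 0$ as $\delta\to 0$, which allows the uncontrollable small-$s$ window to be absorbed. With this in place, all remaining estimates are routine, and the Banach fixed-point theorem yields $\Phi\in X$. The same argument extends verbatim to the latent variables $(\rho,y)$, giving uniform continuity in the full set of arguments as the lemma is stated.
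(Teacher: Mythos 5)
Your proof is correct, but it takes a genuinely different route from the paper's. The paper works directly on $\Phi$: it splits $|\Phi(x,\rho,y,u)-\Phi(x_0,\rho,y,u_0)|$ into an $x$-increment and a $u$-increment via the triangle inequality and estimates each from the integral equation, invoking uniform continuity of the transition density $p(x',\cdot,s)$ (uniformly in $s$) for the spatial part, and the bound $\|\mathcal{T}_s\|\le 1$ together with the shrinking integration window $[u_0,u]$ for the temporal part. You instead recast the integral equation as a fixed-point problem for the operator $K$ on the space of bounded uniformly continuous functions, show $K$ is a $(1-e^{-\lambda T})$-contraction mapping that space into itself, and identify $\Phi$ with the unique fixed point (legitimate, since the contraction estimate already gives uniqueness among all bounded solutions). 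Your route buys two things. First, it avoids a circularity lurking in the paper's temporal estimate, where controlling $|\mathcal{T}_s\Phi(\cdot,u)-\mathcal{T}_s\Phi(\cdot,u_0)|$ implicitly presupposes the very continuity of $\Phi$ in its last argument that is being proven; in your setup, continuity in $u$ is a hypothesis on the test function $\phi\in X$, not an a priori property of $\Phi$. Second, your $[0,\delta]\cup[\delta,u]$ splitting honestly confronts the degeneration of $p(\cdot,x,s)$ to a point mass as $s\downarrow 0$ — the paper's blanket appeal to uniform continuity of the transition density quietly ignores that $|p(x',x,s)-p(x',x_0,s)|$ cannot be made small uniformly in $s$ near $0$ — and your $L^1$ formulation also repairs the paper's questionable bound $\int\!\!\int \mathcal{T}_s\Phi\,dF\,dx'\le 1$. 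As a bonus, the Banach argument delivers existence, uniqueness and continuity in one stroke, subsuming the separate Schauder argument of Proposition \ref{schauder}. The one point to tighten: since $\mathcal{T}_s$ averages $\phi$ over the law of the latent variables at time $s$, the operator $K$ applied to $\phi(\cdot,\rho,y,\cdot)$ depends on the values of $\phi$ at other latent states, so the fixed-point space should consist of functions of all arguments $(x,\rho,y,u)$ rather than of $(x,u)$ with $(\rho,y)$ frozen; you flag the extension at the end and the paper makes the same simplification, so this is cosmetic rather than a gap.
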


\begin{proof}
	To prove this result we will use the integral formulation $(\ref{int-gen-prev})$. Consider a fixed $\epsilon > 0 $ and $||(x,u) - (x_0, u_0)|| < \delta$, for some $\delta > 0$ that will be specified. Then:
	$$
	\begin{aligned}
		\left|\Phi(x, \rho, y, u)-\Phi(x_0, \rho, y, u_{0})\right| 
		 \leq \left|\Phi(x, \rho, y, u)-\Phi(x_0, \rho, y, u)\right| + \left|\Phi(x_0, \rho, y, u)-\Phi(x_0, \rho, y, u_0)\right|
	\end{aligned} $$
	We will handle each of the terms above separately. We have: 
	\begin{itemize}
		\item[$x:$] Let $|x-x_0| < \delta_1$. Since the transition density of the non-jump OU process is uniformly continuous, we can select $\delta_1$ such that $\left|p(x', x, s)-p\left(x', x_{0}, s\right)\right|< \epsilon/2$. Then:
		$$\begin{aligned}
			&|\Phi(x,\rho, y, u) - \Phi(x_0,\rho, y, u)| \notag\\\leq&  \int_{0}^{u}\lambda e^{-\lambda s} \int_{ \mathbb{R}} \int_{ \mathbb{R}} \mathcal{T}_s \Phi(x'+z, \rho, y, u) |p(x',x,s)- p(x',x_0,s)|dF(z)dx'ds + e^{-\lambda u}|Q(x,u) - Q(x_0,u)|  \notag \\
			\leq& \int_{0}^{u} \lambda e^{-\lambda s} \int_{ \mathbb{R}} \int_{ \mathbb{R}} \mathcal{T}_s \Phi(x'+z, \rho, y, u) \frac{\epsilon}{2} dF(z)dx'ds + e^{-\lambda u} \frac{\epsilon}{2}
		\leq \int_{0}^{u} \lambda e^{-\lambda s} \frac{\epsilon}{2} ds + e^{-\lambda u}\frac{\epsilon}{2} = \frac{\epsilon}{2}.
		\end{aligned} $$	
		\item[$u:$] Similarly, let $|u-u_0| < \delta_2$. Then:
		$$
		\begin{aligned}
			&\left|\Phi(x_0, \rho, y, u)-\Phi\left(x_0, \rho, y, u_{0}\right)\right| \leq \\
			\leq & \int_{u_0}^{u} \lambda e^{-\lambda s} \int_{\mathbb{R}} \int_{\mathbb{R}}\left|\mathcal{T}_{s} \Phi\left(x^{\prime}+z, \rho, y, u\right)-\mathcal{T}_{s} \Phi\left(x^{\prime}+z, \rho, y, u_{0}\right)\right| 
			p(x', x, s) d F(z) d x^{\prime} ds.
		\end{aligned} $$
		By definition, we have $\left\|\mathcal{T}_{s}\right\| \leq 1$ and therefore:
		$\left|\Phi(x, \rho, y, u)-\Phi\left(x, \rho,  y, u_{0}\right)\right| \leq \int^{u}_{u_0} \lambda e^{-\lambda s} d s \leq \left(u-u_{0}\right) \max_{\substack{ s \in [0,u]}} \lambda e^{-\lambda s}=\lambda\left(u-u_{0}\right)<\lambda \delta_2$.  
	\end{itemize}
	
	By selecting $\delta_2 = \frac{\epsilon}{2\lambda}$ and $\delta = \delta_1 \wedge \delta_2$ we therefore obtain:
	$$ |\Phi(x, \rho, y, u)-\Phi(x_0, \rho, y, u_{0})| < \epsilon,$$ as required.

\end{proof}	 

We can now consider an appropriate fixed point result which will allow us to prove the existence of a solution to the IE for the survival probability. For this result we will refer to the Arzela-Ascoli and Schauder's fixed point theorems, as stated in \ref{arzela-ascoli} and \ref{schauder-state} of the Appendix, respectively.

\begin{proposition}\label{schauder}
	The integral equation $(\ref{int-gen-prev})$ admits a continuous solution.
\end{proposition}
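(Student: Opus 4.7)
The plan is to recast the integral equation $(\ref{int-gen-prev})$ as a fixed-point problem and apply Schauder's theorem on a suitable closed, convex, bounded subset of a Banach space of continuous functions. Concretely, I would let $X = C(\mathcal{D} \times [0,T], [0,1])$ equipped with the sup-norm, and let $K \subset X$ be the convex subset consisting of those $\Phi$ which are, for each fixed $(\rho,y)$, monotonically increasing in $x$, monotonically decreasing in $u$, and satisfy the uniform modulus of continuity that was produced in Lemma \ref{cont-res} (using the given uniform continuity of $p(x',x,s)$ from Remark \ref{gen-unif-cont} and the bound $\|\mathcal{T}_s\|\le 1$). Then define the nonlinear operator $\mathcal{A}:X\to X$ by
\begin{equation*}
\mathcal{A}\Phi(x,\rho,y,u) = \int_{0}^{u}\!\lambda e^{-\lambda s}\!\int_{0}^{\infty}\!\int_{\mathbb{R}} \mathcal{T}_s \Phi(x'+z,\rho,y,u)\, p(x',x,s)\, dF(z)\,dx'\,ds + e^{-\lambda u} Q(x,\rho,y,u),
\end{equation*}
so that solutions of $(\ref{int-gen-prev})$ are exactly the fixed points of $\mathcal{A}$.

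The first step is to verify that $\mathcal{A}(K)\subset K$. Boundedness in $[0,1]$ is immediate since $\mathcal{T}_s\Phi$ and $Q$ both take values in $[0,1]$ and $\int_0^u \lambda e^{-\lambda s} ds + e^{-\lambda u}=1$. The monotonicity properties in $x$ and $u$ are inherited from the corresponding properties of $Q$ and $p$, noting that $\mathcal{T}_s$ preserves monotonicity. Convexity of $K$ is clear, and closedness follows since uniform limits preserve all the defining properties.

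The second step is continuity of $\mathcal{A}$, which is a direct dominated-convergence argument: if $\Phi_n\to\Phi$ uniformly, then $\mathcal{T}_s \Phi_n \to \mathcal{T}_s \Phi$ uniformly (since $\mathcal{T}_s$ is a contraction), and the integrand is dominated by $\lambda e^{-\lambda s} p(x',x,s)$ which is integrable. The third and main step is to show $\mathcal{A}(K)$ is relatively compact in $X$ via Arzela--Ascoli. Uniform boundedness is immediate. Equi-continuity is exactly the content of the argument in Lemma \ref{cont-res}, now applied to $\mathcal{A}\Phi$ rather than to $\Phi$ itself: the estimate in $x$ uses only the uniform continuity of the kernel $p(x',x,s)$ together with $\|\mathcal{T}_s\|\le 1$, and the estimate in $u$ uses the bound $\lambda (u-u_0)$. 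Crucially, the modulus of continuity obtained does not depend on $\Phi \in K$, only on the prescribed $\epsilon$ and on the data $\lambda, p, Q$; this is what produces equi-continuity of the family $\mathcal{A}(K)$.

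Once these three points are established, Schauder's fixed point theorem yields a continuous $\Phi^\star \in K$ with $\mathcal{A}\Phi^\star = \Phi^\star$, which is the required continuous solution to $(\ref{int-gen-prev})$. The main obstacle I expect is the verification that $K$ can be chosen so that $\mathcal{A}(K)\subset K$ with a uniform, $\Phi$-independent modulus of continuity; this hinges on the fact that the modulus estimates of Lemma \ref{cont-res} make no use of continuity of $\Phi$ itself beyond the trivial bound $\Phi\le 1$, so the same modulus that defines $K$ is automatically inherited by $\mathcal{A}\Phi$, closing the argument.
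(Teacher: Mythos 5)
Your proposal is correct and follows essentially the same route as the paper: both recast $(\ref{int-gen-prev})$ as a fixed-point problem for an integral operator, establish uniform boundedness and equicontinuity via the uniform continuity of $p(x',x,s)$ and the bound $\|\mathcal{T}_s\|\le 1$ (the estimates of Lemma \ref{cont-res}), invoke Arzel\`a--Ascoli for compactness, and conclude with Schauder's fixed point theorem. Your explicit construction of the invariant convex set $K$ with a $\Phi$-independent modulus of continuity is a slightly more careful packaging of what the paper does by factoring its operator as $\mathcal{P}=\mathcal{J}\mathcal{A}$, but it is not a genuinely different argument (and the added monotonicity constraints on $K$ are harmless but unnecessary).
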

\begin{proof}
	Consider the metric space of all continuous functions $\Phi(\cdot,\rho,y, \cdot)$ on $\mathcal{D} \times [0,T]$, denoted by $X := C(\mathcal{D} \times [0,T])$. Furthermore, define the functional operator $\mathcal{A}:X \rightarrow X$ by:
	\begin{eqnarray}
		\mathcal{A} \Phi(x,\rho,y,u) = \int_{ \mathbb{R}} \int_{ \mathbb{R}} \mathcal{T}_s \Phi(x'+z, \rho, y, u) p(x',x,s)dF(z)dx'.
	\end{eqnarray}
	We begin by proving that the operator $\mathcal{A}$ is: $(i)$ uniformly bounded, $(ii)$ equi-continuous and $(iii)$ compact. We separate these in the steps below:
	
	\begin{itemize}	
		\item[$(i)$] Uniform boundedness follows easily from the definition of $\Phi$ and the operator $\mathcal{T}_s$. Specifically:
		$$ |\mathcal{A}\Phi(x,\rho, y, u)| \leq 1.$$
		\item[$(ii)$] For equi-continuity we must show that, given $\epsilon > 0$, there exists $\delta >0$ such that if $||(x,u)-(x_0,u_0)|| < \delta$ then $|\mathcal{A}\Phi(x,\rho, y, u) - \mathcal{A}\Phi(x_0,\rho, y, u_0)| < \epsilon$, for all $\Phi \in X$. The proof follows very closely Lemma \ref{cont-res}, but we include the steps for completeness.
		
		To this end, we calculate:
		$$\begin{aligned}
			&|\mathcal{A}\Phi(x,\rho, y, u) - \mathcal{A}\Phi(x_0,\rho, y, u_0)| \notag \\& \leq  \int_{ \mathbb{R}} \int_{ \mathbb{R}} |\mathcal{T}_s \Phi(x'+z, \rho, y,u)p(x',x,s) - \mathcal{T}_s \Phi(x'+z, \rho, y,u_0)p(x',x_0,s)| dF(z)dx'.
		\end{aligned} $$
		Consider the integrand. We can write:
		\begin{flalign}
			 \mathcal{T}_s \Phi(x'+z, \rho, y,u) \Big(p(x',x,s) - p(x',x_0,s)\Big) + p(x',x,s) \Big( \mathcal{T}_s \Phi(x'+z, \rho, y,u) - \mathcal{T}_s \Phi(x'+z, \rho, y,u_0) \Big), \notag
		\end{flalign}
		and therefore:
		$$\begin{aligned}
			&|\mathcal{A}\Phi(x,\rho, y, u) - \mathcal{A}\Phi(x_0,\rho, y, u_0)| \leq 
			\int_{ \mathbb{R}}\int_{ \mathbb{R}} \mathcal{T}_s \Phi(x'+z, \rho, y, u) |p(x',x,s) - p(x',x_0,s)| \notag \\ &+ p(x',x_0,s) |\mathcal{T}_s \Phi(x'+z, \rho, y,u) - \mathcal{T}_s \Phi(x'+z, \rho, y,u_0)| F(z)dx'.
		\end{aligned}$$
		We know that $p(x',x,s)$ is uniformly continuous in $x$ and $\Phi$ is uniformly continuous in $u$. Therefore, given $\epsilon > 0$ we can select $\delta_1$ such that
		$$ |p(x',x,s) - p(x',x_0,s)| < \epsilon/2,$$ for all $x,x_0$ such that $|x-x_0| < \delta_1$. Furthermore, from Lemma \ref{cont-res} recall that, for $u,u_0$ such that $|u-u_0| < \delta_2$ we have: 
		$$|\Phi(x'+z, \rho, y, u) - \Phi(x'+z, \rho, y, u_0)| < \lambda \delta_2,$$ for all $\Phi \in X$. Select $\delta_2 = \frac{\epsilon}{2\lambda}$ and let $\delta = \delta_1 \wedge \delta_2$. Hence:
		\begin{flalign}
			|\mathcal{A}\Phi(x,\rho, y, u) &- \mathcal{A}\Phi(x_0,\rho, y, u_0)| \notag \\ \leq &\int_{ \mathbb{R}} \int_{ \mathbb{R}} \mathcal{T}_s \Phi(x'+z, \rho, y, u) \frac{\epsilon}{2} + p(x',x_0,s)\lambda \delta dF(z)dx' \leq \epsilon.
		\end{flalign}
		Notice that the choice of $\delta$ does not depend on $\Phi$, only on the given $\epsilon$. Hence, $|\mathcal{A}\Phi(x,\rho, y, u) - \mathcal{A}\Phi(x_0,\rho, y, u_0)| < \epsilon$, for all $\Phi \in X$ whenever $||(x,u) - (x_0,u_0)|| < \delta$, i.e., $\mathcal{A}$ is equi-continuous.
		\item[$(iii)$] We have that $\mathcal{A}\Phi$ is uniformly bounded and equi-continuous and therefore compact by the Arzela-Ascoli theorem.
	\end{itemize}
	With the above, we now turn to the main result. Consider now the Banach space $\mathcal{C} = \{\phi \in X, ||\phi|| \leq 1 \}$, where we have used the standard norm $$||\phi|| := \sup_{\substack{x,u}} |\phi|.$$
	With the operator $\mathcal{A}$ defined as above, we have: 
	$$ \Phi(x,\rho, y, u) = \int_{0}^{u} \lambda e^{-\lambda s} \mathcal{A}\Phi(x,\rho, y, u) ds + g(x,u),$$ with $g(x,u) := e^{-\lambda u} Q(x,\rho, y, u)$ and it is natural to define the operator $\mathcal{P}\Phi$, such that:
	\begin{eqnarray}
		\mathcal{P}\Phi(x, \rho, y, u) =\int_{0}^{u} \lambda e^{-\lambda s} \mathcal{A}\Phi(x,\rho,y, u)ds + g(x,u).
	\end{eqnarray}		
	Recall that $\mathcal{A}\Phi$ is bounded by 1 and, furthermore, by the definition of the operator, we also have that:
	$$ ||\mathcal{A}\Phi || \leq ||\Phi||.$$ Hence:
	\begin{flalign}
		||\mathcal{P}\Phi|| &\leq ||\int_{0}^{u} \lambda e^{-\lambda s} \mathcal{A}\Phi(x,\rho, y, u) ds|| + ||g||
		\leq \sup_{\substack{x,u}}  \int_{0}^{u} \lambda e^{-\lambda s} |\mathcal{A}\Phi(x,\rho, y, u)| ds + ||g||\notag \\
		& \leq \sup_{\substack{x,u}}  (1-e^{-\lambda u})|\Phi|+ \sup_{\substack{x,u}} e^{-\lambda u}|Q(x,\rho, y, u)|
		\leq \max(||\Phi||,  ||g||) \leq 1,
	\end{flalign}
	concluding that $\mathcal{P}$ maps function $\Phi \in X$ to $X$. 
	Notice that we can write $\mathcal{P} = \mathcal{J}\mathcal{A}$, with $\mathcal{J}\phi = \int_{0}^{u} \lambda e^{-\lambda s} \phi(s) ds$. It is straightforward to see that the linear operator $\mathcal{J}$ is compact, as is $\mathcal{A}$, as shown in Lemma \ref{cont-res}. Therefore, we can apply Schauder's fixed point theorem to conclude that $\mathcal{P}$ has a fixed point in $X$, which solves the integral equation $(\ref{int-gen-prev})$.
\end{proof}

\section{Partial Integro-Differential Equations for the PD function}\label{pides-visc}

Generally, when considering various credit modelling tasks, such as forecasting probability of default and expected losses, it is common throughout the literature to use path simulation techniques, particularly for practical purposes. Specific examples and applications can be seen in e.g., \cite{sak2012fast} and \cite{virolainen2004macro}. On the other hand, we will see in this section that the integral equation representations (\ref{int-rs}), (\ref{int-sv}) and (\ref{int-gen}) lead to PIDEs for the PD functions, which belong to families of well-studied equations. Hence, our approach relies solely on the equations derived in this and the previous section, which can be solved to retrieve the corresponding values, thereby eliminating the need for simulations and the larger errors which accompany such methods. 
\par Natural questions arise related to the regularity conditions that the survival function (and hence the corresponding PD function) must satisfy; for example, in the one dimensional L\'evy-driven OU case, classical solutions of PIDEs would require that $\Phi(x,u) \in \mathcal{C}^{2,1}\big(\mathcal{D} \times [0,T]\big)$, where $\mathcal{D}:= [0,\infty)$, i.e., the survival probability function would have to be twice and once continuously differentiable in $x$ and $u$, respectively, on the corresponding domains. In many cases, the required differentiability conditions are often assumed. However, we can avoid making such assumptions by considering viscosity solutions of the PIDEs, a notion introduced in \cite{crandall1983viscosity}. Viscosity solutions and their applications in finance have been studied in e.g., \cite{cont2005integro} and \cite{cont2005finite}. We begin by showing that the generalized PD function is a viscosity solution of a PIDE that will be derived. Then, we continue by showing that this and the PIDEs that result from the regime switching and stochastic volatility models can be derived directly from the corresponding integral equations, if the required regularity conditions hold. In our setting, it is understood that the survival functions are in fact viscosity solutions to these PIDEs, yet these calculations are important for two reasons: firstly, they establish the connection between the integral equations and the corresponding PIDEs, and secondly they constitute an efficient method of obtaining the form of the PIDEs, for which we can then show that the survival functions are indeed viscosity solutions.    
\par Finally, it is worth emphasizing the utility of the PIDEs obtained in this section. Specifically, the solutions to these equations are PD values across both initial positions, time and latent variables. Hence, we will see that, by considering numerical schemes for the PIDES, we obtain the complete evolution of the PD process that is required for applications in credit risk modelling. This is clearly preferable to common methods such as Monte Carlo estimations of the PD, where one must perform simulations of the underlying asset process for many different initial positions and time horizons, separately. This process is extremely computationally costly, especially when taking into account the order of convergence of many stochastic simulation schemes. For example, the Euler scheme for the simulation of the asset process we consider has a strong convergence of order 0.5, which is required since the PD value depends on the whole path of the asset process. 

\subsection{Viscosity solutions}
Viscosity solutions for non-local PDEs have also been studied in e.g., \cite{barles2008second}, \cite{cont2005integro} and \cite{hamadene2016viscosity} and references therein. We reiterate the importance of this approach: requiring only continuity of the underlying function, which has been proven for the survival function under the proposed models, we can define solutions of the equations in a weak sense and subsequently approximate them using numerical schemes. We begin by showing that the generalized survival probability is a viscosity solution of an appropriate PIDE (provided in equation (\ref{gen-visc}) below). For completeness, we include the definition of a viscosity solution below (altered to reflect the arguments of the survival function we are studying). 

\begin{definition}
	Consider an integro-differential operator for the function with arguments as above, $\mathcal{L}f(x,\rho,y,u)$ and a corresponding PIDE $\mathcal{L}f(x,\rho,y,u) =0$. Then, a function $\phi(x,\rho,y,u)$ is called a viscosity supersolution (subsolution) of the PIDE if, for any $\rho \in \mathcal{R}$, for every $(x, y, u) \in \mathcal{D} \times \mathcal{V} \times  [0, T]$, and every function $f(\cdot, \rho, \cdot, \cdot) \in \mathcal{C}^{2,1}\big(\mathcal{\tilde{D}} \times [0,T]\big)$, where $\tilde{D}:= \mathcal{D} \times \mathcal{V}$, such that $\phi(x,\rho,y,u)=f(x,\rho, y,u)$ and $\phi \geq f$ ($\phi \leq f$), the inequality $\mathcal{L} f(x,\rho,y,u) \leq 0$ ($\mathcal{L} f(x,\rho,y,u) \geq 0$) holds. A function $\phi(x,\rho,y, u)$ is a viscosity solution of the PIDE if $\phi$ is simultaneously a viscosity supersolution and subsolution.
\end{definition}

In what follows we will show that the survival function is a viscosity solution of an appropriate PIDE given below. We must first show that viscosity solutions do exist for the PIDEs in question. The proof that follows is based on the corresponding result in \cite{belkina2016viscosity}, extended to match the models studied in this work. 

\begin{proposition}
	The survival probability function $\Phi(x,\rho,y,u)$ is a viscosity solution of the PIDE $\mathcal{L}f(x,\rho,y,u) =0$, where:
	\begin{flalign} \label{gen-visc}
		\mathcal{L}f(x,&\rho,y,u) :=-\frac{\partial f}{\partial u}(x,\rho,y,u)+k_\rho(\theta_{\rho}-x) \frac{\partial f}{\partial x}(x,\rho,y,u) + \kappa (\mu -y)\frac{\partial f}{\partial y}(x,\rho, y,u) \notag \\& +\frac{1}{2} \sigma_\rho^2 y\frac{\partial^{2} f}{\partial x^{2}}(x,\rho,y,u) +\frac{1}{2} \xi^2 y \frac{\partial^{2} f}{\partial y^{2}}(x,\rho, y,u) +\sum_{j \neq \rho} q_{\rho j} \Big(f(x,j,y,u) - f(x,\rho,y,u)\Big) \notag \\ & + \int_{ \mathbb{R}}\Big(f(x+z,\rho,y, u) - f(x,\rho,y,u) \Big)\nu(dz). 
	\end{flalign}
\end{proposition}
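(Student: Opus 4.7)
The plan is to adapt the standard dynamic-programming and Itô-formula argument for viscosity solutions of PIDEs (in the spirit of \cite{belkina2016viscosity}), which is available once $\Phi$ is known to be continuous — this has already been established in Lemma \ref{cont-res}. The dynamic-programming relation underlying the argument is the differential version of the integral equation (\ref{int-gen-prev}): by the strong Markov property and time-homogeneity of the tri-variate process $(G_t,R_t,Y_t)$, the stopped process
\[
M_t \;:=\; \Phi\bigl(G_{t\wedge \eta},\,R_{t\wedge\eta},\,Y_{t\wedge\eta},\,u-t\wedge\eta\bigr),\qquad \eta:=\inf\{s\ge 0:G_s\le 0\},
\]
is an $\mathcal{F}_t$-martingale on $[0,u]$, and this statement is equivalent to Proposition \ref{integral_eqn_gen} read infinitesimally.

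For the subsolution property, take a test function $f(\cdot,\rho,\cdot,\cdot)\in\mathcal{C}^{2,1}$ touching $\Phi$ from above at an interior point $(x_0,\rho,y_0,u_0)$ with $x_0>0$ and $u_0>0$, in the sense of the paper's definition (i.e.\ $\Phi\le f$ with equality at the point). Apply Itô's formula to $f(G_t,R_t,Y_t,u_0-t)$ up to the stopping time $\tau_h:=h\wedge\eta$ for small $h>0$. Taking expectations, the stochastic-integral contributions (against $dB$, $dW$, the compensated CTMC martingale, and the compensated Poisson random measure) vanish, leaving
\[
\mathbb{E}\bigl[f(G_{\tau_h},R_{\tau_h},Y_{\tau_h},u_0-\tau_h)\bigr]-f(x_0,\rho,y_0,u_0) \;=\; \mathbb{E}\!\left[\int_0^{\tau_h}\mathcal{L}f(G_s,R_s,Y_s,u_0-s)\,ds\right].
\]
The martingale identity for $M_t$ then gives $\mathbb{E}[f(G_{\tau_h},\ldots)]\ge \mathbb{E}[\Phi(G_{\tau_h},\ldots)]=\Phi(x_0,\rho,y_0,u_0)=f(x_0,\rho,y_0,u_0)$, so the right-hand side above is non-negative. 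Dividing by $h$, letting $h\downarrow 0$, and using continuity of $\mathcal{L}f$ with dominated convergence yields $\mathcal{L}f(x_0,\rho,y_0,u_0)\ge 0$. Reversing all inequalities with a test function touching $\Phi$ from below gives the supersolution bound, and hence $\Phi$ is a viscosity solution of $\mathcal{L}f=0$.

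The main technical obstacle lies in the Itô expansion for the joint process $(G_t,R_t,Y_t)$, which combines the jump-diffusion $G_t$ driven by $B_t$ and the Poisson random measure $N$, the independent diffusion $Y_t$ driven by $W_t$, and the CTMC $R_t$; one must carefully account for each compensator so that the result matches the operator $\mathcal{L}$ in (\ref{gen-visc}). In particular, the compensator of the CTMC produces the non-local term $\sum_{j\ne\rho}q_{\rho j}\bigl(f(x,j,y,u)-f(x,\rho,y,u)\bigr)$, and the compensator of $N$ against the L\'evy measure $\nu$ produces $\int_{\mathbb{R}}\bigl(f(x+z,\rho,y,u)-f(x,\rho,y,u)\bigr)\nu(dz)$. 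A secondary technicality is controlling the default-boundary contribution: at an interior starting point $(x_0,y_0)$ one needs $\mathbb{P}(\tau_h=h)\to 1$ as $h\downarrow 0$, which follows because with positive probability no jump of $G$ or $R$ occurs in $[0,h]$ and the continuous paths remain in a compact neighbourhood of $(x_0,y_0)$, keeping $G$ bounded away from $0$.
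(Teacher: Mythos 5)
Your proposal is correct and follows essentially the same route as the paper: combine the martingale/dynamic-programming identity $\Phi(x,\rho,y,u)=\mathbb{E}[\Phi(G_{\tau_h},R_{\tau_h},Y_{\tau_h},u-\tau_h)]$ (inherited from Proposition \ref{integral_eqn_gen} via optional stopping) with It\^o's formula applied to the smooth test function up to a small stopping time, take expectations so the martingale parts vanish, divide by $h$ and let $h\downarrow 0$ with dominated convergence to obtain the sub- and supersolution inequalities. The only cosmetic difference is your choice of stopping time $h\wedge\eta$ versus the paper's exit time from a small neighbourhood intersected with $h$, which is immaterial here since the test functions are globally $\mathcal{C}^{2,1}$ by the paper's definition.
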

\begin{proof}
	We must show that $\Phi(x,\rho,y,u)$ is simultaneously a viscosity supersolution and subsolution of the PIDE. 
	
	We begin with the supersolution case. For any $\rho \in \mathcal{R}$, consider fixed $(x,y,u) \in \mathcal{D} \times \mathcal{V} \times [0,T]$, with $\Phi(x,\rho, y,u) = 0 $ when $x\leq 0$, by definition. Furthermore, consider a function $f(\cdot,\rho ,\cdot, \cdot ) \in \mathcal{C}^{2,1}\big(\tilde{D} \times [0,T]\big)$ such that $\Phi(x,\rho,y, u)=f(x,\rho,y, u)$ and $\Phi((\cdot,\rho ,\cdot, \cdot )) \leq f(\cdot,\rho ,\cdot, \cdot )$ on $\mathcal{D} \times \mathcal{V} \times  [0,T]$. Now, let $h>0$ and let $\epsilon_x, \epsilon_y$ be small enough to ensure that $f \in \mathcal{C}^{2,1}\Big(  \tilde{d}_{\epsilon} \times [0,T]  \Big)$, where $\tilde{d}_{\epsilon} :=B_{\epsilon_x}(x) \times B_{\epsilon_y}(y)$, i.e., we are considering a neighborhood of the fixed $(x,u)$ where the functions will "touch". Finally, define the stopping time $\tau_h := \inf\{ t \geq 0: (G_t^x,Y_t^x) \notin (\overline{B_{\epsilon_x}(x)} \times \overline{B_{\epsilon_y}(y)}) \} \wedge h$, noting that we choose $h<u$ to ensure that $\tau_h < u$. Then, by the It\^o formula, and with the operator $\mathcal{L}$ is in (\ref{gen-visc}), we have: 
	\begin{flalign}
		f(G_{\tau_h}^x, R^{\rho}_{\tau_h}&,Y_{\tau_h}^y, u - \tau_h) - f(G_0^x,R^{\rho}_{0},Y_0^{y},u) = f(G_{\tau_h}^x, R_{\tau_h}^{\rho}, Y_{\tau_h}^{y},u - \tau_h) - f(x,\rho, y,u) \notag \\ &= \int_{0}^{\tau_h} \mathcal{A}f(G_{t}^x, R_{t}^{\rho}, Y_{t}^y,u - t) dt + \int_{0}^{\tau_h} \sigma \frac{\partial f}{\partial x} (G_{t}^x, R_{t}^{\rho}, Y_{t}^y,u - t) dB_t \notag \\ &+ \int_{0}^{\tau_h} \int_{\mathbb{R}} \Big( f(G_{t}^x+z, R_{t}^{\rho}, Y_{t}^y,u - t) - f(G_{t}^x, R_{t}^{\rho}, Y_{t}^y,u - t) \Big) \tilde{N}(dt,dz) \notag \\& + \sum_{j \neq \rho} p_{\rho j}(t_h) \Big(f(G_{\tau_h}^x,j,Y_{\tau_h}^{y},u-\tau_h) - f(G_{\tau_h}^x,\rho,Y_{\tau_h}^{y},u-\tau_h)\Big)\notag \\ &= \int_{0}^{\tau_h} \mathcal{A}f(G_{t}^x, R_{t}^{\rho}, Y_{t}^y,u - t) dt  + m_t,
	\end{flalign}
	where 
	\begin{flalign} 
		m_t :=& \int_{0}^{\tau_h} \sigma \frac{\partial f}{\partial x} (G_{t}^x, R_{t}^{\rho}, Y_{t}^y,u - t) dB_t \notag\\ &+ \int_{0}^{\tau_h} \int_{\mathbb{R}} \Big( f(G_{t}^x+z, R_{t}^{\rho}, Y_{t}^y,u - t) - f(G_{t}^x, R_{t}^{\rho}, Y_{t}^y,u - t) \Big) \tilde{N}(dt,dz)
	\end{flalign}
	is a martingale and therefore so is the stopped process $m_{t\wedge \tau_h}$.
	
	We have that $f(x,\rho, y,u) = \Phi(x,\rho, y,u)$ and recall that $\Phi(x,\rho,y,u) = \mathbb{E}[\Phi(G^{x}_{\tau_h}, R_{\tau_h}^{\rho}, Y_{\tau_h}^y ,u-\tau_h)]$ almost surely, from Proposition \ref{integral_eqn_gen}. Therefore: 
	\begin{flalign}
		&\Phi(G_{\tau_h}^x, R^{\rho}_{\tau_h},Y_{\tau_h}^y, u - \tau_h) \geq \notag \\ & f(G_{\tau_h}^x, R^{\rho}_{\tau_h},Y_{\tau_h}^y, u - \tau_h) = f(x,\rho, y, u)  + \int_{0}^{\tau_h} \mathcal{A}f(G_{t}^x, R^{\rho}_{t},Y_{t}^y, u - t) dt  + m_t, 
	\end{flalign}
	and so:
	\begin{flalign}
		\mathbb{E}[\Phi(G_{\tau_h}^x,& R^{\rho}_{\tau_h},Y_{\tau_h}^y, u - \tau_h )] \geq\mathbb{E}[\Phi(x,\rho, y, u)]  + \mathbb{E}\Big[\int_{0}^{\tau_h} \mathcal{A}f(G_{t}^x, R^{\rho}_{t},Y_{t}^y, u - t) dt \Big] + \mathbb{E}[m_t] \notag \\
		&\Rightarrow \Phi(x,\rho ,y, ,u) \geq   \Phi(x,\rho, y,u) + \mathbb{E}\Big[\int_{0}^{\tau_h} \mathcal{A}f(G_{t}^x, R^{\rho}_{t},Y_{t}^y, u - t) dt \Big],
	\end{flalign}
	and hence $\mathbb{E}\Big[\int_{0}^{\tau_h} \mathcal{A}f(G_{t}^x, R^{\rho}_{t},Y_{t}^y, u - t) dt \Big] \leq 0$. The final step is as in \cite{belkina2016viscosity}; when $h$ is sufficiently small we have $\tau_h = h$ and therefore, by the Lebesgue dominated convergence theorem we have: 
	\begin{eqnarray}
		\mathcal{A}f(x,\rho,y, u) = \lim_{h \downarrow 0} \frac{1}{h} \mathbb{E}\Big[\int_{0}^{\tau_h} \mathcal{A}f(G_{t}^x, R^{\rho}_{t},Y_{t}^y, u - t) dt \Big] \leq 0,
	\end{eqnarray}
	showing that $\Phi$ is indeed a viscosity supersolution. It follows directly, by switching the inequalities in the steps above, that $\Phi$ is a viscosity subsolution and therefore the result is proven.
\end{proof}

\begin{remark}
	Viscosity theory is not the only prism under which we can consider weak solutions. We can also study solutions in appropriate Sobolev spaces, for which we will need to define a notion of weak differentiability, and we can then use standard martingale approaches to obtain the corresponding PIDEs. We include details of this approach in Appendix \ref{sob-app}.
\end{remark}

\begin{remark}
	We furthermore note that if additional conditions hold, then the strong solutions that occur are equal to the viscosity solutions, as expected. The formulations via viscosity solutions are useful to obtain the form of the PIDEs the PD functions satisfy, and we will see that we can then consider additional conditions that lead to regular solutions through these equations.   
\end{remark}

\subsection{The survival probability as a classical solution of PIDEs derived from the IE formulations}
To build a consistent framework we must ensure that that the PIDEs obtained above (which are satisfied by the PD functions in the viscosity sense) are derivable from the integral equations formulation in section \ref{integral-section}. In this section, we show that the integral equations indeed lead to the corresponding PIDEs.
We will begin with the calculations under the regime switching and stochastic volatility models, upon which the corresponding result under the generalized model will be built. In all the results below, we consider a fixed time horizon $T>0$. 

\begin{lemma} \label{pides-1} \hfill
	\begin{itemize}
		\item[$(i)$] Under the regime-switching model (\ref{model-rs-def}), the survival probability $\Phi(x,\rho,u)$ satisfies the PIDE:
		\begin{flalign} \label{int-markov}
			&\frac{\partial \Phi}{\partial u}(x,\rho,u)=k_\rho(\theta_{\rho}-x) \frac{\partial \Phi}{\partial x}(x,\rho,u) + \frac{1}{2} \sigma_\rho^2 \frac{\partial^{2} \Phi}{\partial x^{2}}(x,\rho,u)+ \sum_{j \neq \rho} q_{\rho j} \Big(\Phi(x,j,u) - \Phi(x,\rho,u)\Big)  \notag\\ 
			&+ \int_{ \mathbb{R}}\Big(\Phi(x+z,\rho,u) -  \Phi(x,\rho,u) \Big)\nu(dz), \,\,\ (x,\rho, u) \in \mathcal{D} \times \mathcal{R} \times [0,T],
		\end{flalign}
		with initial and boundary conditions: 
		\begin{align} \nonumber
			\Phi(x,\rho,0)= \mathbbm{1}_{\{x > 0\}},\,\,\ (x,\rho) \in \mathcal{D} \times \mathcal{R},\\ \nonumber
			\Phi(0,\rho,u) = 0, \,\,\ (\rho,u) \in  \mathcal{R} \times [0,T], \\ \nonumber
			\Phi(x,\rho, u) \rightarrow 1 \text{ as } x \rightarrow \infty, \,\,\ (\rho,u ) \in \mathcal{R} \times [0,T], 
		\end{align} 
		where $q_{ij}$ are the elements of the generator $Q$ of the switching pocess $R_t$, as defined in (\ref{markov-gen-matrix}). 
		
		\item[$(ii)$] 	Under the stochastic volatility model (\ref{model-sv-def}), the survival probability $\Phi(x,y,u)$ satisfies the PIDE:
		\begin{flalign} \label{pide-sv}
			\frac{\partial \Phi}{\partial u}(x,y,u)&  = k(\theta - x) \frac{\partial \Phi}{\partial x}(x,y,u) + \kappa (\mu -y)\frac{\partial \Phi}{\partial y}(x,y,u) + \frac{1}{2} y \frac{\partial^{2} \Phi}{\partial x^{2}}(x,y,u) + \frac{1}{2} \xi^2 y \frac{\partial^{2} \Phi}{\partial y^{2}}(x,y,u) \nonumber \\&+ \int_{\mathbb{R}} \Big(\Phi(x+z,y, u) - \Phi(x,y,u) \Big) \nu(dz), \,\,\ (x,y,u) \in \mathcal{D}\times \mathcal{V} \times [0,T],
		\end{flalign}
		subject to the initial and boundary conditions: 
		\begin{align}
			\Phi(x,y,0)= \mathbbm{1}_{\{x > 0\}},\,\,\ (x,y) \in \mathcal{D} \times \mathcal{V}, \notag\\
			\Phi(0,y,u) = 0, \,\,\ (y,u) \in \mathcal{V} \times [0,T], \notag \\
			\Phi(x,y, u) \rightarrow 1 \text{ as } x \rightarrow \infty, \,\,\ (y,u) \in \mathcal{V} \times [0,T], \notag \\
			\frac{\partial \Phi}{\partial y}(x,y,u) = 0 \text{ as } y \rightarrow \infty \,\,\ (x,u) \in \mathcal{D} \times [0,T].
		\end{align} 
	\end{itemize}
	
\end{lemma}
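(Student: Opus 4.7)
The plan is to derive each PIDE from its integral equation representation in Corollary~\ref{integral_eqn_2} by applying the operator $\partial_u - \mathcal{L}$, where $\mathcal{L}$ denotes the spatial generator of the underlying continuous (non-jump) process. The main tool is that the continuous survival probability $Q$ and the killed transition density $p$ satisfy Kolmogorov backward equations with respect to this generator, $\partial_u Q = \mathcal{L} Q$ and $\partial_s p = \mathcal{L}_x p$ (see Appendix~\ref{gens-cou}), so that acting by $\partial_u - \mathcal{L}$ on the term $e^{-\lambda u} Q$ leaves only the residue $-\lambda e^{-\lambda u} Q$. The initial and boundary data listed are read off directly from the probabilistic definition of $\Phi$: the infimum at $u=0$ equals the starting position, giving $\mathbbm{1}_{\{x>0\}}$; starting at $x=0$ forces immediate default; and the tail behavior in $x$ (respectively in $y$ for part (ii)) is inherited from that of $Q$.

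For part (i), with $\mathcal{L}^{RS}$ denoting the regime-switching OU generator
$$\mathcal{L}^{RS} f(x,\rho) = k_\rho(\theta_\rho - x) f_x(x,\rho) + \tfrac{1}{2}\sigma_\rho^2 f_{xx}(x,\rho) + \sum_{j\neq \rho} q_{\rho j}\bigl(f(x,j) - f(x,\rho)\bigr),$$
I would differentiate (\ref{int-rs}) in $u$. By Leibniz, the integral term produces a boundary contribution at $s=u$ plus an interior piece; the latter equals $\mathcal{T}_s \partial_u \Phi(x'+z,\rho,u)$ because $u$ enters $\mathcal{T}_s \Phi(\cdot,\rho,u)$ only through the shifted slot $u-s$. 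Applying $\mathcal{L}^{RS}$ in $x$ under the integral and integrating by parts in $s$ via $\partial_s p = \mathcal{L}^{RS}_x p$ produces a second $s=u$ boundary term. At $s=u$ the initial condition collapses $\mathcal{T}_u \Phi(x'+z,\rho,u)$ to $\mathbbm{1}_{\{x'+z>0\}}$, and the two boundary contributions, together with the identification $\nu(dz) = \lambda\, dF(z)$, assemble into the nonlocal term $\int_{\mathbb{R}}\bigl(\Phi(x+z,\rho,u) - \Phi(x,\rho,u)\bigr)\nu(dz)$. The remaining right-hand side is recognized as $\mathcal{L}^{RS} \Phi$ by re-reading the same renewal structure of (\ref{int-rs}), which yields (\ref{int-markov}).

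Part (ii) is strictly parallel: replace $\mathcal{L}^{RS}$ by the joint generator of the coupled $(G,Y)$ system,
$$\mathcal{L}^{SV} f(x,y) = k(\theta-x) f_x(x,y) + \kappa(\mu-y) f_y(x,y) + \tfrac{1}{2} y f_{xx}(x,y) + \tfrac{1}{2}\xi^2 y f_{yy}(x,y),$$
start from (\ref{int-sv}), and run the same Leibniz/integration-by-parts argument using the Kolmogorov backward equation for the non-jump two-dimensional process. The Neumann-type condition $\partial_y \Phi \to 0$ as $y \to \infty$ is inherited from the corresponding decay of $Q$ under the CIR dynamics and keeps the problem well-posed in the unbounded volatility variable.

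I expect the main obstacle to be the careful matching of the two $s=u$ boundary contributions (one from Leibniz in $u$, one from integration by parts in $s$) and verifying that their sum, combined with the initial condition of $\Phi$, is exactly the jump integral appearing in the stated PIDEs. The $\mathcal{C}^{2,1}$ regularity required to justify the differentiations and the interchange of $\mathcal{L}$ with the integrals is precisely the \emph{classical solution} assumption built into this subsection; under only continuity the identities hold in the viscosity sense already established.
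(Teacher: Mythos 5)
Your overall architecture --- differentiate the integral equation of Corollary \ref{integral_eqn_2} in $u$, apply the continuous generator in the spatial variables, and use the Kolmogorov backward equations for $p$, $Q$ and the regime semigroup to match terms --- is exactly the paper's. The gap is in your bookkeeping of the boundary terms, which is precisely where the nonlocal term has to come from. The Leibniz boundary contribution at $s=u$ is $\lambda e^{-\lambda u}\int_0^\infty\int_{\mathbb{R}}\mathbbm{1}_{\{x'+z>0\}}\,p(x',x,u)\,dF(z)\,dx'$, since $\mathcal{T}_u\Phi(\cdot,\rho,u)$ collapses to the indicator; the integration by parts in $s$ (after converting $\mathcal{L}_x p$ into $\partial_s p$) produces boundary terms at \emph{both} endpoints, and the one at $s=u$ is identical to the Leibniz term, so when you form $\partial_u\Phi-\mathcal{L}\Phi$ these two cancel rather than "assembling" into anything. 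Neither contains $\Phi(x+z,\rho,u)$, so the jump integral cannot be extracted from the $s=u$ endpoint as you propose.

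The nonlocal term in fact comes from the $s=0$ endpoint of the integration by parts, where $p(x',x,0)=\delta_x(x')$ and $\mathcal{T}_0\Phi=\Phi$, yielding $\lambda\int_{\mathbb{R}}\Phi(x+z,\rho,u)\,dF(z)$; its companion $-\lambda\Phi(x,\rho,u)$ does not come from any boundary at all, but from the derivative of the exponential kernel, $-\lambda^2e^{-\lambda s}$, which re-sums through the integral equation into $-\lambda\Phi(x,\rho,u)+\lambda e^{-\lambda u}Q(x,\rho,u)$, the second piece cancelling against $\partial_u\big(e^{-\lambda u}Q\big)$. The paper avoids the integration by parts altogether by first changing variables $t:=u-s$ in (\ref{int-rs}), so that the single Leibniz boundary lands at $t=u$, i.e.\ $s=0$, where the Dirac degeneracy of the transition density does the work directly. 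With the boundary contributions reassigned in this way, the rest of your plan (including part $(ii)$ and the reading-off of the initial and boundary data) goes through as stated.
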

\begin{proof}
	\begin{itemize}
		\item[$(i)$] Our approach relies on taking advantage of the known fact that the transition densities satisfy the Kolmogorov equation. Specifically, we know that for $p(\cdot,x,u)$ we have:
		\begin{eqnarray}\label{kolm-3}
			\frac{\partial p}{\partial u}(\cdot, x,u) = k_{\rho}(\theta_{\rho} - x)\frac{\partial p}{\partial x} (\cdot,x, u) + \frac{1}{2}\sigma_{\rho}^2 \frac{\partial^2 p}{\partial x^2}(\cdot,x, u).
		\end{eqnarray}
		Recall that the same holds for the survival distribution of the continuous OU, $Q(x,\rho,u)$:
		\begin{eqnarray}\label{kolm-4}
			\frac{\partial Q}{\partial u}(x,\rho, u) = \mathcal{L}_1 Q(x,\rho,u),
		\end{eqnarray}
		with the generator operator under the regime switching model $\mathcal{L}_1$ given by:
		\begin{flalign}
			\mathcal{L}_1Q(x,\rho, t):= k_\rho(\theta_{\rho}-x) \frac{\partial f}{\partial x}(x,\rho,t) + \frac{1}{2} \sigma_\rho^2 \frac{\partial^{2} f}{\partial x^{2}}(x,\rho,t)+ \sum_{j \neq \rho} q_{\rho j} \Big(Q(x,j,t) - f(x,\rho,t)\Big). 
		\end{flalign}

		Furthermore, for the function $g(\rho,u): = \mathbb{E}[\Phi(\cdot,R_u, \cdot)|R_0=\rho]$, we have that: 
		\begin{eqnarray}\label{markov-gen}
			\frac{\partial g}{\partial u}(\rho, u) = \sum_{j \neq \rho} q_{\rho j} \big(g(j,u) - g(\rho,u)\big).
		\end{eqnarray}
		We now begin the calculations for the PIDE by making the change of variables $t:= u -s$:
		\begin{flalign}\nonumber
			\Phi(x,\rho, u) =  \int_{0}^{u} \lambda e^{-\lambda (u-t)} \int_{0}^{\infty} \int_{\mathbb{R}}  \mathcal{T}_{u-t}\Phi(x'+z, \rho, u) p(x',x,u-t) dF(z)dx'dt + e^{-\lambda u}Q(x,\rho,u).
		\end{flalign}
		By the Leibniz rule, and substituting in the definition of the operator $\mathcal{T}_s$, we then have:
		\begin{eqnarray}
			\frac{\partial \Phi}{\partial u}(x,\rho, u)  = \lambda \int_{0}^{\infty}\int_{\mathbb{R}} \mathbb{E}[\Phi(x'+z, R_0^{\rho},u)]p(x',x,0)dF(z)dx' \nonumber\\+ \int_{0}^{u} -\lambda^2 e^{-\lambda(u-t)}  \int_{0}^{\infty}\int_{\mathbb{R}} \mathbb{E}[\Phi(x'+z,R_{u-t}^{\rho},t)] p(x',x,u-t)dF(z)dx'dt \nonumber \\  + \int_{0}^{u} \lambda e^{-\lambda(u-t)}  \int_{0}^{\infty}\int_{\mathbb{R}}\mathbb{E}[\Phi(x'+z,R_{u-t}^{\rho},t)] \frac{\partial p}{\partial u}(x',x,u-t)dF(z)dx'dt \nonumber\\
			+\int_{0}^{u} \lambda e^{-\lambda(u-t)} \int_{0}^{\infty}\int_{\mathbb{R}} \frac{\partial}{\partial u}\mathbb{E}[\Phi(x'+z,R_{u-t}^{\rho},t)] p(x',x,u-t)dF(z)dx'dt \nonumber\\ - \lambda e^{-\lambda u} Q(x,\rho, u) + e^{-\lambda u} \frac{\partial Q}{\partial u}(x,\rho,u).
		\end{eqnarray}
		The first term in the expression above can be written as:
		\begin{eqnarray}\notag
			\lambda \int_{0}^{\infty}\int_{\mathbb{R}}\mathbb{E}[\Phi(x'+z,\rho, u)]\delta_x(x')dF(z)dx' = \lambda \int_{\mathbb{R}} \Phi(x+z,\rho, u)dF(z). 
		\end{eqnarray}
		On the other hand, using the integral equation for $\Phi(x,\rho, u)$, the second term can be written as:
		\begin{eqnarray}\notag
			-\lambda \Phi(x,\rho,u) + \lambda e^{-\lambda u }Q(x,\rho, u).
		\end{eqnarray}
		Combining, we obtain: 
		\begin{flalign}\label{partial_u}
			\frac{\partial \Phi}{\partial u}(x&,\rho, u) = \int_{0}^{u} \lambda e^{-\lambda(u-t)}  \int_{0}^{\infty}\int_{\mathbb{R}} \mathbb{E}[\Phi(x'+z,R^{\rho}_{u-t},t)] \frac{\partial p}{\partial u}(x',x,u-t)dF(z)dx'dt \nonumber\\
			&+\int_{0}^{u} \lambda e^{-\lambda(u-t)} \int_{0}^{\infty}\int_{\mathbb{R}} \frac{\partial}{\partial u}\mathbb{E}[\Phi(x'+z,R^{\rho}_{u-t},t)] p(x',x,u-t)dF(z)dx'dt  \nonumber\\ &+\lambda \int_{\mathbb{R}} \Phi(x+z,\rho, u)dF(z)	-\lambda \Phi(x,\rho, u) +  e^{-\lambda u }\frac{\partial Q}{\partial u}(x,\rho, u).
		\end{flalign}
		We now consider the generator of $\Phi(x,\rho, u)$. It is straightforward to separate the components of $\mathcal{L}_1\Phi(x,\rho, u)$ since only the transition density $p(x',x,u-t)$ depends on $x$ and only $\mathbb{E}[\Phi(x'+z,R_{u-t}^{\rho},t)]$ depends on the regime. We then have: 
		\begin{flalign}
			&\mathcal{L}_1\Phi(x,\rho, u)  \equiv k_{\rho}(\theta_{\rho} - x)\frac{\partial \Phi}{\partial x} (x,\rho, u) + \frac{1}{2}\sigma_{\rho}^2 \frac{\partial^2 \Phi}{\partial x^2}(x,\rho, u) + \sum_{j \neq \rho} q_{\rho j} \Big(\Phi(x,j,u) - \Phi(x,\rho,u)\Big) &\nonumber\\  &=
			e^{-\lambda u} \mathcal{L}_1Q(x,\rho,u) + \int_{0}^{u} \lambda e^{-\lambda(u-t)}  \int_{0}^{\infty}\int_{\mathbb{R}} \mathbb{E}[\Phi(x'+z,R_{u-t}^{\rho},t)] \mathcal{L}p(x',x,u-t) dF(z)dx'dt & \nonumber\\
			&+\int_{0}^{u} \lambda e^{-\lambda(u-t)}  \int_{0}^{\infty}\int_{\mathbb{R}} \sum_{j \neq \rho} q_{\rho j}\Big(\mathbb{E}[\Phi(x'+z,R_{u-t}^{j},t)]  -  \mathbb{E}[\Phi(x'+z,R_{u-t}^{\rho},t)]\Big)p(x',x,u-t) dF(z)dx'dt & \nonumber \\&=
			e^{-\lambda u }\frac{\partial Q}{\partial u}(x,\rho, u)+ \int_{0}^{u} \lambda e^{-\lambda(u-t)} \int_{0}^{\infty}\int_{\mathbb{R}} \mathbb{E}[\Phi(x'+z,R_{u-t}^{\rho},t)] \frac{\partial p}{\partial u}(x',x,u-t)dF(z)dx'dt  &\nonumber\\ &+\int_{0}^{u} \lambda e^{-\lambda(u-t)}  \int_{0}^{\infty}\int_{\mathbb{R}} \frac{\partial}{\partial u}\mathbb{E}[\Phi(x'+z,R^{\rho}_{u-t},t)] p(x',x,u-t)      dF(z)dx'dt, &
		\end{flalign}
		where we have used the fact that $Q(x,\rho, u)$ and $p(x',x,u)$ satisfy (\ref{kolm-3}) and (\ref{kolm-4}), respectively. Using (\ref{partial_u}), we obtain:
		\begin{eqnarray}\notag
			\mathcal{L}_1\Phi(x,\rho,u) = \frac{\partial \Phi}{\partial u}(x,\rho, u) - \lambda \Big(\int_{\mathbb{R}} \Phi(x+z,\rho,u)dF(z)- \Phi(x,\rho, u) \Big),
		\end{eqnarray}
		which, upon rearranging, can be written as:
		\begin{flalign}
			\frac{\partial \Phi}{\partial u}(&x,\rho, u)= k_{\rho}(\theta_{\rho} - x)\frac{\partial \Phi}{\partial x}(x,\rho,u) + \frac{1}{2}\sigma_{\rho}^2 \frac{\partial^2 \Phi}{\partial x^2}(x,\rho,u)  \notag \\ & + \sum_{j \neq \rho} q_{\rho j} \Big(\Phi(x,j,u) - \Phi(x,\rho,u)\Big) + \int_{\mathbb{R}} \Big( \Phi(x+z,\rho,u)- \Phi(x,\rho,u) \Big)\nu(dz).
		\end{flalign}
		\item[$(ii)$] 
		The proof follows as above. Under model (\ref{model-sv-def}), for $p(\cdot,x,u)$ we now have:
		\begin{eqnarray}\label{eqn-sv-1}
			\frac{\partial p}{\partial t}(\cdot, x,u)= k(\theta - x)\frac{\partial p}{\partial x}(\cdot, x,u) + \frac{1}{2} y \frac{\partial^2 p}{\partial x^2}(\cdot, x,u).
		\end{eqnarray}
		In this case, we also have to account for the transition density of the underlying volatility process, $q(\cdot, y, u)$, which satisfies:
		\begin{eqnarray}\label{eqn-sv-2}
			\frac{\partial q}{\partial u}(\cdot, y, u)=\kappa(\mu - y)\frac{\partial q}{\partial y}(\cdot, y, u) + \frac{1}{2} \xi^2 y \frac{\partial^2 q}{\partial y^2}(\cdot, y, u),
		\end{eqnarray} 
		and recall that for $Q(x,y,u)$ we have:
		\begin{eqnarray}\label{kolm-6}
			\frac{\partial Q}{\partial u}(x,y, u) = \mathcal{L}_2 Q(x,y,u),
		\end{eqnarray}
		with the generator under the stochastic volatility model $\mathcal{L}_2$ given by:
		\begin{flalign}
			\mathcal{L}_2 f(x,y, t):= k(\theta - x) \frac{\partial f}{\partial x}(x,y,t) + \kappa (\mu -y)\frac{\partial f}{\partial y}(x,y,t) + \frac{1}{2} y \frac{\partial^{2} f}{\partial x^{2}}(x,y,t) + \frac{1}{2} \xi^2 y \frac{\partial^{2} Q}{\partial y^{2}}(x,y,t) 
		\end{flalign}

		Differentiating (\ref{int-sv-exp}) with respect to $x$ and $u$, and comparing $\mathcal{L}_2\Phi(x,y,u)$ with $\frac{\partial \Phi}{\partial u}(x,y,u)$ we obtain the required PIDE, using the same steps as in $(i)$.
	\end{itemize}
\end{proof}

We can now present the main result for the generalized model. Even though the steps are similar as the cases above, the dependence on both the regime and volatility processes creates additional terms and it is therefore worth outlining the proof in detail.

\begin{theorem}
	Under the generalized asset process model (\ref{model-gen-def}) the survival probability $\Phi(x,\rho,y,u)$ satisfies the PIDE:
	\begin{flalign} \label{pide-gen}
		&\frac{\partial \Phi}{\partial u}(x,\rho,y,u) \notag \\ &= k_\rho(\theta_{\rho}-x) \frac{\partial \Phi}{\partial x}(x,\rho,y,u) + \kappa (\mu -y)\frac{\partial \Phi}{\partial y}(x,\rho, y,u) +\frac{1}{2} \sigma_\rho^2 y\frac{\partial^{2} \Phi}{\partial x^{2}}(x,\rho,y,u)+\frac{1}{2} \xi^2 y \frac{\partial^{2} \Phi}{\partial y^{2}}(x,\rho, y,u)\notag \\ & +\sum_{j \neq \rho} q_{\rho j} \Big(\Phi(x,j,y,u) - \Phi(x,\rho,y,u)\Big)  
		+ \int_{ \mathbb{R}}\Big(\Phi(x+z,\rho,y, u) -  \Phi(x,\rho,y,u) \Big)\nu(dz), 
	\end{flalign}
	for $(x,\rho,y,u) \in  \mathcal{D} \times \mathcal{R} \times \mathcal{V}\times [0,T]$, with initial and boundary conditions: 
	\begin{align} \nonumber
		\Phi(x,\rho,y,0)= \mathbbm{1}_{\{x > 0\}},\,\,\ (x,\rho,y) \in \mathcal{D} \times \mathcal{R} \times \mathcal{V},\\ \nonumber
		\Phi(0,\rho,y, u) = 0, \,\,\ (\rho,y,u ) \in \mathcal{R} \times \mathcal{V} \times [0,T], \nonumber \\
		\Phi(x,\rho, y, u) \rightarrow 1, \text{ as } x \rightarrow \infty, \,\,\ (\rho,y, u) \in \mathcal{R} \times \mathcal{V} \times [0,T], \nonumber \\
		\frac{\partial \Phi}{\partial y}(x,y,u) = 0, \text{ as } y \rightarrow \infty \,\,\ (x,\rho, u) \in  \mathcal{D} \times \mathcal{R} \times [0,T].
	\end{align} 
\end{theorem}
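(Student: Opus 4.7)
The plan is to derive the PIDE directly from the integral equation (\ref{int-gen}), following the same Leibniz-rule-plus-Kolmogorov-equation strategy used for Lemma \ref{pides-1}, but tracking the coupling with both the regime $R$ and the volatility $Y$ simultaneously. First I would perform the change of variables $t := u-s$ inside the integral so that the $u$-dependence appears both in the upper limit of the $t$-integral and in the transition densities $p(x',x,u-t)$, $q(\nu,y,u-t)$ and in the expectation $\mathbb{E}[\Phi(x'+z, R_{u-t}^\rho, \nu, t)]$. Differentiating with respect to $u$ then produces one boundary term at $t=u$, which collapses via the delta-like behaviour of $p$ and $q$ at $s=0$ to $\lambda \int_{\mathbb{R}} \Phi(x+z,\rho,y,u)\,dF(z)$; one term of the form $-\lambda\Phi(x,\rho,y,u) + \lambda e^{-\lambda u}Q(x,\rho,y,u)$ obtained by reusing the integral equation itself; and three interior contributions coming from $\partial_u p$, $\partial_u q$, and $\partial_u\mathbb{E}[\Phi(\cdot, R^\rho_\cdot, \cdot)]$.

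Next I would compute the action of the non-jump generalized-OU generator $\mathcal{L}_3$ on $\Phi$, where
\[
\mathcal{L}_3 f := k_\rho(\theta_\rho - x)\frac{\partial f}{\partial x} + \kappa(\mu-y)\frac{\partial f}{\partial y} + \tfrac{1}{2}\sigma_\rho^2 y\,\frac{\partial^2 f}{\partial x^2} + \tfrac{1}{2}\xi^2 y\,\frac{\partial^2 f}{\partial y^2} + \sum_{j\neq \rho} q_{\rho j}\bigl(f(x,j,y,u) - f(x,\rho,y,u)\bigr).
\]
The key observation is that the three pieces of $\mathcal{L}_3$ each act on a different factor of the integrand: the $x$-derivatives hit $p(x',x,s)$, the $y$-derivatives hit $q(\nu,y,s)$, and the regime-switching sum acts on the expectation with respect to $R_s^\rho$ via the identity (\ref{markov-gen}). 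Invoking the Kolmogorov backward equations for $p$, $q$, and for $Q$, each $\partial_u$-derivative in the expression for $\partial_u \Phi$ can be replaced by the corresponding spatial operator, and a direct comparison yields
\[
\partial_u \Phi - \mathcal{L}_3\Phi = \lambda \int_{\mathbb{R}} \bigl(\Phi(x+z,\rho,y,u) - \Phi(x,\rho,y,u)\bigr)\,dF(z),
\]
which, after writing $\lambda\,dF(z) = \nu(dz)$, is precisely (\ref{pide-gen}). The initial and boundary data then follow directly from the definition of $\Phi$: at $u=0$ there has been no time for default or jumps so $\Phi = \mathbbm{1}_{\{x>0\}}$; the threshold $x=0$ is the default boundary; as $x\to\infty$ the asset is far from default so $\Phi\to 1$; and the Neumann-type condition at $y\to\infty$ reflects the standard far-field behaviour of the CIR volatility.

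The main obstacle will be the careful bookkeeping caused by the fact that the transition density $p$ of the non-jump generalized-OU process actually depends on $\rho$ and $y$ through the coefficients $k_\rho, \theta_\rho, \sigma_\rho\sqrt{y}$, so that each of $\partial_u p$, $\partial_u q$, and $\partial_u \mathbb{E}[\Phi(\cdot, R_s^\rho, \cdot)]$ produces cross-terms that must be correctly assigned to the corresponding piece of $\mathcal{L}_3$. One clean way to sidestep this is to split the argument into two stages: first condition on the sample paths of $(R,Y)$ to reduce the inner problem to the pure L\'evy-driven OU setting of Lemma \ref{pides-1}(i)--(ii), and then use independence and the tower property to recombine the contributions. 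In either route, the cancellation of the $-\lambda \Phi + \lambda e^{-\lambda u}Q$ terms against the Leibniz boundary contributions is what isolates the pure non-local jump operator on the right-hand side and delivers (\ref{pide-gen}).
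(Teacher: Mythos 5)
Your proposal follows essentially the same route as the paper's proof: the change of variables $t=u-s$ in the integral equation (\ref{int-gen}), differentiation via the Leibniz rule producing the boundary term $\lambda\int_{\mathbb{R}}\Phi(x+z,\rho,y,u)\,dF(z)$ and the $-\lambda\Phi+\lambda e^{-\lambda u}Q$ cancellation, followed by computing $\mathcal{L}_3\Phi$ factor-by-factor (the $x$-derivatives on $p$, the $y$-derivatives on $q$, the switching sum on the expectation) and substituting the Kolmogorov equations (\ref{gen-den-1}), (\ref{eqn-sv-2}), (\ref{markov-gen}) and (\ref{kolm-7}) to isolate the jump operator. The plan is correct and matches the paper's argument in all essential steps.
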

\begin{proof}
	For the transition density $p(\cdot,x,u)$ we have:
	\begin{eqnarray}\label{gen-den-1}
		\frac{\partial p}{\partial t}(\cdot, x,u)= k_{\rho}(\theta - x)\frac{\partial p}{\partial x}(\cdot, x,u) + \frac{1}{2} \sigma_{\rho}^2 y \frac{\partial^2 p}{\partial x^2}(\cdot,x,u),
	\end{eqnarray}
	and for $q(\cdot, y, u)$ and $g(\rho,u):=\mathbb{E}[\Phi(\cdot,R_u^{\rho},\cdot, \cdot)]$ we know that (\ref{eqn-sv-2}) and (\ref{markov-gen}) hold, respectively.
	In this case, for $Q(x,\rho, y,u)$ we have:
	\begin{eqnarray}\label{kolm-7}
		\frac{\partial Q}{\partial u}(x,\rho, y, u) = \mathcal{L}_3 Q(x,\rho,y,u),
	\end{eqnarray}
	with the generator under the generalized model, $\mathcal{L}_3$, given by:
	\begin{flalign}
		\mathcal{L}_3 f(x,\rho,y, u)&:=k_\rho(\theta_{\rho}-x) \frac{\partial f}{\partial x}(x,\rho,y,t) + \kappa (\mu -y)\frac{\partial f}{\partial y}(x,\rho, y,t)  &&\notag \\ +\frac{1}{2} \sigma_\rho^2 &y\frac{\partial^{2} f}{\partial x^{2}}(x,\rho,y,t)+\frac{1}{2} \xi^2 y \frac{\partial^{2} f}{\partial y^{2}}(x,\rho, y,t) +\sum_{j \neq \rho} q_{\rho j} \Big(f(x,j,y,t) - f(x,\rho,y,t)\Big)  \label{l_3},&&
	\end{flalign}
	
	As in the results above it will be useful to work with the definition of $\mathcal{T}_s$, i.e., version (\ref{int-sv}) of the integral equation. With the change of variables $t=u-s$ and applying the Leibniz rule we now get:
	\begin{flalign}
		\frac{\partial \Phi}{\partial u}&(x,\rho, u)  = \lambda \int_{0}^{\infty}\int_{0}^{\infty}\int_{\mathbb{R}} \mathbb{E}[\Phi(x'+z, R_0^{\rho},\nu, u)]q(\nu,y,0)p(x',x,0)dF(z)dx'd\nu \nonumber\\& + \int_{0}^{u} -\lambda^2 e^{-\lambda(u-t)}  \int_{0}^{\infty}\int_{0}^{\infty}\int_{\mathbb{R}} \mathbb{E}[\Phi(x'+z,R_{u-t}^{\rho},\nu, t)] q(\nu,y,u-t)p(x',x,u-t)dF(z)dx'd\nu dt \nonumber \\ & +\int_{0}^{u} \lambda e^{-\lambda(u-t)} \int_{0}^{\infty}\int_{0}^{\infty}\int_{\mathbb{R}} \frac{\partial}{\partial u}\mathbb{E}[\Phi(x'+z,R_{u-t}^{\rho},\nu, t)] q(\nu,y,u-t) p(x',x,u-t) dF(z)dx'd\nu dt \nonumber\\
		&+ \int_{0}^{u} \lambda e^{-\lambda(u-t)}  \int_{0}^{\infty}\int_{0}^{\infty}\int_{\mathbb{R}}\mathbb{E}[\Phi(x'+z,R_{u-t}^{\rho},\nu, t)] \frac{\partial q}{\partial u}(\nu,y,u-t)p(x',x,u-t)dF(z)dx'd\nu dt \nonumber\\
		&+ \int_{0}^{u} \lambda e^{-\lambda(u-t)}  \int_{0}^{\infty}\int_{0}^{\infty}\int_{\mathbb{R}}\mathbb{E}[\Phi(x'+z,R_{u-t}^{\rho},\nu, t)] q(\nu,y,u-t)\frac{\partial p}{\partial u}(x',x,u-t)dF(z)dx'd\nu dt \nonumber\\
		&- \lambda e^{-\lambda u} Q(x,\rho,y, u) + e^{-\lambda u} \frac{\partial Q}{\partial u}(x,\rho,y,u).
	\end{flalign}
	From the first term we have: 
	\begin{eqnarray}\notag
		\lambda \int_{0}^{\infty} \int_{0}^{\infty}\int_{\mathbb{R}}\mathbb{E}[\Phi(x'+z,\rho,\nu, u)]\delta_y(\nu)\delta_x(x')dF(z)dx'd\nu = \lambda \int_{\mathbb{R}} \Phi(x+z,\rho, y,u)dF(z),  
	\end{eqnarray}
	whereas, from (\ref{int-gen}), the second term can be written as $-\lambda \Phi(x,\rho,y, u) + \lambda e^{-\lambda u }Q(x,\rho, y,u)$, and hence, we have:
	\begin{flalign}\label{partial_u-gen}
		&\frac{\partial \Phi}{\partial u}(x,\rho, y, u) = \lambda \int_{\mathbb{R}} \Phi(x+z,\rho, y,u)dF(z)-\lambda \Phi(x,\rho, y, u) +  e^{-\lambda u }\frac{\partial Q}{\partial u}(x,\rho, y, u) &\nonumber\\&+  \int_{0}^{u} \lambda e^{-\lambda(u-t)}  \int_{0}^{\infty}\int_{0}^{\infty}\int_{\mathbb{R}}\mathbb{E}[\Phi(x'+z,R_{u-t},\nu, t)] \frac{\partial q}{\partial u}(\nu,y,u-t)p(x',x,u-t)dF(z)dx'd\nu dt &\nonumber\\
		&+ \int_{0}^{u} \lambda e^{-\lambda(u-t)}  \int_{0}^{\infty}\int_{0}^{\infty}\int_{\mathbb{R}}\mathbb{E}[\Phi(x'+z,R_{u-t},\nu, t)] q(\nu,y,u-t)\frac{\partial p}{\partial u}(x',x,u-t)dF(z)dx'd\nu dt &\nonumber\\
		&+\int_{0}^{u} \lambda e^{-\lambda(u-t)} \int_{0}^{\infty}\int_{0}^{\infty}\int_{\mathbb{R}} \frac{\partial}{\partial u}\mathbb{E}[\Phi(x'+z,R_{u-t},\nu, t)] q(\nu,y,u-t) p(x',x,u-t) dF(z)dx'd\nu dt.&
	\end{flalign}
	Taking the derivatives of $\Phi(x,\rho, y,u)$ with respect to $x$ and $y$ is straightforward. Therefore, using (\ref{partial_u-gen}) with (\ref{gen-den-1}),
	(\ref{eqn-sv-2}) and (\ref{markov-gen}), we obtain:
	\begin{eqnarray}\notag
		\mathcal{L}_3\Phi(x,\rho, y, u)  =\frac{\partial \Phi}{\partial u}(x,\rho, y, u) - \lambda \Big(\int_{\mathbb{R}} \Phi(x+z,\rho, y,u)dF(z)	- \Phi(x,\rho, y, u) \Big),
	\end{eqnarray}
	and the expected PIDE follows.
\end{proof}

It is easy to see that the same steps can be used to obtain an equivalent PIDE for the simple PD function (corresponding to a regime switching model with one regime), whose integral equation representation is given by (\ref{simple_eqn}). The PIDE is shown below; we omit the proof for brevity, as it follows directly as a special case of the results above.
\begin{corollary} \label{equiv-int}
	Under model ($\ref{hom-gou}$) the survival probability $\Phi(x, u)$ satisfies the PIDE:
	\begin{flalign} \label{pide2}
		\frac{\partial \Phi}{\partial u}=k(\theta-x) \frac{\partial \Phi}{\partial x} +\frac{1}{2} \sigma^{2} \frac{\partial^{2} \Phi}{\partial x^{2}}  + \int_{\mathbb{R}}\Big(\Phi(x+z,u)-\Phi(x,u)\Big) \nu(dz)=0,\,\,\ (x,u) \in \mathcal{D} \times [0,T],
	\end{flalign}
	with initial and boundary conditions: 
	\begin{align} \notag
		\Phi(x,0)= \mathbbm{1}_{\{x > 0\}},\,\,\ x\in\mathcal{D}, \\
		\Phi(0,u) = 0, \,\,\ u \in [0,T], \notag \\
		\Phi(x, u) \rightarrow 1 \text{ as } x \rightarrow \infty, \,\,\ u \in [0,T] . 
	\end{align}
\end{corollary}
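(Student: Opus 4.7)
The plan is to obtain the corollary by specializing the proof of the generalized PIDE (\ref{pide-gen}). In the simple L\'evy-driven OU model there is effectively a single regime, so $|\mathcal{R}|=1$ and the generator matrix $Q$ vanishes, causing the sum $\sum_{j\neq\rho}q_{\rho j}(\Phi(x,j,\cdot)-\Phi(x,\rho,\cdot))$ in (\ref{l_3}) to disappear. Similarly, there is no stochastic volatility: the coupled process degenerates to a constant $Y_t\equiv 1$ (absorbing any scaling into $\sigma^2$), so the drift in $y$, the diffusion in $y$, and the mixed appearance of $y$ in the $x$-diffusion collapse, leaving exactly the coefficients $k(\theta-x)$ and $\tfrac12\sigma^2$ that appear in (\ref{pide2}). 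With these identifications the operator $\mathcal{L}_3$ reduces to $k(\theta-x)\partial_x + \tfrac12\sigma^2\partial_{xx}$ plus the non-local integral, and the PIDE (\ref{pide2}) follows immediately from (\ref{pide-gen}).

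For a self-contained argument one can instead reproduce the proof of Lemma \ref{pides-1}(i) in the simpler setting, starting from the integral equation (\ref{simple_eqn}). First I would make the change of variables $t=u-s$, then apply Leibniz's rule to differentiate in $u$. The boundary term at $s=0$ (equivalently $t=u$) yields $\lambda\int_{\mathbb{R}}\Phi(x+z,u)\,dF(z)$ after using $p(x',x,0)=\delta_x(x')$, the term from the $e^{-\lambda s}$ factor produces $-\lambda\Phi(x,u)+\lambda e^{-\lambda u}Q(x,u)$ upon reinserting (\ref{simple_eqn}), and the remaining contributions involve $\partial_u p$ and $\partial_u Q$. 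Next I would invoke the Kolmogorov backward equations satisfied by $p(\cdot,x,u)$ and by the non-jump survival probability $Q(x,u)$, which in this case take the form $\partial_u p = k(\theta-x)\partial_x p + \tfrac12\sigma^2\partial_{xx}p$ and analogously for $Q$. Substituting these in and recognising the result as $k(\theta-x)\partial_x\Phi + \tfrac12\sigma^2\partial_{xx}\Phi$ acting on the integral representation gives the required PIDE after collecting the jump term as $\int_{\mathbb{R}}(\Phi(x+z,u)-\Phi(x,u))\nu(dz)$ with $\nu(dz)=\lambda\,dF(z)$.

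The initial condition $\Phi(x,0)=\mathbbm{1}_{\{x>0\}}$ is immediate from the definition of the first passage probability, and the boundary conditions at $x=0$ and $x\to\infty$ follow from the absorbing nature of the default barrier and the fact that jumps are finite almost surely on bounded intervals.

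The main obstacle is not algebraic but regulatory: the computation above silently assumes that $\Phi(\cdot,\cdot)\in\mathcal{C}^{2,1}(\mathcal{D}\times[0,T])$, so that the differentiations under the integral sign are legitimate and the PIDE holds classically. Continuity of $\Phi$ is already given by Lemma \ref{cont-res} and Proposition \ref{schauder}, but the higher regularity required for a strong solution must be imposed as an additional hypothesis or derived from parabolic regularity theory for the non-local operator; in the absence of such regularity the identity holds only in the viscosity sense established earlier. I would therefore frame the corollary as: provided $\Phi(\cdot,\cdot)$ is smooth enough, the PIDE (\ref{pide2}) follows from (\ref{simple_eqn}) by the calculation above; otherwise it is to be interpreted in the viscosity sense inherited from the preceding proposition.
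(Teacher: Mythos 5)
Your proposal is correct and matches the paper's own treatment: the paper omits the proof of this corollary precisely because it "follows directly as a special case of the results above," i.e.\ the regime-switching PIDE with a single regime, and your self-contained second paragraph faithfully reproduces the Leibniz-rule/Kolmogorov-equation calculation of Lemma \ref{pides-1}(i) specialized to the integral equation (\ref{simple_eqn}). Your closing remark on regularity is also consistent with the paper, which handles the classical interpretation separately via Proposition \ref{regularity-result}.
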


\begin{remark} \label{mis}
	It is worth noting that the approach we have developed results in PIDEs consistent with the ruin probability examined in \cite{mishura2016ruin}, where the asset process is given by:
	$$X_t(x)  = x + \int_{0}^{t} r(X_s(x) + c) ds + \sum_{i=1}^{N_t} (-Y_i),$$ with jump distribution $f(y)$ on $\mathbb{R_+}$ and jump intensity $\lambda$. It is shown that the survival probability $\phi(x,t) = \mathbb{P}\Big(\inf_{\substack{r \leq t}} X_r(x) > 0\Big)$ satisfies the PIDE:
	$$ \frac{\partial \phi}{\partial t} - (rx+c)\frac{\partial \phi}{\partial x} +  \lambda \Big(\phi(x,t) - \int_{0}^{x} \phi(x-y,t) dF(y) \Big)  = 0.$$
	Noting that $\phi(x,t) = 0$ for $x<0$, this can be rewritten as:
	$$ \frac{\partial \phi}{\partial t} - (rx+c)\frac{\partial \phi}{\partial x} -  \lambda \Big( \int_{ \mathbb{R}} \big(\phi\big(x+(-y),t\big) - \phi(x,t)\big) dF(y) \Big)  = 0,$$
	and therefore, from the definition of the L\'evy measure $\nu(\cdot)$, we obtain: 
	$$ \frac{\partial \phi}{\partial t} = (rx+c)\frac{\partial \phi}{\partial x} + \int_{\mathbb{R}} \Big(\phi\big(x+(-y),t\big) - \phi(x,t) \Big)\nu(dy)   = 0,$$
	We can see that this PIDE is equivalent to that in (\ref{pide2}), given that the diffusion term is zero and therefore the second derivative to $x$ is zero. This confirms that our results are consistent with existing definitions and models that have been studied in the literature, with the important difference that we take advantage of the integral equations and corresponding continuity results, which allow us to consider more complex models and to obtain appropriate solutions in all cases.
\end{remark}

\subsection{Regularity of the one-dimensional PD function}
Finally, we study the case of the one dimensional model to show that the resulting survival and PD function enjoys the properties required to consider equation (\ref{pide2}) in the strict sense. Specifically, this requires that $\Phi(x,u)$ is (at least) twice and once differentiable with respect to the spatial and temporal variables, respectively. These results are based on the corresponding calculations for the general family of second order parabolic PIDEs, as studied in \cite{garroni1992green}. A reminder of the main results that we will use are given in Appendix \ref{reg-par-pde}.
\par We first rewrite (\ref{pide2}) in a concise form for the calculations that follow. Note that we adopt the notation of the appropriate spaces used in \cite{garroni1992green}, which is also used in Appendix \ref{reg-par-pde}. It will be useful to set $\mathcal{D} := (0,\infty)$. This will allow us to consider a smooth initial condition, rather than the Heaviside function as in formulation of the PIDE in Corollary \ref{equiv-int}, since $\Phi(x,u) = 0$, for $x \leq 0$, by definition. Then, for a fixed time until maturity $T >0$, we write:  
\begin{eqnarray}\label{pde-reg-one-ou}
	\begin{cases}
		L\Phi(x,u) = I\Phi(x,u) & \text{ for } (x,u) \in Q_T:= \mathcal{D} \times [0,T] \\ 
		\Phi(x,0)= 1 & \text { for } x \in \mathcal{D} \\
		\Phi(x,t) =  \mathbbm{1}_{x > 0} & \text{ for } x \in \Sigma_T := \partial \mathcal{D} \times [0,T],
	\end{cases}
\end{eqnarray}
where we define the operator $L$ by $L\Phi(x,u) := \frac{\partial \Phi}{\partial u} - \mathcal{L}\Phi(x,u)$ and the integral operator $I$ by $I\Phi(x,u) :=\int_{\mathbb{R}}\Big(\Phi(x+z,u)-\Phi(x,u)\Big) \nu(dz)$, and $\partial \mathcal{D}$ is the standard notation for the boundary of domain $\mathcal{D}$. Our aim is to show that the above PIDE has a solution satisfying appropriate regularity conditions. To this end, we first define some relevant function spaces that will be required for the subsequent regularity results.

\begin{definition}
	Consider $\Omega \subset \mathbb{R}^n$ an open set, with closure $\bar{\Omega}$. Furthermore, consider a fixed time horizon $T >0$ and define $Q_T = \Omega \times [0,T]$, with closure $\bar{Q}_T$. We then define the following spaces, for $0<\alpha<1$: 
	\begin{itemize}
		\item $C^0(\bar{\Omega})$ is the Banach space of bounded continuous functions in $\bar{\Omega}$, with the natural supremum norm:
		$$
		\|\cdot\|_{C^0(\bar{\Omega})} \equiv\|\cdot\|_{0, \bar{\Omega}}=\sup _{\Omega}|\cdot|
		$$
		\item $C^{2,1}\left(\bar{Q}_T\right)$ is the Banach space of functions $\varphi(x, t)$ belonging to $C^0\left(\bar{Q}_T\right)$ together their derivatives $\frac{\partial f}{\partial x}, \frac{\partial^2 f}{\partial x^2}, \frac{\partial f}{\partial t}$ in $\bar{Q}_T$ with natural norm.
		
		\item $C^{\alpha, \frac{\alpha}{2}}\left(\bar{Q}_T\right)$ is the Banach space of function $\varphi$ in $C^0\left(\bar{Q}_T\right)$ which are Hölder continuous in $\bar{Q}_T$ with exponent $\alpha$ in $x$ and $\frac{\alpha}{2}$ in $t$ i.e. having a finite value for the seminorm
		$$
		\langle f \rangle_{\bar{Q}_T}^{(\alpha)} \equiv\langle f \rangle_{x, \bar{Q}_T}^{(\alpha)}+\langle f \rangle_{t, \bar{Q}_T}^{\left(\frac{\alpha}{2}\right)}
		$$
		where
		$$
		\begin{aligned}
			&\langle f \rangle_{x, \bar{Q}_T}^{(\alpha)}=\inf \left\{C \geq 0:\left|f(x, t)-f\left(x^{\prime}, t\right)\right| \leq C\left|x-x^{\prime}\right|^\alpha, \forall x, x^{\prime}, t\right\} \\
			&\langle f \rangle_{t, \bar{Q}_T}^{\left(\frac{\alpha}{2}\right)}=\inf \left\{C \geq 0:\left|f(x, t)-f \left(x, t^{\prime}\right)\right| \leq C\left|t-t^{\prime}\right|^{\frac{\alpha}{2}}, \forall x, t, t^{\prime}\right\}
		\end{aligned}
		$$
		The quantity
		$$
		\|f\|_{C^{\alpha, \frac{\alpha}{2}}\left(\bar{Q}_T\right)} \equiv\|f\|_{\alpha, \bar{Q}_T}=\|f\|_{0, \bar{Q}_T}+\langle f\rangle_{\bar{Q}_T}^{(\alpha)}
		$$
		defines a norm.
		\item $C^{2+\alpha, \frac{2+\alpha}{2}}\left(\bar{Q}_T\right)$ is the Banach space of functions $f(x, t)$ in $C^{2,1}\left(\bar{Q}_T\right)$ having a finite value for the seminorm:
		$$
		\langle f\rangle_{\bar{Q}_T}^{(2+\alpha)}=\left\langle\partial_t f\right\rangle_{\bar{Q}_T}^{(\alpha)}+\sum_{i, j=1}^d\left\langle\partial_{i j} f\right\rangle_{\bar{Q}_T}^{(\alpha)}+\sum_{i=1}^d\left\langle\partial_i f\right\rangle_{t, \bar{Q}_T}^{\frac{1+\alpha}{2}} .
		$$
		Then, the quantity
		$$
		\|f\|_{C^{2+\alpha, \frac{2+\alpha}{2}}\left(\bar{Q}_T\right)} \equiv\|f\|_{2+\alpha, \bar{Q}_T}=\sum_{2 r+s \leq 2}\left\|\partial_t^r \partial_x^s f\right\|_{0, \bar{Q}_T}+\langle f \rangle_{\bar{Q}_T}^{(2+\alpha)}
		$$
		defines a norm.
	\end{itemize}
\end{definition} 

\begin{proposition}\label{regularity-result}
	Consider a fixed time horizon $T>0$ and the space $Q_T:= \mathcal{D} \times [0,T]$, with closure $\bar{Q}_T$. Then, PDE (\ref{pde-reg-one-ou}) has a solution $\Phi(x,u)$, such that $\Phi(x,u) \in C^{2+\alpha, \frac{2+\alpha}{2}}\left(\bar{Q}_T\right)$.
\end{proposition}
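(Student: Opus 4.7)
The plan is to invoke the Schauder/Green's function theory for second order parabolic integro-differential operators developed in \cite{garroni1992green}, with the integral (nonlocal) term $I\Phi$ treated as a lower order perturbation of the parabolic operator $L$. The key structural observations are: (i) $L$ is uniformly parabolic with smooth (in fact constant) coefficients on $\mathcal{D}\times[0,T]$, since the diffusion coefficient $\sigma>0$ is strictly positive and the drift $k(\theta-x)$ is affine; (ii) because the L\'evy measure is finite ($\nu(\mathbb{R})=\lambda<\infty$) and the jump distribution $F$ is a probability measure, the integral operator $I$ is a bounded linear operator on $C^{\alpha,\alpha/2}(\bar{Q}_T)$ into itself, with $\|I\varphi\|_{\alpha,\bar{Q}_T}\le 2\lambda\,\|\varphi\|_{\alpha,\bar{Q}_T}$; (iii) from Proposition \ref{schauder} and Lemma \ref{cont-res} we already have a bounded, uniformly continuous solution $\Phi$ of the integral equation, which is the natural candidate, and it is bounded in $[0,1]$.

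First, I would set up a Picard-type iteration: define $\Phi_0 \equiv Q(x,u)$ (the known smooth survival probability of the non-jump OU, obtained from Appendix \ref{gens-cou}), and recursively solve the linear parabolic initial-boundary value problem
\begin{equation*}
L\Phi_{n+1}=I\Phi_n \text{ in } Q_T,\qquad \Phi_{n+1}(x,0)=1 \text{ in } \mathcal{D},\qquad \Phi_{n+1}(0,u)=0 \text{ for } u\in[0,T].
\end{equation*}
By classical Schauder theory for uniformly parabolic operators with H\"older coefficients (Theorem 3.4 of \cite{garroni1992green}, or the analogous statement in the appendix), provided the right-hand side $I\Phi_n\in C^{\alpha,\alpha/2}(\bar{Q}_T)$ and the data are compatible, the unique solution $\Phi_{n+1}$ lies in $C^{2+\alpha,(2+\alpha)/2}(\bar{Q}_T)$, with a quantitative Schauder estimate
\begin{equation*}
\|\Phi_{n+1}\|_{2+\alpha,\bar{Q}_T} \le C\bigl(\|I\Phi_n\|_{\alpha,\bar{Q}_T}+\|\text{data}\|\bigr).
\end{equation*}
Iterating, and using the boundedness of $I$ in $C^{\alpha,\alpha/2}$ together with the exponential weight $e^{-\lambda u}$ inherited from the integral equation (\ref{simple_eqn}), I would show that $\{\Phi_n\}$ is a Cauchy sequence in $C^{\alpha,\alpha/2}(\bar{Q}_T)$. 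By the uniform Schauder bound, the limit $\Phi_\infty$ actually lies in $C^{2+\alpha,(2+\alpha)/2}(\bar{Q}_T)$ and solves $L\Phi_\infty=I\Phi_\infty$. Uniqueness of the integral equation (Proposition \ref{schauder}), combined with the fact that any classical solution of (\ref{pde-reg-one-ou}) also solves (\ref{simple_eqn}) by It\^o's formula applied to the jump diffusion, then identifies $\Phi_\infty$ with the probabilistic $\Phi$ of (\ref{gen-pd}).

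The main obstacle is the corner incompatibility at $(0,0)$: the initial datum equals $1$ on $\mathcal{D}=(0,\infty)$ while the lateral datum equals $0$ at $x=0$ for $u>0$, so strict compatibility conditions required by Schauder theory at the parabolic boundary fail. I would handle this in one of two standard ways: either restrict to the open region $Q_T^\delta := (\delta,\infty)\times[0,T]$ for $\delta>0$, obtain $C^{2+\alpha,(2+\alpha)/2}$ regularity on every such interior subdomain (since coefficients and data are smooth there), and deduce interior regularity on $Q_T$ by letting $\delta\downarrow 0$; or, more in line with the statement, regularize the initial datum by a smooth cut-off $\chi_\varepsilon(x)$ vanishing near $x=0$ and equal to $1$ away from it, obtain a family of compatible problems with $C^{2+\alpha,(2+\alpha)/2}$ solutions $\Phi^\varepsilon$, and pass to the limit using the uniform Schauder estimates and the contraction properties of $I$. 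A secondary, more technical issue is controlling the behaviour as $x\to\infty$; this is resolved by the boundary condition $\Phi\to 1$ and by noting that $1-\Phi$ decays, which allows the use of weighted H\"older spaces or a change of variables that compactifies the domain, as in Section 3 of \cite{garroni1992green}.
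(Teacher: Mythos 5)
Your proposal follows essentially the same strategy as the paper: freeze the nonlocal term, solve the resulting linear uniformly parabolic problem via the Schauder theory of \cite{garroni1992green} (Theorem \ref{reg-theorem}), control the integral operator via Lemma \ref{int-op}, and recover the solution of the full problem by a fixed-point mechanism. The one substantive difference is how the convergence of that mechanism is established. The paper defines the map $\mathcal{T}v=\Phi$ with $L\Phi=Iv$ and makes it a contraction by shrinking the time horizon to $[0,\delta]$ so that the constant $C(\varepsilon)$ in the estimate $\|I\hat v\|_{\alpha,\bar Q_T}\le\varepsilon\|\nabla\hat v\|_{\alpha,\bar Q_T}+C(\varepsilon)\|\hat v\|_{\alpha,\bar Q_T}$ drops below one, then patches the intervals $[k\delta,(k+1)\delta]$ together; your Picard iteration started from $Q(x,u)$ is the constructive version of the same argument, but your claim that the Cauchy property follows from ``the exponential weight $e^{-\lambda u}$ inherited from the integral equation'' is not a proof — that weight lives in the integral-equation formulation, not in the H\"older norms in which the Schauder estimate is stated, so you would still need the time-slicing (or an explicit weighted-norm argument) to close the contraction. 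On the credit side, you explicitly raise the corner incompatibility at $(0,0)$ between the initial datum $\Phi(x,0)=1$ and the lateral datum $\Phi(0,u)=0$, and the unbounded-domain issue as $x\to\infty$; the paper's proof is silent on both, even though Theorem \ref{reg-theorem} as stated requires compatible data on a bounded domain, so your proposed cut-off/interior-regularity workaround addresses a real gap rather than inventing a spurious one.
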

\begin{proof}
	The proof relies on an appropriate fixed point argument. To this end, we first define the mapping $\mathcal{T}v = \Phi$, such that $v$ is a solution of $L\Phi(x,u) = Iv$. Note that from Theorem \ref{reg-theorem}, there exists a unique $\Phi \in  C^{2+\alpha, \frac{2+\alpha}{2}}\left(\bar{Q}_T\right)$ solving the local counterpart of (\ref{pde-reg-one-ou}), where the right hand side of the PDE is zero.
	\par Consider now a function $v \in  C^{2+\alpha, \frac{2+\alpha}{2}}\left(\bar{Q}_T\right)$. It follows that $ Iv \in C^{\alpha, \frac{\alpha}{2}}\left(\bar{Q}_T\right)$ and we also have that: 
	$$\|I v\|_{C^{\alpha, \frac{\alpha}{2}}\left(\bar{Q}_T\right)} \leq \varepsilon\|\nabla v\|_{C^{\alpha, \frac{\alpha}{2}}\left(\bar{Q}_T\right)}+C(\varepsilon)\|v\|_{C^{\alpha, \frac{\alpha}{2}}\left(\bar{Q}_T\right)},$$ from Theorem \ref{int-op}. Furthermore, by definition of the mapping $\mathcal{T}$ we have that if $v \in C^{2+\alpha, \frac{2+\alpha}{2}}\left(\bar{Q}_T\right)$  then $\Phi = \mathcal{T}v \in C^{2+\alpha, \frac{2+\alpha}{2}}\left(\bar{Q}_T\right)$. Hence, $\mathcal{T}$ is a map from $ C^{2+\alpha, \frac{2+\alpha}{2}}\left(\bar{Q}_T\right)$ onto itself and is also single-valued, by the uniqueness of the solution of the PDE.  
	\par We will now show that $\mathcal{T}$ is also a contraction in order to then apply Banach's fixed point argument. To this end, consider $v, v' \in  C^{2+\alpha, \frac{2+\alpha}{2}}\left(\bar{Q}_T\right)$, with the corresponding mappings $Tv, Tv' \in  C^{2+\alpha, \frac{2+\alpha}{2}}\left(\bar{Q}_T\right)$. By the definition of the mapping $T$ have that: 
	\begin{eqnarray}\label{pde-v-v}
		\begin{cases}
			L\Phi(x,u) = Iv & \text{ for } (x,u) \in Q_T:= \mathcal{D} \times [0,T] \\ 
			L\Phi'(x,u) = Iv' & \text{ for } (x,u) \in Q_T:= \mathcal{D} \times [0,T], \\ 
		\end{cases}
	\end{eqnarray}
	and therefore $L\hat{\Phi}(x,u) = I\hat{v}$, with $\hat{\Phi}:= \Phi - \Phi'$ and $\hat{v}$ is defined analogously. Hence:
	\begin{flalign}
		\|\hat{\Phi}\|_{C^{\alpha, \frac{\alpha}{2}}\left(\bar{Q}_T\right)} = &\|\mathcal{T}v - \mathcal{T}v'\|_{C^{\alpha, \frac{\alpha}{2}}\left(\bar{Q}_T\right)} \leq C \|I \hat{v} \|_{C^{\alpha, \frac{\alpha}{2}}\left(\bar{Q}_T\right)} \notag \\\leq  & \varepsilon\|\nabla \hat{v}\|_{C^{\alpha, \frac{\alpha}{2}}\left(\bar{Q}_T\right)}+C(\varepsilon)\|\hat{v}\|_{C^{\alpha, \frac{\alpha}{2}}\left(\bar{Q}_T\right)},
	\end{flalign}
	where the last inequality follows from Theorem \ref{int-op}. To show that $\mathcal{T}$ is indeed a contraction, we first note that all terms in the final expression above are bounded by $\|\hat{v}\|_{C^{\alpha, \frac{\alpha}{2}}\left(\bar{Q}_T\right)}$. We need $\|\hat{\Phi}\|_{C^{\alpha, \frac{\alpha}{2}}\left(\bar{Q}_T\right)} \leq k \|\hat{v}\|_{C^{\alpha, \frac{\alpha}{2}}\left(\bar{Q}_T\right)}$, with $k<1$. For this, notice that the first term in the final expression above can be made arbitrarily small, however the second depends on the $C$ value, which in turn depends on the time horizon $T$. We can therefore make $C(\epsilon)<1$ if we consider a small enough horizon, i.e., $T = \delta$, creating a solution $\Phi(x,u) \in C^{2+\alpha, \frac{2+\alpha}{2}}\left(\bar{Q}_{\delta}\right)$. We can apply this approach to find an appropriate solution in $\Omega \times [\delta, 2\delta]$, continuing until the entire interval $[0,T]$ is covered. 
\end{proof}
\begin{remark}
	Based on the results given by \cite{garroni1992green}, the result above can be extended for higher dimensions with $x \in \mathbb{R}^d$ and an analogous parabolic operator $L$. Therefore, Proposition \ref{regularity-result} holds for the stochastic volatility model, as well. Furthermore, in the case of the regime switching model we obtain a simple coupling of parabolic PIDEs (identical to that for the one dimensional model), and it is therefore expected that the regularity result holds for the PD function $\Phi(x,\cdot,u)$.  
\end{remark}
With this result we have shown that we can go beyond the notion of viscosity solutions in the case of the PD functions and obtain solutions that satisfy all required regularity properties. In the next paper we will consider numerical solutions to the PIDEs, some of which we can now interpret as strong solutions, under the conditions mentioned above.

	\section{Numerical estimation of PD functions} \label{num-schemes}
In this paper, we will develop numerical schemes to solve the PIDEs obtained in Section \ref{pides-visc} and use the resulting PD values in specific examples of the aforementioned IFRS 9 modelling tasks. We choose to focus on the numerical solutions of the PIDEs, rather than the corresponding integral equations, as we will be able to employ standard finite difference schemes to estimate the solutions, as detailed below. For clarity and illustrative purposes, we will first consider the one dimensional OU model given by (\ref{hom-gou}) (recall that we have shown that the PD function $\Phi(x,u) \in  C^{2+\alpha, \frac{2+\alpha}{2}}\left(\bar{Q}_T\right)$ is a strong solution to the corresponding PIDE). From the resulting finite difference scheme we will then be able to build the solutions for the regime switching and stochastic volatility models. We will focus on these models for our numerical solutions, as they cover the applications in credit risk that we consider in this paper, noting the case of the generalized model (\ref{model-gen-def}) can be developed by combining the methods that follow. However, due to the additional terms, the corresponding numerical scheme suffers from the well-known "curse of dimensionality" problem. 
\par Our approach in this Section follows the methodology developed in \cite{cont2005finite} and \cite{d2005robust}. We extend the numerical schemes by considering variable coefficients and further develop the corresponding methods for the regime switching and stochastic volatility models. Due to the additional variables, we will see that these require careful handling of the derivative discretizations to ensure the required stability and monotonicity properties hold. As mentioned, we will start with the one dimensional model, which produces the PIDE given in (\ref{pide2}), the finite difference scheme for which is similar to that developed in \cite{cont2005finite}. However, this first step will allow us to explicitly account for the variable drift term and detail its effect on the finite difference scheme and is therefore worth presenting the analytical calculations.

\subsection{One dimensional model}
\par Before implementing the numerical methods, it is important to discuss the spatial and temporal domains over which the schemes will be solved. We consider a spatial domain $x\in\mathcal{D} \subset \mathbb{R}$. Therefore, for the construction of the numerical scheme one can consider the interval $x\in [0,S]$ with non-trivial solutions $\Phi(x,u) \in (0,1)$ (in practice, the value of $S$ depends on the parameters of the underlying processes and its approximation may require Monte Carlo simulations). 
However, given that the PIDEs contain the non-local integral terms, and the OU process is defined on $\mathbb{R}$, we will define $\mathcal{D} = [-B, B]$, for some constant $B>S$ and extend the boundary conditions $\Phi(x,\cdot)=0$ for $x\in[-B,0)$ and $\Phi(x,\cdot)=1$ for $x\in(S,B]$. This way, we will be able to calculate the integral term, as detailed below. For the temporal domain, we simply consider $t \in [0,1]$ (we rewrite $u$ as $t$ as there is not risk of confusion in what follows). Given the added complexity from the non-local term, we give a detailed explanations of each of the three aforementioned schemes in this section, along with examples of specific asset value processes and, subsequently, examples of the modelling tasks pertaining to credit risk under the IFRS 9 framework we previously discussed.
\par We write (\ref{pide2}) as follows:
\begin{eqnarray} \label{pide3}
	\frac{\partial \Phi}{\partial t}=k(\theta-x) \frac{\partial \Phi}{\partial x} + \frac{1}{2} \sigma^{2} \frac{\partial^{2} \Phi}{\partial x^{2}} +\int_{ \mathbb{R}}\Phi(x+z,t)\nu(dz) - \Phi(x,t)\int_{ \mathbb{R}} \nu(dz). 
\end{eqnarray}
Note that in the discretized version of this PIDE we will also have to approximate the integral with respect to the L\'evy measure. We employ an implicit scheme leading to a backward time centered space (BTCS) method, and handle the non-local term explicitly, as in \cite{cont2005finite}. Consider space and time grids, with step sizes $\Delta x$ and $\Delta t$, and with $N$ and $T$ total points, respectively. Therefore, we have that $\Phi^q_{p}$ represents the survival probability at the grid point $t=t_0+ q\Delta t$, $x=x_0+p\Delta x$, i.e. $\Phi^q_p = \Phi(x_0+p \Delta x,t_0+ q\Delta t)$. Furthermore, let $L$, $D$ and $U$ be number of grid points in the intervals $[-B,0)$, $[0,S]$ and $(S,B]$, respectively, so that $N=L+D+U$. 
\par For the integral terms, we first must approximate the jump density by considering a ball around the $x$-value of the grid:
\begin{eqnarray}
	\bar{f_i} = \frac{1}{\Delta x}\int_{x_i-\frac{\Delta x}{2}}^{x_i+\frac{\Delta x}{2}} f(x)dx.	
\end{eqnarray}
Then, noting that $\nu(dz) = \lambda F(dz)$, we can approximate the first and second integral terms in (\ref{pide3}) using:
\begin{eqnarray} \label{integral_term}
	\mathcal{I}\Phi^q_p := \sum_{i=-J/2}^{J/2}\Phi_{p+i}^{q} \bar{f}_{i} \Delta z, \\
	\hat{I} =  \sum_{i=-J/2}^{J/2}\bar{f}_{i} \Delta z, \label{int_2}
\end{eqnarray}
for some $ J \in \mathbb{Z}_+$ large enough to ensure that $\hat{I}$ is sufficiently close to $1$. In the above, we have defined the operator $\mathcal{I}: \mathcal{C} \rightarrow \mathcal{C}$, where $\mathcal{C}$ is the Banach space as defined in Proposition \ref{schauder}. We will refer to this as the integral operator.
For simplicity, we will be taking $\Delta x = \Delta z$ in the calculations and numerical results below.
\par The resulting implicit scheme for PIDE (\ref{pide3}) is given by:
\begin{flalign}\label{BTCS}
	\frac{\Phi_{p}^{q+1}-\Phi_{p}^{q}}{\Delta t}=k(\theta - x_p) \frac{\Phi_{p+1}^{q+1}-\Phi_{p-1}^{q+1}}{2 \Delta x}+\frac{1}{2}\sigma^2 \frac{\Phi_{p+1}^{q+1}-2 \Phi_{p}^{q+1}+\Phi_{p-1}^{q+1}}{\Delta x^2} +\lambda \mathcal{I}\Phi^q_{p} - \lambda \hat{I} \Phi_{p}^{q},
\end{flalign} 
which, upon rearranging, can be written as: 
\begin{eqnarray}\label{ou-scheme}
	- \Phi_{p-1}^{q+1} c_{p} \Delta t   + \Phi_p^{q+1}\big(1+ a_{p}\Delta t\big) - \Phi_{p-1}^{q+1} b_{p} \Delta t  = (1- \lambda \Delta t \hat{I})\Phi_p^q + \lambda \Delta t \mathcal{I}\Phi_p^q,
\end{eqnarray}
for $ q = 1,2,\dots T-1$ and with coefficients $a_{p}, b_{p}$ and $c_{p},$ for $p = 0,1,\dots, N-1$, given by: 
\begin{eqnarray}
	c_{p} = \frac{\sigma^2 }{2\Delta x^2} - \frac{k(\theta-x_p)}{2\Delta x} \notag\\
	a_{p} =   \frac{\sigma^2}{\Delta x^2} \notag\\
	b_{p} = \frac{\sigma^2 }{2\Delta x^2} +\frac{k(\theta-x_p)}{2\Delta x} 
\end{eqnarray}
Hence, system (\ref{BTCS}) can be written in the matrix form below:
$$ M \Phi^{q+1} = \Lambda \Phi^q + b, \text{ for } q = 0, 1, \dots, T-1,$$
where $\Phi^q, b \in \mathbb{R}^{N}$ and $M \in \mathbb{R}^{N\times N}$ are given by:
\begin{eqnarray} \label{matrices_1}
	\Phi^{q} =\left( \begin{array}{c}
		\Phi^q_0 \\
		\Phi^q_1 \\
		\vdots \\
		\Phi^q_{N-1}\\
	\end{array}
	\right),
	\,\,\,\,
	b=\left( \begin{array}{c}
		0 \\
		\vdots \\
		0 \\
		b_U \\
	\end{array}
	\right),
	\,\,\,\,
	M=\left[\  \begin{array}{cccccc}
		I_L & 0_{D} & 0_U  \\
		0_L & \mathcal{M} & 0_U\\
		0_L & 0_{D} & I_U \\
	\end{array}
	\right],
\end{eqnarray}	
with $b_U = (1,\cdots,1)^T \in \mathbb{R}^U$, $I_{n},0_n$ being the $n \times n$-dimensional identity and zero matrices, respectively, $\mathcal{M} \in \mathbb{R}^{D\times D}$ given by: 
\begin{flalign}\label{M2}
	\mathcal{M}=\left(  \begin{array}{cccccccccc}
		1& 0 & 0 & 0  &  \cdots & 0  & 0 &0 \\
		-c_{1}\Delta t & 1+a_{1}\Delta t & -b_{1}\Delta t  &  0 &  \cdots & 0  & 0\\
		0 & -c_{2}\Delta t & 1+a_{2}\Delta t & -b_{2}\Delta t  & \cdots & 0 &0  & 0 \\ 
		\ddots & \ddots & \ddots & \ddots &  \ddots & \ddots & \ddots & \ddots \\
		0 & 0 & 0 & 0 & \cdots&  -c_{{N-1}}\Delta t & 1+ a_{{N-1}}\Delta t & -b_{{N-1}}\Delta t\\
		0& 0 & 0 & 0  &  \cdots & 0  & 0 &1
	\end{array}
	\right),
\end{flalign}
and $\Lambda \in \mathbb{R}^{N \times N}$:
\begin{flalign} \label{matrices_2}
	\Lambda=\left(  \begin{array}{cccccccccc}
		\lambda \Delta t \bar{f}_{-1} & \lambda \Delta t \bar{f}_{0} + \hat{F} & \lambda \Delta t \bar{f}_{1}  & \cdots  & \lambda \Delta t \bar{f}_{J/2}  & 0  & \cdots &0\\
		\lambda \Delta t \bar{f}_{-2} &\lambda \Delta t \bar{f}_{-1}  & \lambda \Delta t \bar{f}_{0} + \hat{F}  & \cdots  & \lambda \Delta t \bar{f}_{J/2-1}  & \lambda \Delta t \bar{f}_{J/2}  & \cdots& 0\\
		\ddots & \ddots & \ddots & \ddots & \ddots & \ddots & \ddots & \ddots \\
		0 & 0 &0 & \cdots & \lambda \Delta t \bar{f}_{-J/2+1 } & \lambda \Delta t \bar{f}_{-J/2} & \cdots & \lambda \Delta t \bar{f}_{J/2} \\
	\end{array}
	\right),
\end{flalign}
where $\hat{F} := 1 - \lambda \Delta t \hat{I}$.
At each time step we can then calculate $\Phi^{q+1} = M^{-1} (\Lambda \Phi^q +b)$, to obtain the solution at time $t = q+1$. 
\par For the implementation of the numerical scheme we must analyze the necessary properties pertaining to its stability and monotonicity, for which we use the same definitions and approach as in \cite{cont2005finite}. Specifically, we define these conditions as follows. 
\begin{definition} \hfill
	\begin{itemize}
		\item[$(i)$] Scheme (\ref{BTCS}) is stable if and only if, for a bounded initial condition, there exist $\Delta x$ and $\Delta t$ such that the solution exists and is bounded, i.e. $|\Phi^q_p| \leq C,$ for all $p,q$ and some $C>0$.
		\item[$(ii)$] Scheme (\ref{BTCS}) is monotone, i.e. for two initial conditions $\Phi^0$ and $\tilde{\Phi}^0$: 
		$$ \Phi^0 > \tilde{\Phi}^0 \Rightarrow  \Phi^q > \tilde{\Phi}^q,$$ for all $q$. Note that the comparison of the two vectors is to be understood in the element-by-element sense. This condition is often referred to as the discrete comparison principle.
	\end{itemize}
\end{definition}
These conditions must hold in order to avoid spurious oscillations in the numerical solutions and nonsensical values. We will show that the numerical scheme for the PIDE we have obtained is conditionally stable and monotone. As we will see from the result below, the condition is not restrictive and can easily be satisfied when selecting the parameters of the scheme, without significant computational cost. 

\begin{proposition}\label{stab_mon}
	Scheme (\ref{BTCS}) is stable and monotone if
	\begin{eqnarray}
		\Delta x \leq \frac{\sigma ^2}{k\theta }\text{ and } \Delta t \leq \frac{1}{\lambda \hat{I}}.
	\end{eqnarray}	
\end{proposition}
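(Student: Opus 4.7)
The strategy is to reformulate both properties as sign/structural conditions on the matrices $M$ and $\Lambda$ appearing in the update $\Phi^{q+1} = M^{-1}(\Lambda\Phi^q + b)$, and then verify these conditions under the stated hypotheses on $\Delta x$ and $\Delta t$. The plan is to (i) deduce monotonicity from the fact that $M^{-1}$ and $\Lambda$ both have non-negative entries, and (ii) deduce stability from a discrete maximum principle that is a direct consequence of monotonicity together with the fact that $\mathbf{1}$ is (approximately) a stationary solution of the scheme.

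First I would handle the explicit part of the update. The entries of $\Lambda$ are either $\lambda\Delta t\,\bar f_i$ or $1-\lambda\Delta t\,\hat I$, and since $\bar f_i\ge 0$ by definition, non-negativity of $\Lambda$ reduces to $1-\lambda\Delta t\,\hat I\ge 0$, i.e.\ to the temporal condition $\Delta t\le 1/(\lambda\hat I)$. Next I would turn to $M$, whose non-trivial block $\mathcal M$ is tridiagonal with diagonal entries $1+a_p\Delta t>0$ and off-diagonal entries $-c_p\Delta t$ and $-b_p\Delta t$. For these off-diagonals to be non-positive I need $c_p,b_p\ge 0$ for every interior grid point $x_p\in[0,S]$; writing out $c_p$ and $b_p$, the worst case occurs when $\theta-x_p$ is largest in absolute value, and after bounding $|\theta-x_p|$ by $\theta$ on the relevant grid this yields exactly the spatial condition $\Delta x\le\sigma^2/(k\theta)$. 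A one-line calculation shows that $c_p\Delta t+b_p\Delta t = (\sigma^2/\Delta x^2)\Delta t = a_p\Delta t<1+a_p\Delta t$, so $\mathcal M$ (and hence $M$, because the remaining blocks are identity) is strictly diagonally dominant with positive diagonal and non-positive off-diagonals, i.e.\ an M-matrix. Standard theory then gives $M^{-1}\ge 0$ componentwise.

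With $M^{-1}\ge 0$, $\Lambda\ge 0$ and $b\ge 0$, monotonicity follows immediately: if $\Phi^0\ge\widetilde\Phi^0$ componentwise, then $\Phi^{q+1}-\widetilde\Phi^{q+1} = M^{-1}\Lambda(\Phi^q-\widetilde\Phi^q)\ge 0$ by induction on $q$. For stability I would exploit a comparison argument: observe that the constant vector with $\Phi_p\equiv 1$ is a super-solution of the scheme (the differential part annihilates it and the integral part contributes at most $\lambda\Delta t\,\hat I\le 1$, matching the boundary data $b$ exactly on $(S,B]$), and similarly $\Phi_p\equiv 0$ is a sub-solution. Monotonicity then forces $0\le\Phi_p^q\le 1$ uniformly in $p,q$, giving the required bound.

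The main obstacle I foresee is the book-keeping at the interface between the interior block $\mathcal M$ and the identity blocks coming from the extended boundary data on $[-B,0)$ and $(S,B]$: one has to check carefully that the discrete comparison principle is not spoiled by the non-local operator $\mathcal I$ reaching outside $[0,S]$, and that the supersolution/sub\-solution argument used for stability is compatible with the constant boundary values $0$ and $1$. The verification that $\hat I\le 1$ (which underlies the compatibility of the constant-$1$ supersolution with the integral term) is essentially built into the definition of $\hat I$ in \eqref{int_2}, but it should be stated explicitly; beyond this, the remaining steps reduce to the routine M-matrix computation sketched above.
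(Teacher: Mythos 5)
Your proof is correct, and it takes a genuinely different route from the paper's. The paper proves stability and monotonicity separately, each by an induction-and-contradiction argument at the extremal grid point $p_0$: it uses the identity $a_p=b_p+c_p$ to write $\|\Phi^{q+1}\|_\infty$ as a convex-type combination of neighbouring values, and the two hypotheses enter exactly where they enter for you (non-negativity of $b_p,c_p$ for the spatial condition, non-negativity of $1-\lambda\Delta t\,\hat I$ for the temporal one). You instead package the same sign information structurally: the interior block $\mathcal M$ of (\ref{M2}) has positive diagonal, non-positive off-diagonals under the spatial condition, and row sums equal to $1$ (again because $b_p+c_p=a_p$), so it is a strictly diagonally dominant M-matrix with $M^{-1}\ge 0$; combined with $\Lambda\ge 0$ under the temporal condition, monotonicity is immediate from $\Phi^{q+1}-\widetilde\Phi^{q+1}=M^{-1}\Lambda(\Phi^q-\widetilde\Phi^q)$, and stability follows by comparison with the constant super- and sub-solutions $\mathbf 1$ and $\mathbf 0$ (note that for $\mathbf 1$ the cancellation is exact, since $\mathcal I\mathbf 1=\hat I$ gives $1-\lambda\Delta t\hat I+\lambda\Delta t\hat I=1$, so no separate bound $\hat I\le 1$ is actually needed). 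Your version is more modular — monotonicity is proved once and stability falls out as a corollary, which transfers with no extra work to the regime-switching and stochastic-volatility schemes of Lemmata \ref{num-rs} and \ref{num-sv} — and it yields the sharper conclusion $0\le\Phi^q_p\le 1$; the paper's direct argument is self-contained and avoids invoking M-matrix theory. Two caveats you should make explicit if you write this up: the boundary/interface book-keeping you flag is real (the rows of $\Lambda$ corresponding to the extended regions $[-B,0)$ and $(S,B]$ must be consistent with the frozen boundary data for the comparison argument to close), and both you and the paper bound $|\theta-x_p|$ by $\theta$ on the grid, which silently assumes $S\le 2\theta$; neither issue is specific to your approach.
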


\begin{proof}
	We prove the two results separately, starting with stability. Let $\Phi^0$ be a bounded initial condition, i.e. $||\Phi^0||_{\infty} < \infty$. Following \cite{cont2005finite}, we will proceed by induction and contradiction. Let $||\Phi^q||_{\infty} \leq ||\Phi^0||_{\infty}$ and suppose $ ||\Phi^{q+1}||_{\infty} > ||\Phi^0||_{\infty}$. Therefore, there exists $p_0 \in \{1,\dots, N-1\}$ such that $|\Phi^{q+1}_{p_0}|=||\Phi^{q+1}||_{\infty}$ and $|\Phi^{q+1}_{p_0}| \geq |\Phi^{q+1}_{p}|$, for all $p \in \{1,\dots, N-1\}$. We will prove that this leads to a contradiction. Observing that $a_p = b_p + c_p$, we can write: 
	\begin{eqnarray}\label{res_1}
		||\Phi^{q+1}||_{\infty} = |\Phi^{q+1}_{p_0}| = \big[-c_{{p_0}}\Delta t + (1+a_{{p_0}}\Delta t) - b_{{p_0}}\Delta t \big]|\Phi^{q+1}_{p_0}|. 
	\end{eqnarray}
	To proceed, we will need coefficients $a_{{p}}, b_{{p}},c_{{p}}$ be non-negative. This is true for $a_{p}$, for all $p$. From the remaining coefficients, we obtain the condition: 
	\begin{eqnarray}\notag
		\frac{\sigma^2 }{2\Delta x^2} \geq  \frac{|k(\theta-x_p)|}{2\Delta x},
	\end{eqnarray} and hence by requiring: 
	$\sigma ^2 \geq k||\theta -x||_{\infty}  \Delta x$,
	we can ensure that the condition is satisfied at all points on the $x$-grid. Given that $\Phi$ is identically zero for $x <0$, this can be succinctly written as:
	\begin{eqnarray}\label{cond}
		\Delta x \leq \frac{\sigma ^2}{k\theta}.
	\end{eqnarray} 
	Continuing from (\ref{res_1}), and noting that all coefficients are non-negative, we now have: 
	\begin{align}
		||\Phi^{q+1}||_{\infty}  \leq -c_{{p_0}}\Delta t |\Phi^{q+1}_{p_0-1}| + (1+a_{{p_0}}\Delta t)|\Phi^{q+1}_{p_0}| - b_{{p_0}}\Delta t |\Phi^{q+1}_{p_0+1}| \notag \\ \leq |-c_{{p_0}}\Delta t \Phi^{q+1}_{p_0-1} + (1+a_{{p_0}}\Delta t)\Phi^{q+1}_{p_0} - b_{{p_0}}\Delta t \Phi^{q+1}_{p_0+1}|,
	\end{align}
	and from (\ref{ou-scheme}), in combination with the induction hypothesis, it follows that: 
	\begin{align}
		||\Phi^{q+1}||_{\infty}  \leq  |(1 - \lambda \Delta t \hat{I})\Phi^q_{p_0} + \lambda \Delta t \mathcal{I}\Phi^q_{p_0}| \leq ||\Phi^{0}||_{\infty},
	\end{align}
	where the last steps hold if $1-\lambda \Delta t\hat{I} \geq 0$, leading to the second condition. Then, since $\Phi^q_{p_0} \leq ||\Phi^q||_{\infty}$, and $\mathcal{I}\Phi^q_{p_0}\leq \hat{I}||\Phi^q||_{\infty}$, the above contradicts the assumption that $ ||\Phi^{q+1}||_{\infty} > ||\Phi^0||_{\infty}$. Hence, the scheme is stable, provided (\ref{cond}) is satisfied.
	\par Monotonicity is proven in a similar way. Specifically, let $\Phi^q,\Tilde{\Phi}^q$ be two solutions corresponding to initial conditions $\Phi^0,\Tilde{\Phi}^0$, respectively, with $\Phi^0 \geq \Tilde{\Phi}^0$ and define $d^q := \Phi^q-\Tilde{\Phi}^q$. We will proceed by induction and contradiction, as above. We have that $d^0 \geq 0$ and also assume $d^q \geq 0 $. We suppose that $d^{p+1} <0$, i.e., there exists $p_0$ such that $\inf_{\substack{p}} d^{q+1}_p = d^{q+1}_{p_0} <0$. Then:
	\begin{flalign}\label{res_2}
		\inf_{\substack{p}} d^{q+1}_p = d^{q+1}_{p_0} = -c_{{p_0}}\Delta t d^{q+1}_{p_0} + (1+a_{{p_0}}\Delta t) d^{q+1}_{p_0} - b_{{p_0}}\Delta t d^{q+1}_{p_0} \notag\\ \geq 
		-c_{{p_0}}\Delta t d^{q+1}_{p_0-1} + (1+a_{{p_0}}\Delta t)d^{q+1}_{p_0} - b_{{p_0}}\Delta t d^{q+1}_{p_0+1} 
		= (1 - \lambda \Delta t \hat{I}) d^{q}_{p_0} + \lambda \Delta t \mathcal{I}d^q_{p_0} \geq 0,
	\end{flalign} 
	where the last step follows from the induction hypothesis and we have supposed that condition (\ref{cond}) is satisfied. By contradiction, we therefore conclude that $d^{p+1} >0$, as required. 
\end{proof}

\subsection{Regime switching}

We now turn to the regime switching model, with regimes $r \in \mathcal{R}$ and a total of $R$ regimes. In the BTCS discretized version of $(\ref{int-markov})$ we let $\Phi^q_{p,r}$ represent the survival probability at the grid point $t_q=t_0+ q\Delta t$, $x_p=x_0+p\Delta x$, when the underlying Markov process is originally in state $r \in \mathcal{R}$, i.e., $\Phi^q_{p,r}=\Phi(x_p,r,t_q)$, with $p=0,1,\dots N-1, q=0,1,\dots T$, $r \in \mathcal{R}$. The discretized PIDE can be written as:
\begin{flalign}\label{ou-rs-scheme}
	-\Phi_{p-1,r}^{q+1} c_{p,r} \Delta t + \Phi_{p,r}^{q+1}\big(1+ a_{p,r}\Delta t  \big) - \Phi_{p-1,r}^{q+1} b_{p,r} &\Delta t - \sum_{j\neq r}q_{r j}\Phi^{q+1}_{p,j}  \Delta t \notag \\ &= (1- \lambda \Delta t \hat{I})\Phi_{p,r}^q + \lambda \Delta t \mathcal{I}\Phi_{p,r}^q 
\end{flalign}
where $\mathcal{I}\Phi^q_{p,r}$ and $\hat{I}$ are as in (\ref{integral_term}) and (\ref{int_2}), respectively. The coefficients of this scheme are then given by:
\begin{eqnarray}
	c_{p,r} = \frac{\sigma_r^2}{2\Delta x^2} - \frac{k_r(\theta_r-x_p)}{2\Delta x} \notag\\
	a_{p,r} = \frac{\sigma_r^2 }{\Delta x^2} + \sum_{j\neq r} q_{\rho j} \notag\\
	b_{p,r} = \frac{\sigma_r^2}{2\Delta x^2} + \frac{k_r(\theta_r-x_p)}{2\Delta x}.
\end{eqnarray}
In matrix notation, the regime switching PIDE can be written as:
\begin{eqnarray} \label{matrix-form}
	M^{RS}\Phi^{q+1} = \Lambda^{RS} \Phi^q+b^{RS}, 
\end{eqnarray}
where the block-form matrices  $\Phi, M^{RS}, \Lambda^{RS} \in \mathbb{R}^{NR \times NR}$ and $b^{RS} \in \mathbb{R}^{NR}$ are given by: 
\begin{eqnarray} \label{matrices}\notag
	\Phi^{q} =\left[ \begin{array}{c}
		\Phi^q_{\cdot,1} \\
		\Phi^q_{\cdot,2} \\
		\vdots\\
		\Phi^q_{\cdot,R}\\
	\end{array}
	\right], \,\,\
	b^{RS}=\left[\begin{array}{c}
		b \\
		b\\
		\vdots \\
		b \\
	\end{array}
	\right], \,\
	\Lambda^{RS}=\left[  \begin{array}{cccc}
		\Lambda  &0_{N} & \cdots & 0_{N} \\
		0_{N} &\Lambda& \cdots & 0_{N}  \\
		\vdots & \vdots & \vdots & \vdots \\
		0_{N} & 0_{N} &  \cdots  & \Lambda \\
	\end{array}
	\right], 
\end{eqnarray}
\begin{eqnarray}
	M^{RS}=\left[  \begin{array}{cccc}
		M_{r_1} & -\Delta t q_{r_1r_2} I_{N} & \cdots  & -\Delta tq_{r_1r_R} I_{N} \\
		-\Delta t q_{r_22r_1} I_{N}  & M_{r_2} & \cdots &  -\Delta t q_{r_2r_R} I_{N}  \\
		\vdots & \vdots & \vdots & \vdots \\
		-\Delta t q_{r_Rr_1} I_{N}  & -\Delta t q_{r_Rr_2} I_{N} & \cdots &  M_{r_R}\\
	\end{array}
	\right],
\end{eqnarray}
with $\Phi_{\cdot, r} ^ q = (\Phi_{0,r}^q, \dots \Phi_{N-1,r}^q)^T$, $b, \Lambda,$ as in (\ref{matrices_1}) and (\ref{matrices_2}) and the regime-specific matrices $M_{r_i}\in \mathbb{R}^{N \times N}$ for $r_i \in \mathcal{R}, i=1,\dots, R$, are as in (\ref{matrices_1}) by replacing $a_p, b_p,c_p$ with $a_{p,r},b_{p,r},c_{p,r}$.
As in the discretization of the PIDE for the one dimensional OU model, we will have to prove the appropriate stability and monotonicity results for the regime switching model. 
\begin{lemma}\label{num-rs}
	Scheme (\ref{ou-rs-scheme}) is stable and monotone if
	\begin{eqnarray}\label{cond_2}
		\Delta x \leq \frac{\sigma_r^2} {k_r \theta_r }\text{ and } \Delta t \leq \frac{1}{\lambda \hat{I}}
	\end{eqnarray}
	for all $r \in \mathcal{R}$.
\end{lemma}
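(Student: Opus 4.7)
The plan is to mirror the two-part argument used in Proposition \ref{stab_mon}, treating the Markov-switching coupling term $-\sum_{j \neq r} q_{rj}\Phi^{q+1}_{p,j}\Delta t$ as an additional off-diagonal contribution that must be absorbed into the same maximum/minimum principle estimate. The crucial structural observation is the identity
\begin{eqnarray}
    1 + a_{p,r}\Delta t - b_{p,r}\Delta t - c_{p,r}\Delta t - \sum_{j\neq r}q_{rj}\Delta t = 1, \notag
\end{eqnarray}
which comes from $b_{p,r}+c_{p,r}=\sigma_r^2/\Delta x^2$ and from the explicit form of $a_{p,r}$. The first inequality in (\ref{cond_2}) then ensures $b_{p,r},c_{p,r}\geq 0$ (exactly as in the proof of Proposition \ref{stab_mon}, uniformly in $r$), while non-negativity of $q_{rj}$ for $r\neq j$ is automatic from the definition of the CTMC generator $Q$ in Appendix \ref{Markov}.

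First, I would prove stability by induction and contradiction. Assuming $\|\Phi^q\|_\infty \le \|\Phi^0\|_\infty$ and supposing the maximum is exceeded at some $(p_0,r_0)$ at step $q+1$, I apply the identity above to write $|\Phi^{q+1}_{p_0,r_0}|$ as the same linear combination that appears on the left-hand side of (\ref{ou-rs-scheme}) (with $\Phi^{q+1}_{p_0\pm 1,r_0}$ and $\Phi^{q+1}_{p_0,j}$ all replaced by $\Phi^{q+1}_{p_0,r_0}$). Using non-negativity of $b_{p_0,r_0}$, $c_{p_0,r_0}$ and $q_{r_0 j}$, together with the maximality of $|\Phi^{q+1}_{p_0,r_0}|$, I replace each duplicate by the corresponding neighbouring grid-value in the direction that enlarges the expression, obtaining
\begin{eqnarray}
\|\Phi^{q+1}\|_\infty \leq \Big| -c_{p_0,r_0}\Delta t\, \Phi^{q+1}_{p_0-1,r_0} + (1+a_{p_0,r_0}\Delta t)\Phi^{q+1}_{p_0,r_0} - b_{p_0,r_0}\Delta t\, \Phi^{q+1}_{p_0+1,r_0} - \sum_{j\neq r_0}q_{r_0 j}\Delta t\, \Phi^{q+1}_{p_0,j}\Big|. \notag
\end{eqnarray}
The quantity inside the absolute value is, by (\ref{ou-rs-scheme}), equal to $(1-\lambda\Delta t\hat{I})\Phi^q_{p_0,r_0} + \lambda\Delta t\,\mathcal{I}\Phi^q_{p_0,r_0}$, which under the second condition in (\ref{cond_2}) is a convex combination of grid values at time $q$ and is therefore bounded by $\|\Phi^q\|_\infty \leq \|\Phi^0\|_\infty$, contradicting the hypothesis.

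Next, I would prove monotonicity by the same induction/contradiction template applied to the difference $d^q := \Phi^q - \tilde\Phi^q$, which (by linearity of the scheme) solves the same discrete system (\ref{ou-rs-scheme}) with initial data $d^0 \geq 0$. Supposing $d^{q+1}$ attains a strictly negative minimum at some $(p_0,r_0)$, the identity used in the stability argument again lets me rewrite $d^{q+1}_{p_0,r_0}$ and now replace the duplicated value by each neighbouring value in the direction that shrinks the expression; by non-negativity of $c_{p_0,r_0}$, $b_{p_0,r_0}$ and $q_{r_0 j}$ this yields
\begin{eqnarray}
d^{q+1}_{p_0,r_0} \geq (1-\lambda\Delta t\hat{I})\,d^q_{p_0,r_0} + \lambda\Delta t\,\mathcal{I} d^q_{p_0,r_0} \geq 0, \notag
\end{eqnarray}
where the second inequality uses the induction hypothesis $d^q \geq 0$ and the second condition in (\ref{cond_2}). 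This contradicts the assumption, so $d^{q+1}\geq 0$ and monotonicity follows.

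The main obstacle is purely bookkeeping: one must check that the Markov-switching terms have the correct sign in both steps, and that $a_{p,r}$ has been defined precisely so as to absorb them into the identity that gives a clean ``$1$'' on the right. No new analytic ingredient beyond Proposition \ref{stab_mon} is required, provided one is careful that the spatial condition $\Delta x \leq \sigma_r^2/(k_r\theta_r)$ is demanded for every $r \in \mathcal{R}$ (so that $b_{p,r},c_{p,r}\geq 0$ uniformly across regimes), which explains the ``for all $r$'' qualifier in the statement.
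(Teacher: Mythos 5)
Your proposal is correct and follows essentially the same route as the paper's proof: the same induction-and-contradiction maximum principle for stability, the same sign analysis of $a_{p,r},b_{p,r},c_{p,r}$ and $q_{r_0 j}$ under the two conditions, and the same argument applied to $d^q$ for monotonicity. The identity $1+a_{p,r}\Delta t - b_{p,r}\Delta t - c_{p,r}\Delta t - \sum_{j\neq r}q_{rj}\Delta t = 1$ that you make explicit is exactly the bracketed factor the paper uses implicitly in its first display, so no substantive difference remains.
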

\begin{proof}
	Let $\Phi^0$ be a bounded initial condition for the survival probability. Note that this initial condition accounts for all regimes. As above, we will proceed by induction and contradiction. Let $||\Phi^q||_{\infty} \leq ||\Phi^0||_{\infty}$ and suppose $ ||\Phi^{q+1}||_{\infty} > ||\Phi^0||_{\infty}$. In this case, this means that there exists $(p_0, r_0) \in \{0,1,\dots, N-1 \} \times  \mathcal{R}$ such that $|\Phi^{q+1}_{p_0,r_0}|=||\Phi^{q+1}||_{\infty}$, with $|\Phi^{q+1}_{p_0,r_0}| \geq |\Phi^{q+1}_{p,r}|$, for all $(p,r) \in \{0,1,\dots, N-1\} \times \mathcal{R}$. Hence:
	\begin{flalign}
		&||\Phi^{q+1}||_{\infty} =|\Phi_{p_0, r_0}^{q+1}|=\big[-c_{p_0,{r_0}}\Delta t + (1+a_{p_0,{r_0}}\Delta t) - b_{{p_0},{r_0}}\Delta t - \sum_{j\neq r_0} q_{r_0j} \Delta t \big] |\Phi^{q+1}_{p_0, r_0}| \notag \\& \leq -c_{{p_0},{r_0}} |\Phi^{q+1}_{p_0-1,{r_0}}|\Delta t + (1+a_{{p_0},{r_0}}\Delta t)|\Phi^{q+1}_{p_0,{r_0}}| - b_{{p_0},{r_0}} |\Phi^{q+1}_{p_0+1,{r_0}}| \Delta t - \sum_{j\neq r_0} q_{r_0j} |\Phi^{q+1}_{p_0,j}|\Delta t \notag \\& \leq |-c_{{p_0},{r_0}} \Phi^{q+1}_{p_0-1,{r_0}}\Delta t + (1+a_{{p_0},{r_0}}\Delta t)\Phi^{q+1}_{p_0,{r_0}} - b_{{p_0},{r_0}} \Phi^{q+1}_{p_0+1,{r_0}}\Delta t - \sum_{j\neq r}q_{r_0 j}\Phi^{q+1}_{p_0,j} \Delta t| \notag \\& \leq |(1 - \lambda \Delta t \hat{I})\Phi^q_{p_0,r_0} + \lambda \Delta t \mathcal{I}\Phi^q_{p_0,r_0}| \leq ||\Phi^{0}||_{\infty}, \notag
	\end{flalign}
	where the last inequality follows from the same calculations as in Proposition \ref{stab_mon}. In the above, we must have $a_{p,r}, b_{p,r}, c_{p,r}>0$ for each regime $r \in \mathcal{R}$, leading to the first condition in (\ref{cond_2}).
	\par For monotonicity, again let $\Phi^q,\Tilde{\Phi}^q$ be two solutions corresponding to $\Phi^0,\Tilde{\Phi}^0$, respectively, with $\Phi^0 \geq \Tilde{\Phi}^0$. Assume $d^q := \Phi^q-\Tilde{\Phi}^q>0$ and suppose $d^{q+1} \leq 0 $. Hence, there exists $p_0,r_0$ such that $\inf_{\substack{p,r}} d^{q+1}_{p,r} = d^{q+1}_{p_0,r_0} <0$. Proceeding as in Proposition \ref{stab_mon}:
	\begin{flalign}
		&\inf_{p,r}d^{q+1}_{p,r} = d^{q+1}_{p_0,r_0} =\big[-c_{{p_0},{r_0}}\Delta t + (1+a_{{p_0},{r_0}}\Delta t)  - \sum_{j\neq r_0} q_{r_0 j} \Delta t- b_{{p_0},{r_0}}\Delta t \big] d^{q+1}_{p_0, r_0} \notag \\& \geq  -c_{{p_0},{r_0}}d^{q+1}_{p_0-1,{r_0}}\Delta t + (1+a_{{p_0},{r_0}})d^{q+1}_{p_0,{r_0}}\Delta t - \sum_{j\neq r_0} q_{r_0 j} d^{q+1}_{p_0,j} \Delta t - b_{{p_0},{r_0}} d^{q+1}_{p_0+1,{r_0}}\Delta t \notag \\& = (1 - \lambda \Delta t \hat{I}) d^{q}_{p_0,r_0} + \lambda \Delta t \mathcal{I}d^q_{p_0,r_0}  \geq 0. \notag
	\end{flalign}
	Stability and monotonicity for scheme (\ref{ou-rs-scheme}) thus follow by contradiction. 
\end{proof}

\subsection{Stochastic volatility model}
Finally, we present the numerical scheme for model (\ref{pide-sv}). For this case, we must consider a discretization of the volatility process $y \in \mathcal{V}$ of size $V$. For the numerical implementation we use $y \in [0, Y_{\max}]$ for some appropriate value $Y_{\max}$. As above, we adopt the notation $\Phi^q_{p,j}$ for the survival probability at the grid point $t_q=t_0+ q\Delta t$, $x_p=x_0+p\Delta x$, and $y_j = y_0 +j\Delta y$, i.e. $\Phi^q_{p,j}=\Phi(x_p,y_j,t_q)$, with $p=0,1,\dots N-1, q=0,1,\dots T$ and $j=0,1, \dots, V-1$. 
\par In this case the discretization scheme requires an alternative approach. Specifically, when the coefficient of the diffusion term becomes 0 or $y \rightarrow 0$, the analogous to the previous stability and monotonicity conditions fails. The solution to this is to consider an Alternating Direction Implicit (ADI) approximation to the first derivative terms corresponding to both the asset and CIR volatility processes, as shown below:
\begin{eqnarray}
	\frac{\partial \Phi}{\partial x}\approx
	\begin{cases}
		\frac{\Phi_{p+1,j}^{q+1}-\Phi_{p,j}^{q+1}}{\Delta x},\,\,\ \text{ if } k(\theta - x_p) \geq 0 \\
		\frac{\Phi_{p,j}^{q+1}-\Phi_{p-1,j}^{q+1}}{\Delta x} ,\,\,\ \text{ if } k(\theta - x_p) < 0		
	\end{cases}
\end{eqnarray}
\begin{eqnarray}
	\frac{\partial \Phi}{\partial y}\approx
	\begin{cases}
		\frac{\Phi_{p,j+1}^{q+1}-\Phi_{p,j}^{q+1}}{\Delta y},\,\,\ \text{ if } \kappa(\mu - y_j) \geq 0 \\
		\frac{\Phi_{p,j}^{q+1}-\Phi_{p,j-1}^{q+1}}{\Delta y} ,\,\,\ \text{ if } \kappa(\mu - y_j) < 0		
	\end{cases}
\end{eqnarray}
Furthermore recall that at the boundary $y = Y_{\max}$ we have:
\begin{eqnarray}
	\frac{\Phi_{p,V}^{q+1}-\Phi_{p,V-2}^{q+1}}{2\Delta y} \approx \frac{\partial \Phi}{\partial y}(x,Y_{\max},u) = 0,
\end{eqnarray}
and therefore $\Phi_{p,V} \approx \Phi_{p,V-2}$. This allows us to approximate the second derivative at the boundary by:
\begin{eqnarray}
	\frac{\Phi_{p,V}^{q+1}-2\Phi_{p,V-1}^{q+1} + \Phi_{p,V-1}^{q+1}}{\Delta y^2} = \frac{2\big(\Phi_{p,V-2}^{q+1}-2\Phi_{p,V-1}^{q+1}\big)}{\Delta y^2} 
\end{eqnarray}
We can now write the implicit scheme for the PIDE corresponding to the stochastic volatility model:
\begin{flalign}\label{ou-sv-scheme}
	- \Phi_{p-1,j}^{q+1} c_{p,j} \Delta t   + \Phi_{p,j}^{q+1}\big(1+ a_{p,j}\Delta t   \big) - \Phi_{p-1,j}^{q+1}& b_{p,j} \Delta t - \Phi^{q+1}_{p,j-1} e_{p,j}  \Delta t - \Phi^{q+1}_{p,j+1} f_{p,j}  \Delta t \notag \\&
	= (1- \lambda \Delta t \hat{I})\Phi_{p,j}^q + \lambda \Delta t \mathcal{I}\Phi_{p,j}^q,
\end{flalign}
where the coefficients are given by:
\begin{flalign}
	c_{p,j} &= \frac{y_j}{2 \Delta x^2} - \frac{k(\theta-x_p)}{\Delta x}\mathds{1}_{\{k(\theta-x_p)<0\}}, \notag\\
	a_{p,j} &= \frac{y_j}{\Delta x^2} + \frac{\xi^2 y_j}{\Delta y^2} + \abs*{\frac{k(\theta-x_p)}{\Delta x}} + \abs*{\frac{\kappa(\mu-x_p)}{\Delta y}},\notag\\
	b_{p,j} &= \frac{y_j}{2 \Delta x^2} +  \frac{k(\theta-x_p)}{\Delta x}\mathds{1}_{\{k(\theta-x_p)>0\}}, \notag\\
	e_{p,j} &= \frac{\xi^2 y_j}{2\Delta y^2}\mathds{1}_{\{y \neq Y_{\max}\}} + \frac{\xi^2 y_j}{\Delta y^2}\mathds{1}_{\{y = Y_{\max}\}}- \frac{\kappa(\mu-x_p)}{\Delta y}\mathds{1}_{\{\kappa(\mu-y_j)<0 \,\ \cap \,\ y\neq 0 \,\ \cap \,\ y\neq Y_{\max}\}} , \notag\\
	f_{p,j}& = \frac{\xi^2 y_j}{2\Delta y^2}\mathds{1}_{\{y \neq Y_{\max}\}} + \frac{\kappa(\mu-x_p)}{\Delta y}\mathds{1}_{\{\kappa(\mu-y_j)>0 \,\ \cap \,\ y\neq Y_{\max}\}} .
\end{flalign}
The solution of the resulting scheme:
$$ M^{SV}\Phi^{q+1} = \Lambda^{SV} \Phi^q+b^{SV}$$ will result in estimations of the survival probability at each state of the underlying stochastic volatility process. Therefore, as in the regime switching model, we obtain the vectors $\Phi^q = [\Phi^q_{\cdot, 0}\,\ \Phi^q_{\cdot,1} \cdots \Phi^q_{\cdot,V-1}]^T \in \mathbb{R}^{NV \times NV}$, $b^{SV} = [b \,\ b \cdots b]^T \in \mathbb{R}^{NV}$ and matrix  $M^{SV}\in \mathbb{R}^{NV \times NV}$ as given below, in block form:
\begin{eqnarray} \notag
	M^{SV}=\left[  \begin{array}{ccccc}
		M_0 & -\Delta t f_{p,0}I_{N} &  0_{N} & \cdots &0_{N} \\
		-\Delta t e_{p,1} I_{N}  & M_1 & 	-\Delta t f_{p,1} I_{N}  &  \cdots & 0_{N}  \\
		\vdots & \vdots & \vdots & \vdots \\
		0_{N} & \cdots & -\Delta t e_{p,V-2} I_{N}  &M_{V-2} & -\Delta t f_{p,V-2} I_{N}   \\
		0_{N}  & \cdots & 0_N & -\Delta t e_{p,V-1} I_{N} &  M_{V-1} 
	\end{array}
	\right], \notag
\end{eqnarray}
where $M_j\in \mathbb{R}^{N \times N}$
for $j=0,1,\dots, V-1$ is given by (\ref{matrices_1}) by replacing $a_p, b_p,c_p$ with $a_{p,j},b_{p,j},c_{p,j}$ and $\Lambda^{SV} \in \mathbb{R}^{NV \times NV}$ is in the same form as $\Lambda^{RS}$ in (\ref{matrices}).
We now prove the required stability and monotonicity results for the stochastic volatility case.
\begin{lemma}\label{num-sv}
	Scheme (\ref{ou-sv-scheme}) is unconditionally stable and monotone.
\end{lemma}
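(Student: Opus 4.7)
The plan is to follow the same induction-and-contradiction template used for Proposition \ref{stab_mon} and Lemma \ref{num-rs}, since the underlying structural algebra is identical: at each time step the spatial operator is inverted implicitly, while the non-local jump term is handled explicitly via $\mathcal{I}\Phi^{q}_{p,j}$. What changes in the stochastic volatility case is that we must verify the appropriate positivity and diagonal-dominance properties of the matrix $M^{SV}$ without imposing a CFL-type restriction on $\Delta x$ or $\Delta y$, and this is precisely what the ADI upwinding buys us.

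First, I would establish the algebraic lemma that underpins everything: for every $(p,j)$, the coefficients $b_{p,j}$, $c_{p,j}$, $e_{p,j}$, $f_{p,j}$ are all non-negative, and moreover $a_{p,j}=b_{p,j}+c_{p,j}+e_{p,j}+f_{p,j}$. Non-negativity is immediate from the construction: the CIR process keeps $y_j\geq 0$ so all pure diffusion pieces $y_j/(2\Delta x^2)$ and $\xi^{2}y_j/(2\Delta y^{2})$ are non-negative, while the indicator functions select exactly the upwind direction in which the drift term carries the correct sign, so that the drift contributions to $c_{p,j}$, $b_{p,j}$, $e_{p,j}$, $f_{p,j}$ are also non-negative. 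A direct summation (with a separate check at $y=0$ and $y=Y_{\max}$ using the Neumann reflection $\Phi_{p,V}\approx\Phi_{p,V-2}$) yields the identity for $a_{p,j}$. Crucially, none of this requires a bound on $\Delta x$ or $\Delta y$.

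With this in hand, stability follows by induction and contradiction exactly as in Lemma \ref{num-rs}. Assume $\|\Phi^{q}\|_{\infty}\leq\|\Phi^{0}\|_{\infty}$ and suppose $\|\Phi^{q+1}\|_{\infty}>\|\Phi^{0}\|_{\infty}$, with the supremum attained at some $(p_0,j_0)$. Using the decomposition $a_{p_0,j_0}=b_{p_0,j_0}+c_{p_0,j_0}+e_{p_0,j_0}+f_{p_0,j_0}$ I rewrite $\|\Phi^{q+1}\|_{\infty}$ as the linear combination of $|\Phi^{q+1}_{p_0,j_0}|$ against the five coefficients $1,-c_{p_0,j_0}\Delta t,-b_{p_0,j_0}\Delta t,-e_{p_0,j_0}\Delta t,-f_{p_0,j_0}\Delta t$; since the four off-diagonal coefficients are non-negative and the neighboring values are dominated in modulus by $\|\Phi^{q+1}\|_{\infty}$, the triangle inequality gives
\begin{flalign*}
\|\Phi^{q+1}\|_{\infty} \leq \bigl|-c_{p_0,j_0}\Delta t\,\Phi^{q+1}_{p_0-1,j_0}+(1+a_{p_0,j_0}\Delta t)\Phi^{q+1}_{p_0,j_0}-b_{p_0,j_0}\Delta t\,\Phi^{q+1}_{p_0+1,j_0}\\
-e_{p_0,j_0}\Delta t\,\Phi^{q+1}_{p_0,j_0-1}-f_{p_0,j_0}\Delta t\,\Phi^{q+1}_{p_0,j_0+1}\bigr|.
\end{flalign*}
Using scheme (\ref{ou-sv-scheme}) to replace this quantity by $|(1-\lambda\Delta t\hat I)\Phi^{q}_{p_0,j_0}+\lambda\Delta t\,\mathcal{I}\Phi^{q}_{p_0,j_0}|$ and the inductive hypothesis yields the contradiction, provided $1-\lambda\Delta t\hat I\geq 0$, which is the same mild time-step restriction inherited from the explicit treatment of the jump term (and not a condition on $\Delta x$ or $\Delta y$).

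Monotonicity is then proved by the same device applied to the difference $d^{q}:=\Phi^{q}-\tilde\Phi^{q}$ of two solutions: assuming $d^{q}\geq 0$ and supposing $\inf_{p,j}d^{q+1}_{p,j}=d^{q+1}_{p_0,j_0}<0$, I use the non-negativity of $c_{p_0,j_0},b_{p_0,j_0},e_{p_0,j_0},f_{p_0,j_0}$ to bound $d^{q+1}_{p_0,j_0}$ from below by the right-hand side $(1-\lambda\Delta t\hat I)d^{q}_{p_0,j_0}+\lambda\Delta t\,\mathcal{I}d^{q}_{p_0,j_0}\geq 0$, producing the contradiction. The only step requiring care will be the boundary rows at $j=0$ (where the CIR diffusion coefficient degenerates) and at $j=V-1$ (where the Neumann condition collapses the two $y$-neighbours into one), but in both cases the modified stencil preserves the sign of the coefficients and the identity $a_{p,j}=b_{p,j}+c_{p,j}+e_{p,j}+f_{p,j}$, so the argument goes through verbatim. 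I expect the boundary bookkeeping, rather than any analytic obstruction, to be the main thing to be careful about.
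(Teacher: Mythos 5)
Your proof follows essentially the same induction-and-contradiction argument as the paper's, resting on the non-negativity of the upwinded coefficients $b_{p,j},c_{p,j},e_{p,j},f_{p,j}$ and the decomposition $a_{p,j}=b_{p,j}+c_{p,j}+e_{p,j}+f_{p,j}$, then bounding the right-hand side by $|(1-\lambda\Delta t\hat{I})\Phi^{q}_{p_0,j_0}+\lambda\Delta t\,\mathcal{I}\Phi^{q}_{p_0,j_0}|$. If anything you are more precise than the paper, which labels the scheme ``unconditionally'' stable while its final inequality still implicitly requires $1-\lambda\Delta t\hat{I}\geq 0$ from the explicit treatment of the jump term, exactly as you point out.
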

\begin{proof}
	The proof follows almost identically to the regime switching case. Again, let $\Phi^0$ be a bounded initial condition for the survival probability, $||\Phi^q||_{\infty} \leq ||\Phi^0||_{\infty}$ and suppose $ ||\Phi^{q+1}||_{\infty} > ||\Phi^0||_{\infty}$. Hence, there exists $(p_0,j_0) \in \{0,1,\dots, N-1\} \times \{0,1,\dots, V-1\}$ such that $|\Phi^{q+1}_{p_0,j_0}|=||\Phi^{q+1}||_{\infty}$, with $|\Phi^{q+1}_{p_0,j_0}| \geq |\Phi^{q+1}_{p,j}|$, for all $p \in \{0,1,\dots, N-1\} \times  \{0,1,\cdots,V-1\}$. Hence, noting that no conditions on $\Delta x, \Delta t$ need be imposed since all the coefficients are positive by construction, we have:
	\begin{flalign} 
		&||\Phi^{q+1}||_{\infty} =|\Phi_{p_0, j_0}^{q+1}|=\big[-c_{{p_0},{j_0}}\Delta t + (1+a_{{p_0},{j_0}}\Delta t)  - b_{{p_0},{j_0}}\Delta t -e_{p_0,{j_0}} \Delta t -f_{{p_0},{j_0}} \Delta t \big] |\Phi^{q+1}_{p_0, j_0}| \notag \\ &\leq
		-  c_{{p_0},{j_0}}  |\Phi_{p_0-1,{j_0}}^{q+1}|\Delta t  + \big(1+ a_{{p_0},{j_0}}\Delta t  \big)|\Phi_{p_0,j_0}^{q+1}| - b_{{p_0},{j_0}} |\Phi_{p_0+1,j_0}^{q+1}|\Delta t \notag \\ &- e_{{p_0},{j_0}} |\Phi^{q+1}_{p_0,j_0-1}|  \Delta t - f_{{p_0},{j_0}}|\Phi^{q+1}_{p_0,j_0+1}|\Delta t
		\leq |(1 - \lambda \Delta t \hat{I})\Phi^q_{p_0,j_0} + \lambda \Delta t \mathcal{I}\Phi^q_{p_0,j_0}| \leq ||\Phi^0||_{\infty}, \notag
	\end{flalign}
	\par To conclude, we prove the scheme is monotone. Consider initial conditions $\Phi^0,\Tilde{\Phi}^0$, respectively, with $\Phi^0 \geq \Tilde{\Phi}^0$. With $d^q := \Phi^q-\Tilde{\Phi}^q >0$, we also suppose $d^{q+1} < 0$, i.e., there exists a pair $(p_0,j_0)$ such that $\inf_{\substack{p,j}} d^{q+1}_{p,j} = d^{q+1}_{p_0,j_0} <0$, and calculate:
	\begin{flalign} 
		\inf_{p,j}d^{q+1}_{p,j} &= d^{q+1}_{p_0,j_0} =\big[-c_{{p_0},{j_0}}\Delta t + (1+a_{{p_0},{j_0}}\Delta t)  - b_{{p_0},{j_0}}\Delta t -e_{p_0,{j_0}} \Delta t -f_{p_0,{j_0}} \Delta t \big] d^{q+1}_{p_0, j_0} \notag \\ &\geq  -c_{{p_0},{j_0}} d^{q+1}_{p_0-1,{r_0}}\Delta t + (1+a_{{p_0},{j_0}}\Delta t  )d^{q+1}_{p_0,{j_0}} - b_{{p_0},{j_0}} d^{q+1}_{p_0+1,{j_0}}\Delta t \nonumber \\ &- e_{p_0,{j_0}}d^{q+1}_{p_0,j_0-1}\Delta t - f_{p_0,j_0}d^{q+1}_{p_0,j_0+1}\Delta t  =(1 - \lambda \Delta t \hat{I}) d^{q}_{p_0,j_0} + \lambda \Delta t \mathcal{I}d^q_{p_0,j_0}  \geq 0. \notag
	\end{flalign}
\end{proof}

\begin{remark}
	It is worth noting that the resulting system for the PD function is dense due to the jump integral term, adding to the computational complexity of the scheme. 
	Hence, additional methods such as implicit handling of the jump term and/or Crank-Nicolson schemes can be useful. We omit these methods from the present work, as they are not the main focus, however we refer the interested reader to relevant research, such as \cite{d2005robust}, \cite{jwo2020investigation} and \cite{carr2007numerical}. 
\end{remark}

\begin{remark}
	As previously mentioned, for the credit risk modelling tasks we will consider, using either the regime switching or the stochastic volatility model suffices. We will see multiple such examples in Section \ref{ifrs-section}. Similar calculations as those considered above can be applied to PIDE (\ref{pide-gen}), for the estimation of the survival probability under the generalized model. Stability and monotonicity follow from combining Lemmata \ref{num-rs} and \ref{num-sv}. However, a mentioned, the combination of the regime switching and stochastic volatility variables lead to an intractable numerical scheme, plagued with the "curse of dimensionality".   
\end{remark}

\section{Applications in credit risk}\label{ifrs-section}
\subsection{IFRS 9 provision calculations}

As discussed, the IFRS 9 framework requires practitioners to take into consideration multiple risk factors and their evolution for provision calculations and other modelling tasks. Naturally, the evolution of the PD is of paramount importance in these credit risk problems. Specifically for provisioning, using the PD function we can now estimate provisions for Stage 1 and Stage 2 exposures. Recall that financial institutions must account for additional provisions for exposures which display a significant increase in credit risk. These forward-looking lifetime provisions must be calculated per exposure, with some minor differences depending on the type of portfolio (e.g., for corporate loan portfolios many consider contamination effects). In this section, we display how the framework outlined above can be used to calculate the provisions under IFRS 9. The main contribution is the calculation of Expected Lifetime provisions for Stage 2 exposures which, as previously stated, is a novel requirement introduced by the these regulatory standards. We provide specific examples of provision calculations for each case below.  
\par These calculations depend on multiple risk parameters corresponding to the credit exposure; the PD and Loss Given Default (LGD), as well as the amortization schedule, which affects the Exposure at Default (EAD), i.e., the remaining value of the loan which is not repaid in the case of default. Naturally, these risk parameters may vary according to each application and case. For example, many consider the LGD to evolve according to some stochastic process with a correlation to the PD (see e.g., \cite{miu2006basel}, \cite{witzany2011two}). As our main focus is the PD function, we will consider a constant LGD and typical amortization schedule under the assumption of a zero interest rate in the examples that follow. The methodology, however, can be generalized to also consider a an appropriate LGD function (or stochastic process ) and any type of amortization. 
\subsubsection{Stage 1 provisions}
For Stage 1 loans standard regulations apply and we need only consider provisions as the Expected Losses (EL) that can be incurred on the current exposure. This calculation is given by the simple formula:
\begin{eqnarray}\label{stage_1_losses}
	EL:=\mathbb{E}[L_t] = EAD_t LGD_t PD_t.
\end{eqnarray}
Using the implicit numerical schemes we can calculate the PD value representing the probability the loan defaults within some fixed time $t$, represented straightforwardly by $$PD_t = \Psi(x,t).$$ For example, the probability of a default event occurring within the current unit of time (typically year) is $PD_1 = \Psi(x,1).$ An example of the provision calculation for varying initial positions is given below.

\begin{example}\label{stage1}
	Consider a Stage 1 loan, with asset process given by:
	\begin{eqnarray}
		dG_t=k(\theta - G_t)dt + \sigma dB_t +  \int_{\mathbb{R}} z N(dt,dz), \,\ G_0 = x,
	\end{eqnarray}
	where $(k, \theta, \sigma ) = (0.5, 3.5,2.0)$, the Compound Poisson Process has normally distributed jumps, with size $Z \sim N(0.0,0.2)$ and rate $\lambda = 1.0$. We select $\mathcal{D} = [-10,10]$, with $S = 8.0$, as estimated by Monte Carlo experiments, $N = 1001$ and $T=101$. The resulting survival probability is graphed in Fig. $\ref{fig:btcs_fig_1}$.
	\begin{figure}[h ]
		\begin{center}
			\includegraphics[height= 63mm, width=75mm,scale=0.5]{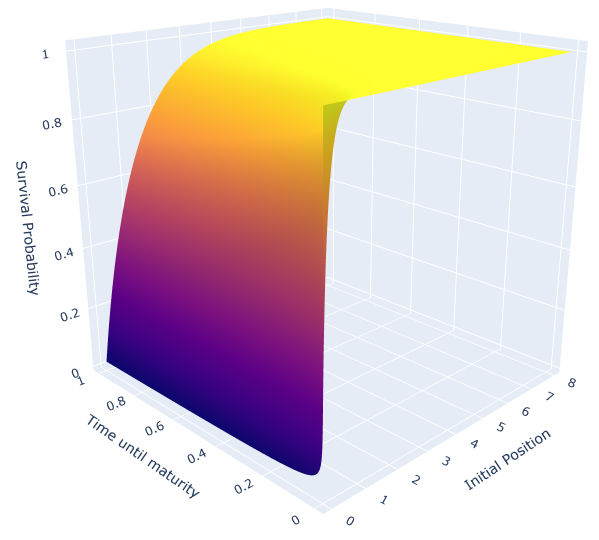}
			\includegraphics[height=63mm, width=72mm, scale=0.5]{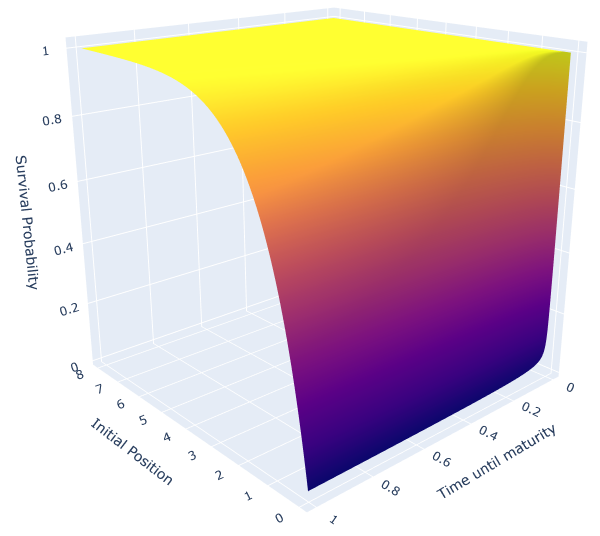}
		\end{center}
		\caption{Survival probability for asset process with $(k, \theta, \sigma ) = (0.5, 3.5,2.0)$, using the BTCS scheme with $J=150$.}
		\label{fig:btcs_fig_1}
	\end{figure}
	Then, depending on the initial position of the asset at the time of loan origination, the provisions are calculated as
	$EL = 100\cdot PD \cdot 75\%$. In Table \ref{Tab:prov_s1} we present the results for some initial positions $x\in[0,1]$.
	\begin{center}
		\begin{tabular}{@{} *{7}{c} @{}}
			\headercell{Initial Position} & $PD_1 (\%)$ & $LGD (\%)$ & Provisions (\%) \\
			\midrule
			$0.0$  & 100.00 &  75 &  75.00  \\
			$0.1$  & 83.24 &  75 &  62.43  \\
			$0.2$  & 68.22 &  75 & 51.17  \\
			$0.3$  & 55.01 &  75 &  41.26  \\
			$0.4$  & 43.64	&  75 &  32.73 \\
			$0.5$  & 34.06 &  75 &  25.55  \\
			$0.6$  & 26.16 &  75 & 19.62 \\
			$0.7$  & 19.78 &  75 &  14.84  \\
			$0.8$& 14.73 &  75 & 11.05   \\
			$0.9$  & 10.82 &  75 &  8.11  \\
			$1.0$ & 7.83  &  75 & 5.87   \\
		\end{tabular}
		\captionof{table}{Provision calculations for Stage 1 loan, given initial position of the asset process.}
		\label{Tab:prov_s1}
	\end{center}
	Hence, if, at the time of calculation, the asset process is estimated to start at $x=0.6$, the provisions are $19.62\%$ of the current exposure.
	\noindent
	\hfill $\triangleleft$
\end{example}
As expected, the provisions are a decreasing function of the initial position. A similar table can be produced at any point during the lifetime of the loan, by estimating the corresponding PD values. 

\subsubsection{Stage 2 Provisions}
We now turn to loan provision calculations for Stage 2 loans. Under IFRS 9, if and when loans transition to Stage 2, the lender is obligated to consider all future losses for provisioning purposes. Hence, at any time $t$, and assuming discrete amortization payments, the following formula for the expected losses occuring at some time $i > t$ applies:
\begin{eqnarray}
	\mathbb{E}[L_i] = \mathbb{E} \Big[ \frac{1}{(1+r_i)^{i-t}} LGD_i PD^{PiT}_i EAD_i | \mathcal{F}_t\Big].
\end{eqnarray}
The corresponding formula for the Lifetime Expected Credit Losses (ECL - also referred to as Expected Lifetime Provisions) at time $t$ is given by:
\begin{eqnarray}\label{ECL}
	ECL_t =  \mathbb{E} \Big[\sum_{i=t+1}^{T} \frac{1}{(1+r_i)^{i-t}} LGD_i PD^{PiT}_i EAD_i | \mathcal{F}_t\Big],
\end{eqnarray}
where $r_i$ is the interest rate at time $i$ and $T$ 
the maturity. In the above, $PD^{PiT}_i$ represents the conditional Point-in-Time PD, which is the probability of default occurring at a given future time period. Specifically, we define:
\begin{eqnarray} \label{pitpd}
	PD^{PiT}_u =  \mathbb{P}\Big(\inf_{\substack{r \leq u}} G_r^x \leq 0, \inf_{\substack{r \leq u-1}} G_r^x > 0\Big).
\end{eqnarray} 
In order to calculate the $PD^{PiT}$ in terms of the PD function resulting from the solution of the PIDE, we note that:
$$  \mathbb{P}\Big(\inf_{\substack{r \leq u}} G_r^x \leq 0\Big) = \mathbb{P}\Big(\inf_{\substack{r \leq u-1}} G_r^x \leq 0\Big) + \mathbb{P}\Big(\inf_{\substack{ r \leq u}} G_r^x \leq 0, \inf_{\substack{ r \leq u-1}} G_r^x > 0\Big),$$ and hence:
$$PD^{PiT}_u  = \Psi(x,u)  -\Psi(x,u-1) = \Phi(x,u-1)  -\Phi(x,u).$$ 
Therefore, (\ref{ECL}) can now be written as:
\begin{eqnarray} \label{ecl2}
	ECL _t= \sum_{i=1}^{T}\big(\Phi(x,i-1)  -\Phi(x,i)\big)\mathbb{E} \Big[\frac{LGD_i  EAD_i}{(1+r_i)^{i-t}} | \mathcal{F}_t \Big].
\end{eqnarray}
\begin{example}\label{stp}
	Consider a credit exposure with asset process as in Example \ref{stage1}. However, we now suppose the exposure has been transferred to Stage 2, with remaining maturity $T = 10$ years. We also consider that the asset process of the borrower is currently estimated at $x=1.80$. To estimate the Stage 2 provisions we require the PD function and use ($\ref{ecl2}$) (as mentioned, we consider $r=0$ for simplicity):\\

	\begin{center}
		\begin{tabular}{@{} *{8}{c} @{}}
			\headercell{\thead{Time until \\  maturity ($u$)}} &$EAD_u$ & $PD_u (\%)$ & $PD_u^{PiT}(\%)$ & $LGD(\%)$ & $EL_t$\\
			\midrule
			$10.0$& 100  & 21.69&  1.59 & 75  &  1.19  \\  
			$9.0$ & 90 & 20.10&  1.76 & 75  & 1.19  \\
			$8.0$& 80  & 18.34	  &  1.97 & 75  &  1.18  \\
			$7.0$ & 70 & 16.37 &  2.20 & 75  &  1.16 \\
			$6.0 $  & 60 & 14.17&  2.49  & 75  & 1.12   \\
			$5.0$  & 50 &  11.68  &  2.81 &75   &  1.05  \\
			$4.0$& 40  & 8.87	  &  3.10 & 75  &  0.93 \\
			$3.0$ & 30 & 5.77 &  3.14 & 75  & 0.71  \\
			$2.0 $  & 20 &  2.63 &  2.24  &75  & 0.34  \\
			$1.0$  & 10 &  0.39  &  0.39 &75   &  0.03  \\ \hline
			
			$ECL$ & & & & & 8.89  
		\end{tabular}
		\captionof{table}{Provision calculations for Stage 2 loan, given an asset process with initial position $x=1.80$.}
	\end{center}

	The exposure and expected losses are in percentages of the remaining exposure. As shown, the current Lifetime provisions are given by the sum of the final column: $ECL = 8.89\%$. \hfill
	$\triangleleft$
\end{example}
In practice, we expect that when an exposure is classified as Stage 2, the parameters of the underlying asset process may differ, compared to the Stage 1 counterpart. In the example above, we purposely consider the same asset process so as to highlight the differences in the final provision estimations. 

\subsubsection{Provisions under the regime switching model}
\par As discussed, the new regulatory framework aims to ensure that all financial institutions have accounted for future losses and abrupt changes in credit risk parameters, which can create severe losses and subsequent liquidity and solvency issues, both for institutions and their customers. As risk classification is widely considered a Markov process both in theory and by practitioners, considering transiton probabilities for loans can allow us to forececast PD values and estimate worst-case scenario provisions for loan exposures. We note that, in practice, estimating the parameters of the asset prices under each regime may be difficult. However, many financial institutions consider such models and its mathematical framework is well established, see e.g., \cite{chatterjee2015centre} and \cite{bruche2005estimating}. Another approach is to use historical parameters from Stage 1 and Stage 2 loans to estimate the changes that occur when a loan transitions between Stages. For this example, we will be estimating the provisions under the regime switching model developed above. To this end, we consider an IFRS 9 compliant transition matrix:
$$
P=\left(  \begin{array}{c|ccc}
	\text{IFRS 9 Rating} & \text{Stage 1} & \text{Stage 2} & \text{Stage 3}\\
	\hline
	\text{Stage 1} & p_{11} & p_{12} & p_{12}\\
	\text{Stage 2} & p_{12} & p_{22} & p_{13}  \\
	\text{Stage 3} & 0 & 0 &1
\end{array}
\right).
$$
For a loan originating in Stage 1, we can now forecast credit losses by taking into account the probability of a SICR (significant increase in credit risk) event. Under the regime switching model, in the case of a transition to another Stage, we will need to estimate the PD values for the asset process governed by the new parameters. Using the straightforward notation $\Phi^i$ or $PD^i$ to emphasize the Stage (regime) under which the specific PD value is estimated, we can then define the "\emph{Stage-weighted provisions}", given by:
\begin{flalign} \label{weightedprov}
	WP_t := p_{11} EAD_t LGD_t PD^1 + p_{12} \sum_{i=t}^{T}\big(\Phi^2(x,i-1) -\Phi^2(x,i)\big)\mathbb{E} \Big[\frac{LGD_i  EAD_i}{(1+r_i)^{i-t}} | \mathcal{F}_t \Big]   + p_{13}EAD_tLGD_t,
\end{flalign}
where the third term occurs in the case of default (i.e. transition to Stege 3), we have that $PD=100\%$. This calculation holds for the case where we consider that the transition to Stage 2 occurs one period (e.g., year) after. However, we can also consider the cases where the deterioration occurs at any point $k>t$. For this calculation we require the $k$-step transition matrix of the underlying rating process, which is known to be $P^k$, whose elements will be symbolized as below:
$$
P^k=\left(  \begin{array}{c|ccc}
	\text{IFRS 9 Rating} & \text{Stage 1} & \text{Stage 2} & \text{Stage 3}\\
	\hline
	\text{Stage 1} & p^k_{11} & p^k_{12} & p^k_{12}\\
	\text{Stage 2} & p^k_{12} & p^k_{22} & p^k_{13}  \\
	\text{Stage 3} & 0 & 0 &1
\end{array}
\right).
$$
with the understanding that $p^k_{ij}$ represents the $k$-th step transition probability. We then have:
\begin{flalign} \label{weightedprovk}
	\mathbb{E}[WP_k|\mathcal{F}_t] = p^k_{11} EAD_k LGD_k PD^1_k + p^k_{12} \sum_{i=k}^{T}\big(\Phi^2(x,i-1)  &-\Phi^2(x,i)\big)\mathbb{E} \Big[\frac{LGD_i  EAD_i}{(1+r_i)^{i-t}} | \mathcal{F}_t \Big] \notag\\
	&+ p^k_{13}EAD_tLGD_t.
\end{flalign}
At any point, with the dynamics of the underlying Markov process, we can obtain the corresponding $WP_t$ values and calculate the above \emph{Expected Stage-weighted provisions}. This calculation takes the future evolution of the loan, as well as the regime into consideration to provide an estimation that incorporates all scenarios.
For illustrative purposes, we consider the example below.
\begin{example}\label{rs-example}
	Consider an asset process governed by the regime-switching model below: 
	\begin{eqnarray} \label{proc}
		dG_t =  
		\begin{cases} 
			k_1(\theta_1 -G_t)dt +\sigma_1 dB_t + \int_{\mathbb{R}} z N(dt,dz), \,\ G_0 = x,  \text{ if $R_t=$Stage 1, }   \\
			k_2(\theta_2 -G_t)dt +\sigma_2 dB_t + \int_{ \mathbb{R}} z N(dt,dz), \,\ G_0 = x,  \text{ if $R_t=$Stage 2, } \\
			k_3(\theta_3 -G_t)dt +\sigma_3 dB_t + \int_{ \mathbb{R}} z N(dt,dz), \,\ G_0 = x,  \text{ if $R_t=$Stage 3, } \\
		\end{cases}
	\end{eqnarray}
	with regime-specific parameters given in Fig. \ref{graphs_rs} and $\mathcal{D} = [-6.0,6.0]$, with $S=4.0$ (we can also consider a different limit value $S$ for each regime. However, in this example the Monte Carlo estimates indicate that the same value suffices). We consider 
	normally distributed jumps with $Z \sim N(0.0,0.5)$, rate $\lambda = 1.0$. We have set $N=1001, T= 1001$ for the space and time grids, respectively. Furthermore, the generator matrix of the underlying Markov process given by:
	$$Q=\left(  \begin{array}{ccc}
		-0.5 & 0.3  & 0.2 \\
		0.3 & -0.6 & 0.3  \\
		0.0 & 0.0 &  0.0 \\
	\end{array}
	\right).$$
	The graphs in Fig. \ref{graphs_rs} display the estimated survival probability in each regime (Stage), resulting from solving scheme $(\ref{matrix-form})$. 
	
	\begin{figure}[h]
		\begin{center}
			\includegraphics[height=58mm, width=58mm,scale=0.3]{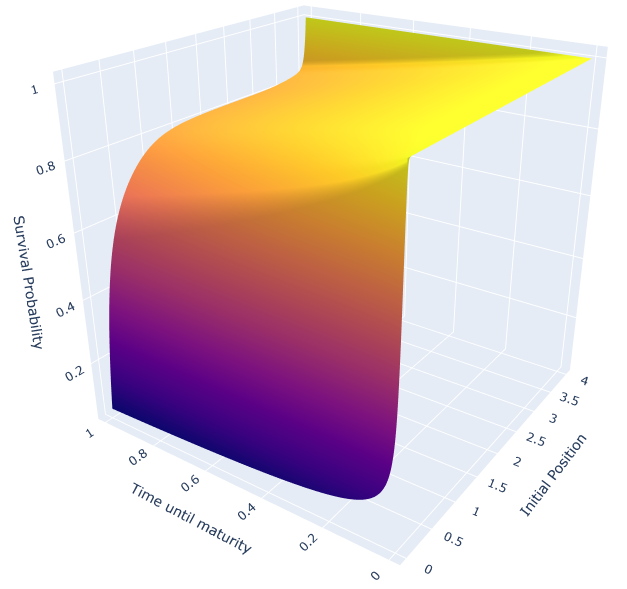}
			\includegraphics[height=58mm, width=57mm, scale=0.3]{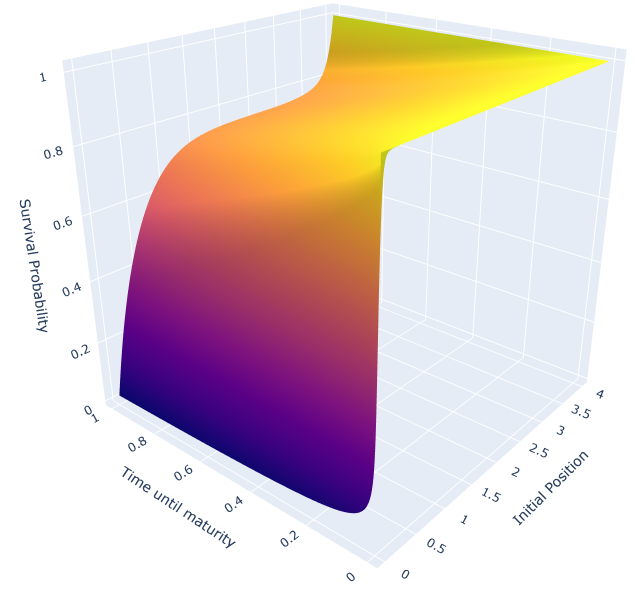}
			\includegraphics[height=58mm, width=58mm, scale=0.3]{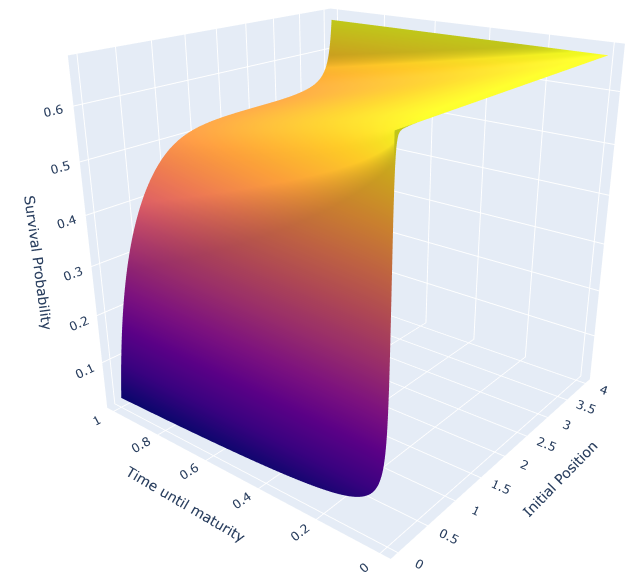}
		\end{center}
		\caption{Survival probability for (\ref{proc}) with $(k_1,k_2, k_3)=(0.3,0.2,0.0), (\theta_1,\theta_2, \theta_3)=(0.8,0.5,0.0), (\sigma_1, \sigma_2,\sigma_3)=(0.3,0.5,0.0)$, with $R_0 =$ Stage 1 (top left), $R_0 =$ Stage 2 (top right). The average survival probability across all regimes is also shown (bottom), which also account for the survival probability when $R_0 =$ Stage 3, for which we have $\Phi(x,u) \equiv 0$. }
		\label{graphs_rs}
	\end{figure}
	We consider an initial position of $x=0.3$ and maturity $T=10$. The transition matrix of the underlying Markov process is obtained by calculating $P = \exp(Q)$:  
	\begin{eqnarray}\label{transmatrix}
		P=\left(  \begin{array}{c|ccc}
			\text{IFRS 9 Rating} & \text{Stage 1} & \text{Stage 2}  & \text{Stage 3}\\
			\hline
			\text{Stage 1} & 0.63 & 0.18 & 0.19\\
			\text{Stage 2} & 0.18 & 0.57 & 0.25 \\
			\text{Stage 3} & 0.00& 0.00 & 1.00 \\
		\end{array}
		\right).
	\end{eqnarray}
	We will perform the provisioning scenario analysis by forecasting the Stage-weighted provisions, given by ($\ref{weightedprovk}$) for the next four years. We first calculate the $k-$step transition matrices:
	\begin{eqnarray} \notag
		P^2=\left( \begin{array}{ccc}
			0.43 & 0.21 & 0.36 \\
			0.21 & 0.36 & 0.43 \\
			0.00 & 0.00 & 1.00\\
		\end{array}
		\right),
		\,\,\,\,
		P^3=\left( \begin{array}{ccc}
			0.31 & 0.20 &0.50 \\
			0.20 & 0.24 & 0.56\\
			0.00 & 0.00 & 1.00\\
		\end{array}
		\right),
		\,\,\,\,
		P^4=\left( \begin{array}{ccc}
			0.23 & 0.17 & 0.60 \\
			0.17 & 0.18 & 0.66 \\
			0.00 & 0.00 & 1.00\\
		\end{array}
		\right).
	\end{eqnarray}
	At time $t=0$ we consider the forward looking scenarios and can calculate the Stage 1 and Stage 2 provisions. Recall that Stage 1 provisions are given by (\ref{stage_1_losses}). Stage 2 (expected lifetime) provisions are given in column $ECL_t$ in Table \ref{weighted_table} below, which also contains the Point-in-Time Stage 1 and Stage 2 PD required to calculate the provisions.

	\begin{center}
		\begin{tabular}{@{} *{8}{c} @{}}
			\headercell{\thead{Time until \\ maturity ($u$)}} &$EAD_u$ & $\Psi^1(x,u) (\%)$ & $\Psi^2(x,u)(\%)$ & {\thead{Stage 1  \\ $PD_u^{PiT}(\%)$}} &{\thead{Stage 2  \\ $PD_u^{PiT}(\%)$}} & $LGD(\%)$ & $EL_u$\\
			\midrule
			$10.0$& 100  & 4.76 &  10.55 & 0.52 &1.53 & 75\% &  1.147 \\
			$9.0$ & 90 & 4.24  & 9.02  &0.50 & 1.50 & 75\% & 1.012 \\
			$8.0$& 80  &  3.74 &  7.52 & 0.48 & 1.44 & 75\% &  0.864 \\
			$7.0$ & 70 &  3.26 &  6.08 & 0.48 & 1.33 & 75\% & 0.698 \\
			$6.0 $  & 60 & 2.78 &  4.75 & 0.47 & 1.17 &75\% & 0.527 \\
			$5.0$  & 50 &  2.31  &  3.58 & 0.47 &0.99 &75\%  & 0.371 \\
			$4.0$& 40  &  1.84 &  2.59 & 0.46 & 0.80 & 75\% &  0.240 \\
			$3.0$ & 30 &  1.38 &  1.79 & 0.46 & 0.65  & 75\% & 0.146 \\
			$2.0 $  & 20 &  0.92 &  1.14 & 0.47 & 0.58 &75\% & 0.087  \\
			$1.0$  & 10 & 0.45 & 0.56  &0.45  & 0.56 &75\%  &  0.042  \\
		\end{tabular}
		\captionof{table}{Stage 1 and 2 PDs and expected losses.}
		\label{weighted_table}
	\end{center}
	
	As shown in Example \ref{stp}, the Lifetime Provisions can be calculated by the sum of the expected losses column, $EL_u$,. We can now calculate the Stage 1 and Stage 2 provisions at each subsequent time period, which we will use to calculate the Stage-weighted provisions for the next four years, calculated by (\ref{weightedprovk}). The results are shown in Table \ref{final-wp}, where the final column below contains the Stage-weighted provisions. 

	\begin{center}
		\begin{tabular}{@{} *{8}{c} @{}}
			\headercell{\thead{Time until \\ maturity ($u$)}}  & $EAD_u$ &  Stage 1 Provisions & $ECL$ &Stage 3 Provisions &$WP_u$\\
			\midrule
			$10.0$  & 100 & 0.390  & 5.13 & 75.00 & 15.42 \\
			$9.0 $  & 90 & 0.338 & 3.99 &67.50 & 25.28   \\
			$8.0$ & 80 & 0.288 &2.98 &60.00 &30.69 \\
			$7.0$&70 & 0.247 &2.11 & 52.50& 31.92 \\
		\end{tabular}
		\captionof{table}{Stage-weighted provision calculations for the next 4 years.}
		\label{final-wp}
	\end{center}
	The large difference observed even between the Lifetime and Stage-weighted provisions is evidence of the importance of such scenario analysis in provision calculations. Particularly in cases similar to that examined in this example, where the probability of transitioning to a default state is quite high the results can have an extremely large effect, which risk managers must account for in risk and provisioning policies. 
	\noindent
	\hfill $\triangleleft$
\end{example}

\subsection{Further Applications in credit risk modelling}\label{credit-sect}
\subsubsection{Pricing of Credit Default Swaps}
Another financial field in which the PD function plays a paramount role is credit derivatives pricing. In particular, we consider the fair price of Credit Default Swap (CDS). A default swap is a contract that protects the holder of an underlying swap from the losses caused by the default to the obligation’s issuer. Therefore, the evolution of the PD values can be used for the pricing, hedging and managing of such options. Extensive work has been done on modeling and pricing CDSs, such as in \cite{cariboni2007pricing} and \cite{houweling2005pricing}. Specifically, it can be shown that the price of the CDS is given by:
\begin{eqnarray} \notag
	C D S=(1-R)\left(-\int_{0}^{T} e^{-r s} d\Phi(x,s)\right)-c \int_{0}^{T} e^{-r s} \Phi(x,s) ds,
\end{eqnarray}
and the corresponding par spread:
\begin{eqnarray}\notag
	c^*=\frac{(1-R)\Big(-\int_{0}^{T} e^{-r s} d\Phi(x,s) \Big)} {\int_{0}^{T} e^{-r s} \Phi(x,s) ds}, 
\end{eqnarray}
where $R$ is the specific recovery rate and $r$ is the risk-free rate. The above expression can be discetized as follows:
\begin{eqnarray}\label{par}
	c^*=\frac{(1-R) \sum_{i=1}^{n} e^{-rt_i} (\Phi(x,t_{i-1})- \Phi(x,t_i))} {\frac{1}{2} \sum_{i=1}^{n}e^{-rt_i} (\Phi(x,t_{i-1}) + \Phi(x,t_i)) \Delta t_i}, 
\end{eqnarray}
where the Trapezoidal rule has been used for the discretization of the denominator. Estimating the price and par rate of CDS therefore requires the term structure of the underlying risk-free and survival probability processes. We present a simplified example, whereby the interest rate is again considered to be zero. 
\begin{example}\label{ex-sv}
	Consider a CDS with maturity $T=10$ years and recovery rate $R=0.5$, where the asset process evolves according to the following stochastic volatility model:
	\begin{eqnarray} \notag
		\begin{cases} 
			dG_t = k(\theta -G_t)dt +\sigma(Y_t)dB_t + \int_{\mathbb{R}} z N(dt,dz), \,\,\,\ G_0 = x\\
			dY_t = \kappa(\mu - Y_t)dt  + \xi \sqrt{Y_t}dW_t, \,\,\,\ Y_0=y. \\
		\end{cases}
	\end{eqnarray}
	The parameters of the stochastic model are given in Fig. \ref{graphs_sv}, we let $\mathcal{D}=[-5.0,75.0]$ and consider a spatial and temporal discretization with 200 and 1000 steps respectively. Jumps are again normally distributed, with size $Z \sim N(0.3,0.5)$, rate $\lambda = 1.0$ and $J=90$. Furthermore, we set a grid with 200 steps for the volatility $\mathcal{V} = [0.0,200.0]$. The parameters and resulting graph of the PD function can be seen in Figure \ref{graphs_sv}.
	\begin{figure}[h]
		\begin{center}
			\includegraphics[height=62mm, width=80mm, scale=0.5]{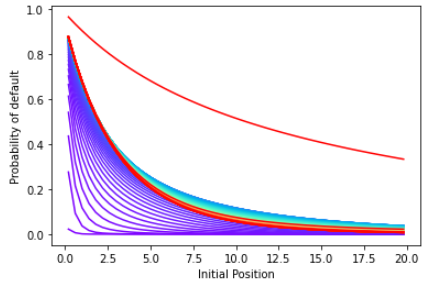}
			\includegraphics[height=62mm, width=80mm, scale=0.5]{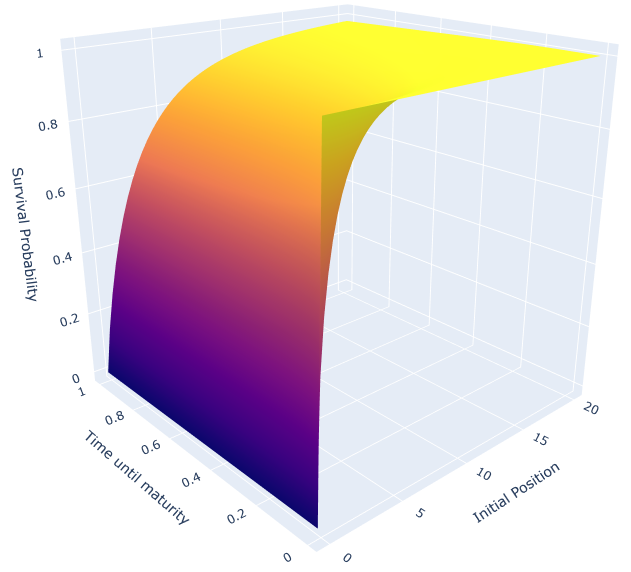}
		\end{center}
		
		\caption{(Left) Evolution of the PD under the stochastic volatility model with $(k,\theta, \kappa,\mu,\xi)=(2.0,2.0,0.05,0.1,0.07)$, for various values of the starting volatility $Y_0$. (Right) The average survival probability across all volatility values.}
		\label{graphs_sv}
	\end{figure}
	
	We assume an initial position of $x= 3.0$ and plot the evolution of the average survival probability in Fig. \ref{graphs_phi_cd}. 
	\begin{figure}[h]
		\begin{center}
			\includegraphics[height=50mm, width=70mm,scale=0.3]{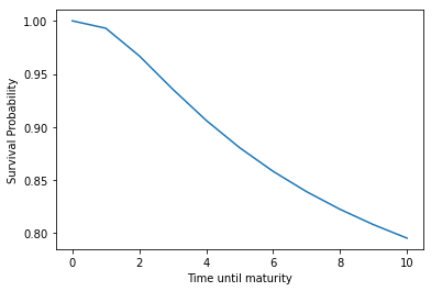}
		\end{center}
		\caption{Evolution of the average survival probability under the stochastic volatility model described in Example \ref{ex-sv}, with starting point $x=0.3$.}
		\label{graphs_phi_cd}
	\end{figure}
	The resulting par spread is calculated using (\ref{par}) to obtain $c^* = 0.33$.
	
	\hfill $\triangleleft$
\end{example}

\subsubsection{Credit Portfolio Optimization}
For many financial institutions, one of the most important tasks is securitization of credit exposures. Ultimately, this can be formulated as an optimization problem. The PD function affects the risk of each exposure and, by extension, the corresponding return as well. To this end, we present a simple example to show how such an optimization problem can be solved under the stochastic volatility PD model. 
\begin{example}
	Suppose a securitization agency creates a portfolio consisting of loans (or credit derivatives), each with different underlying asset process. The resulting PD functions will differ depending on the loan's (or derivative's) characteristics and asset value processes. Slightly abusing notation, we suppose that, for a portfolio of three loans, the corresponding PD functions are given by $PD_i, i=1,2,3$, estimated using the methodology developed above. 
	\par The agency aims to select the investment allocated to each of the credit exposures.
	Specifically, it poses the following portfolio optimization problem: suppose $w_i, i=1,2,3$ and $r_i, i=1,2,3$ represent the weight of total investment allocated to each institution's set of loans and  their average return, respectively. Consider, furthermore, that the required portfolio rate of return is set to be $R^*$. For the credit exposure $i$, at time $t$, the expected loss is given by $EL^i_t = EAD^i_tLGD^i_tPD^i_t$, and we can then define the total loss function for the agency by $L(t)= \sum_{1}^{3} w_iEL^i_t$. In order to rebalance the portfolio at each period, the securitization agency is then interested in solving a portfolio optimization problem (we present a very simple such problem, which can be solved analytically to illustrate the use of the method). The optimization we consider is the following: \\
	\begin{eqnarray*}
		\min_{\substack{\bf w}} \mathbb{E}[U(L(t))],  \text{ subject to }\\
		w_1 r_1 + w_2 r_2 + w_3 r_3 = R, \\
		w_1 +w_2 +w_3=1,
	\end{eqnarray*} 
	for an appropriate loss function $U$. While the following analysis can be extended to any convex loss function, for the sake of simplicity, we illustrate the calculation selecting the quadratic loss function $U(L)=bL^2-L$ (in the sense of a negative utility function). To standardize the optimization problem, we consider that $EAD_i$ is given as a percentage of the original loan value and for simplicity we consider a constant $LGD^i_t = 1$ for $i=1,2,3$ and all $t$. At any point in time $t$, the expected loss utility is then:
	$$ \sum_{i=1}^{3} U(w_iEAD_i)PD^{PiT}_i,$$ where $PD^{PiT}_i$ is the current Point-in-Time default probability corresponding to exposure $i$. The agency must optimize the portfolio by solving: 
	\begin{eqnarray*}
		\text{minimize } f(w_1,w_2,w_3):= \sum_{i=1}^{3} b(w_iEAD_i)^2-w_iEAD_i, \text{ subject to} \\
		w_1 r_1 + w_2 r_2 + w_3 r_3 = R, \\
		w_1 +w_2 +w_3=1,
	\end{eqnarray*} 
	This simple quadratic optimization problem can now be solved either analytically or numerically. It is straightforward to calculate:
	$$w_3^* =\frac{PD^{PiT}_1(2bEAD_1^2\delta\epsilon - \epsilon)+PD^{PiT}_2(\gamma-2baEAD_2^2\beta\gamma)-PD^{PiT}_3}{2(bPD^{PiT}_1EAD_1^2\epsilon^2+bPD^{PiT}_2EAD_2^2\gamma^2+bPD^{PiT}_3EAD_3^2)}, $$
	where $\beta=\frac{R-r_1}{r_2-r_1}, \gamma=\frac{r_3-r_1}{r_2-r_1}, \delta = \frac{r_2-R}{r_2-r_1}$ and $\epsilon=\frac{r_3-r_2}{r_2-r_1}$. A straightforward substitution using the two conditions will result in the corresponding values $w_1^*$ and $w_2^*$. We consider the above setting with average returns from each institution's instruments $r=(r_1,r_2,r_3)^T$ and current exposures $EAD=(EAD_1,EAD_2,EAD_3)^T$ given by:
	\begin{eqnarray}\notag
		r=(0.1 \,\ 0.3 \,\ 0.1)^T,
		\,\,\,\,
		EAD=(0.9 \,\ 0.8 \,\ 0.7)^T.
	\end{eqnarray}
	In order to obtain the vector containing the PD values, we consider the three asset classes described by the processes below:
	
	\begin{eqnarray}
		dG^1_t=k_1(\theta_1 - G^1_t)dt + \sigma_1 dB_t +  \int_{\mathbb{R}} z N(dt,dz), \,\,\ x_0 = 1.00 \notag\\
		dG^2_t=k_2(\theta_2 - G^2_t)dt + \sigma_2 dB_t +  \int_{\mathbb{R}} z N(dt,dz), \,\,\ x_0 = 0.20 \notag\\
		dG^3_t=k_3(\theta_3 - G^3_t)dt + \sigma_3 dB_t +  \int_{\mathbb{R}} z N(dt,dz),\,\,\ x_0 = 0.50, \notag
	\end{eqnarray}
	with $(k_1, k_2, k_3) = (0.5,0.8,0.5), (\theta_1, \theta_2, \theta_3)= (3.5, 3.0, 2.5), (\sigma_1, \sigma_2, \sigma_3) = (2.0, 1.5, 2.5)$, $J = 150, \lambda =1.0$ and jump distributions $Z \sim N(0.0,0.2)$ for all three.
	Solving PIDE $(\ref{pide2})$, we obtain the $PD^{PiT}$ values:
	$$PD^{PiT}=(0.0783 \,\ 0.1447 \,\ 0.0447)^T,$$
	and fixing the expected total return to be $R=25.00\%$, the resulting optimal weights $w^*=(w_1^* \,\ w_2^* \,\ w_3^*)^T$ are:
	$$w^*=(0.163 \,\ 0.750 \,\ 0.087)^T.$$
\end{example}  \hfill $\triangleleft$\\
\par For extensive work on portfolio optimization problems with defaultable assets, we refer the interested reader to e.g., \cite{asanga2014portfolio}. Furthermore, empirical studies of the applicability of standard ruin probabilities in practice can be found in \cite{braun2015solvency}. In the example above, we focus on a case where an agency must assess and optimize a portfolio of loan exposures with varying characteristics. Such cases could be loans originating in different sectors; e.g., in \cite{pasricha2020portfolio}, the authors consider a portfolio of risky bonds originating from the Industry and Service sector.

	\section{Conclusion}
	In this paper we have focused on a generalized approach of estimating PD values, accounting for both the cases of variable starting times and maturities. We show that under certain conditions imposed on the models representing the asset processes, these two cases can be dealt with equivalently and lead to important novel representations of the PD function. Specifically, with the integral equation approach, we can construct a robust mathematical framework that allows us to develop both theoretical and numerical tools for the calculation of the PD values. This methodology has important advantages over standard Monte Carlo methods, as well as over existing approaches using PIDEs, as we are able to consider sophisticated models that incorporate multiple latent variables, without sacrificing mathematical rigor for required regularity assumptions.
	\par In terms of practicality and applications, the proposed framework covers many of the difficulties financial institutions face due to the new regulatory requirements for provision calculations, as well as continuous credit risk monitoring for SICR events. We hypothesize that this approach could be useful for practitioners, given that it constitutes a complete and efficient modelling framework, with which one can calculate Point-in-Time and Lifetime PD values, each of which are used extensively in credit risk management. Specifically, this framework is motivated by the needs created by the IFRS 9 regulations, under which forecasting credit losses accurately and efficiently is of paramount importance. We show how the PD estimations can be used to calculate Stage 2 provisions, as well as more advanced, scenario-based provisions, and of extensive further applications in credit risk modelling. 
	\par Finally, we note that this approach most likely is best fit for corporate and small business loans, where the estimation of asset processes has been documented in well-established work. Of course, it is possible that with new developments in payment services and Open-Banking solutions (in accordance to the Payment Services Directive 2), such methods could be applied to individual consumers, given sufficient historical data. An example of recent work done in this direction is \cite{tobback2019retail}. To conclude, it is important to mention that the LGD parameter is also of great importance for provision calculations; in this work we considered a constant LGD, however, in practice, LGD values require separate model development, often related to current macroeconomic variables, as shown in \cite{bellotti2012loss}. Future research could focus on considering appropriate models for the evolution of the LGD, in combination with the PD function.

	\appendix
	\section{Stochastic processes}
	
	\subsection{The continuous Ornstein Uhlenbeck process}\label{ou-app}
	In its simplest form, the OU process $X_t$ is defined as the stochastic process satisfying the SDE: 
	\begin{eqnarray}\label{ou}
		dX_t=k(\theta - X_t)dt + \sigma dB_t, \,\, X_s=x, 
	\end{eqnarray}
	for some known $x$, where, as above, $B_t$ represents the standard Brownian motion and $k, \theta$ and $\sigma$ are positive real constants. The OU process is a mean-reverting, Gaussian and Markov process, which is also temporally homogeneous. We can therefore equivalently write (\ref{ou}) as:
	\begin{eqnarray}\label{ou-cont}
		dX_u=k(\theta -X_u)du+ \sigma dB_u, \,\, X_0=x,
	\end{eqnarray}
	where $u = t-s$. For simplicity, we write $X_t^x$ to indicate the OU process with $X_0=x$. We adopt this convention for all stochastic processes in the remainder of this work. Employing It\^o's formula we can obtain the solution to the above SDE:
	$$ X_t = x e ^{-kt}+ \theta(1-e^{-kt}) + \sigma \int_{0}^{t} e^ {-k(t-u)}dB_u,$$
	from which is it easy to see that $X_t \sim N(\theta + (x - \theta) e^{-kt}, \sigma^2(1-e^{-2kt})/2k).$
	These properties are what make this particular family of processes widely used in many applications. We will also need the following regarding the transition density and hitting time for the OU process.
	
	\begin{theorem}\label{td}
		The transition density of the OU process, with initial condition $X_0=x$ is given by: 
		\begin{eqnarray}
			p(y,x,t) \equiv \mathbb{P}\big(X_t = y | X_0 =x\big)=  \sqrt{\frac{k}{\pi \sigma^2 (1-e^{-2kt})}} \exp\Big(-\frac{k(y-\theta - (x-\theta) e^{-kt})^2}{\sigma^2 (1-e^{-2kt})}\Big).
		\end{eqnarray}
	\end{theorem}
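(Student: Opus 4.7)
The plan is to derive the transition density by explicitly solving the linear SDE \eqref{ou-cont} and identifying the resulting distribution as Gaussian. First I would apply It\^o's formula to the auxiliary process $Y_t := e^{kt}(X_t - \theta)$, which kills the drift and yields $dY_t = \sigma e^{kt}\,dB_t$. Integrating from $0$ to $t$ and undoing the substitution gives the explicit representation
\begin{equation*}
X_t \;=\; \theta + (x-\theta)\,e^{-kt} + \sigma\int_{0}^{t} e^{-k(t-u)}\,dB_u,
\end{equation*}
exactly as recalled just before the theorem statement.

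Next I would observe that the stochastic integral is a Wiener integral of a deterministic integrand, so it is a centered Gaussian random variable. By the It\^o isometry,
\begin{equation*}
\mathrm{Var}\!\left(\sigma\int_{0}^{t} e^{-k(t-u)}\,dB_u\right) \;=\; \sigma^2 \int_{0}^{t} e^{-2k(t-u)}\,du \;=\; \frac{\sigma^2}{2k}\bigl(1-e^{-2kt}\bigr).
\end{equation*}
Therefore $X_t \mid X_0=x$ is normally distributed with mean $m(x,t) := \theta + (x-\theta)e^{-kt}$ and variance $v(t) := \sigma^2(1-e^{-2kt})/(2k)$.

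Finally, writing out the Gaussian density with these parameters,
\begin{equation*}
p(y,x,t) \;=\; \frac{1}{\sqrt{2\pi v(t)}}\,\exp\!\left(-\frac{(y - m(x,t))^2}{2 v(t)}\right),
\end{equation*}
and substituting the expressions for $m(x,t)$ and $v(t)$, one recovers the stated formula after simplifying the constants $1/\sqrt{2\pi \cdot \sigma^2(1-e^{-2kt})/(2k)} = \sqrt{k/(\pi \sigma^2(1-e^{-2kt}))}$ and $1/(2v(t)) = k/(\sigma^2(1-e^{-2kt}))$. There is no real obstacle here, as the argument reduces to the standard linear-SDE trick plus the It\^o isometry; the only care required is the algebraic bookkeeping of the factor of $2k$ when converting between the variance and the normalization constants in the Gaussian density, which is exactly what gives the asymmetric-looking coefficients in the final expression.
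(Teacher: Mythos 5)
Your proposal is correct and follows essentially the same route the paper itself sketches just before the theorem statement: solve the linear SDE via the integrating-factor substitution, identify $X_t\mid X_0=x$ as Gaussian with mean $\theta+(x-\theta)e^{-kt}$ and variance $\sigma^2(1-e^{-2kt})/(2k)$ by the It\^o isometry, and write down the normal density. The constant bookkeeping ($1/\sqrt{2\pi v(t)}=\sqrt{k/(\pi\sigma^2(1-e^{-2kt}))}$ and $1/(2v(t))=k/(\sigma^2(1-e^{-2kt}))$) checks out, so nothing is missing.
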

	
	Furthermore, for the OU process as defined in (\ref{ou}), we define the corresponding survival probability distribution, given by $Q(x, t) :=\mathbb{P}\Big(\inf_{\substack{r \leq t}} G_r^{x} > 0 \Big)$. The distribution can be obtained via appropriate Volterra equations, for which we refer the reader to \cite{lipton2018first}. We will not employ this representation explicitly, however we will use the fact that both $p(y,x,t)$ and $Q(x,t)$ are continuous in $(x,t)$. More details can be found in Appendix \ref{gens-cou}.

	\subsection{Continuous Time Markov Chain}\label{Markov}
	In this section we outline, for the reader's convenience, the background and important results pertaining to Continuous Time Markov Chains (CTMC), which are used for the regime-switching models.
	
	\begin{definition}
		A continuous time Markov chain is a continuous stochastic process $X_t, t\geq 0$, with a discrete state space $\mathcal{R}$, of cardinality $|\mathcal{R}| < \infty$, satisfying the Markov property, and such that:
	\begin{eqnarray}\label{markov-gen-matrix}
		\mathbb{P}(X_{t+\delta}=j | X_t=i) = \begin{cases}
			q_{ij}\delta + o(\delta), \,\ i \neq j\\
			1 + q_{ii} \delta + o(\delta),
		\end{cases}
	\end{eqnarray}
	as $\delta \downarrow 0$. In the above $q_{ij}$ are known as the transition rates, for which we have $\sum_{i \in S} q_{ij}=0$, $q_{ij} \geq 0$ for $i\neq j$.  
	\end{definition}
	The matrix $Q$ with entries $(Q)_{ij}=q_{ij}$, for $i,j\in \mathcal{R}$ is known as the generator matrix of the Markov process (also referred to as the transition rate matrix). Similar to the discrete time Markov chains, we can define the transition matrix for a CTMS, $P(t), t\geq 0$, with entries: 
	\begin{eqnarray}
		p_{ij} = 	\mathbb{P}(X_{t}=j | X_0=i),
	\end{eqnarray}
 	for $i,j\in \mathcal{R}$. The following result holds for the transition matrix, from which we are also able to obtain a connection between the transition and generator matrices.
 	\begin{theorem}
 		The transition matrix $P(t)$ satisfies the Kolmogorov forward equation:
 		\begin{eqnarray}\notag
 			P'(t) = P(t)Q, 
 		\end{eqnarray}
 	and hence:
 	\begin{eqnarray}
 		P(t) = e^{tQ}.
 	\end{eqnarray}
 	\end{theorem}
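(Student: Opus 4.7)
The plan is to establish both assertions by exploiting the Chapman-Kolmogorov semigroup property combined with the infinitesimal description of $P(\delta)$ given in equation (\ref{markov-gen-matrix}). First I would recall that the Markov property implies $P(t+s) = P(t) P(s)$ for all $s, t \geq 0$, together with the initial condition $P(0) = I$. From the infinitesimal characterization of the generator, one can read off the expansion
\begin{equation*}
P(\delta) = I + \delta Q + o(\delta), \qquad \delta \downarrow 0,
\end{equation*}
entry by entry, directly from the definitions $p_{ij}(\delta) = q_{ij}\delta + o(\delta)$ for $i \neq j$ and $p_{ii}(\delta) = 1 + q_{ii}\delta + o(\delta)$.

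Next, I would differentiate $P(t)$ using the semigroup identity. Writing
\begin{equation*}
\frac{P(t+\delta) - P(t)}{\delta} = P(t) \cdot \frac{P(\delta) - I}{\delta},
\end{equation*}
and substituting the expansion above, the right-hand side converges to $P(t) Q$ as $\delta \downarrow 0$. This yields the forward equation $P'(t) = P(t) Q$, with $P(0) = I$. The same manipulation applied from the left also gives $P'(t) = Q P(t)$ (the backward equation); both hold because $Q$ commutes with $P(t)$ since $Q$ commutes with itself and $P(t)$ is given by an analytic function of $Q$.

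Finally, to obtain the closed form, I would invoke the standard uniqueness theorem for linear matrix ODEs: the Cauchy problem $P'(t) = P(t) Q$, $P(0) = I$ has a unique solution on $[0, \infty)$, and this solution is $P(t) = e^{tQ} := \sum_{n=0}^{\infty} \frac{t^n Q^n}{n!}$. The series converges in any matrix norm because $|\mathcal{R}|<\infty$, making the verification that $\frac{d}{dt} e^{tQ} = e^{tQ} Q$ term-by-term routine.

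The main obstacle is rather minor in this finite-state setting: it is the rigorous justification of exchanging limits in passing from the $o(\delta)$ expansion of $P(\delta)$ to the identity $P'(t) = P(t) Q$. Because $|\mathcal{R}| < \infty$, all entries are uniformly bounded and the $o(\delta)$ terms are entry-wise uniform in the finitely many indices, so the componentwise limit and the matrix limit coincide; if the state space were infinite, standard semigroup/Banach-space machinery would be needed to make this step rigorous, but here it reduces to elementary real-analysis arguments.
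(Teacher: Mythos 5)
Your proposal is correct. The paper states this theorem in its appendix as recalled background on continuous-time Markov chains and gives no proof at all, so there is nothing to compare against; your argument — Chapman--Kolmogorov plus the entrywise expansion $P(\delta)=I+\delta Q+o(\delta)$ from the paper's definition of the generator, followed by uniqueness for the linear matrix ODE — is the standard derivation and is sound in the finite-state setting. Two minor polish points: your difference quotient only establishes the right-hand derivative, so you should either also write $P(t)=P(t-\delta)P(\delta)$ for the left limit or invoke the fact that a continuous function with continuous right derivative is $C^1$; and your justification of the backward equation via ``$P(t)$ is an analytic function of $Q$'' is circular at that stage of the argument, though it is harmless since the backward equation is not needed for the stated result.
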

 	\par Finally, we remind the reader that we say that a state $i$ is transient if, given that the chain starts at $i$, it is possible, but not certain, that the chain will return to $i$. Equivalently, there exists a non-zero probability that the chain will never return to $i$. On the other hand, a state $i$ is defined as absorbing if $\mathbb{P}(X_t=i|X_0=i)=1$, for all $t\geq 0$, i.e. the probability of transitioning from $i$ to any other state is zero. 

	\subsection{L\'evy processes} \label{levy}
	Throughout this paper, we have abopted the notation used in \cite{oksendal2007applied}. We start by defining a L\'evy process:
	
	\begin{definition}[L\'evy process]
		A L\'evy process $\{L_t\}_{t \geq 0} $ is a stochastic process for which the following conditions hold:
		\begin{itemize}
			\item $L_0=0$.
			\item $L$ has independent and stationary increments, i.e., if $ t >s$ then $L_t-L_s$ is independent from $ L_s$ and $L_t-L_s \stackrel{D}{=} L_{t-s}$.
			\item  $L$ is stochastically continuous, i.e for all $\epsilon >0 $ and all $ s >0$ we have $$\lim_{t \rightarrow s} \mathbb{P}(|X(t)-X(s)|>\epsilon)=0.$$
		\end{itemize}
	\end{definition}
	
	A consequence of the above definition is the celebrated It\^o - L\'evy decomposition. First, we define the following required quantities:
	
	\begin{definition}
		Let $L_t$ be a L\'evy process, whose jump is defined as $\Delta L_t = L_t - L_{t_-}$. Furthermore, let $\textbf{B}_0$ be the family of Borel sets $U \subset \mathbb{R} $, whose closure does not contain 0. Then, for $U \in \textbf{B}_0$, define the Poisson random measure of the L\'evy process $L_t$ by: 
		$$ N(t,U) = \sum_{0 < s \leq t } \mathds{1}(\Delta L_s).$$ 
	\end{definition}
	The Poisson random measure represents the number of jumps of size $\Delta L_s \in U$, which occur up to time $t$. We can therefore define the intensity of the jumps as follows:
	\begin{definition}
		The intensity of a L\'evy jump process $L_t$, known as the L\'evy measure of $L_t$ is defined as:
		$$\nu(U) = \mathbb{E}[N(1,U)],$$
		where, as above, $U \in \textbf{B}_0$.
	\end{definition}
	A useful consequence of the above definitions is that if $\nu$ is the L\'evy measure of a simple Compound Poisson Process with rate $\lambda$ and jump size density $f(z)$, then we have that $$ \nu(U) = \lambda f(U).$$ To this end, we will employ the following result in the subsequent sections, due to \cite{kyprianou2006introductory}:
	
	\begin{theorem}\label{KYPRIANOU-APP}
		Consider the Poisson random measure  $N(t,U)$, with $U \in \textbf{B}_0$, and corresponding L\'evy measure $\nu(U)$. Then the process:
		$$ X_t = \int_{0}^t \int_{B} z N(ds,dz), $$
		where $B \in \mathcal{B}(\mathbb{R})$, is a Compound Poisson Process with rate $\nu(B)$ and jump distribution $\frac{\nu(dx)|_B}{\nu(B)}$.
	\end{theorem}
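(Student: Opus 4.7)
The plan is to use the defining property of the Poisson random measure $N(\cdot,\cdot)$ together with the marking/thinning theorem for Poisson point processes. First, I would verify that the set $B$ must have $\nu(B)<\infty$ (which is implicit since $B$ is assumed bounded away from $0$, or at least chosen so that the random measure $N(t,B)$ is integer-valued and finite almost surely); this is the standing assumption that makes the stochastic integral reduce to a finite sum pathwise. With that in hand, I would write, from the definition of $N$,
\begin{equation}
X_t \;=\; \int_{0}^{t}\!\int_{B} z\, N(ds,dz) \;=\; \sum_{0<s\le t} \Delta L_s\, \mathbbm{1}_{\{\Delta L_s \in B\}},
\end{equation}
so that $X_t$ is literally the cumulative sum of those jumps of $L$ whose sizes fall in $B$.

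Next, I would invoke the fact that $\{N(t,U)\}_{t\ge 0}$ is a Poisson process in $t$ with rate $\nu(U)$ for every $U\in \textbf{B}_0$ (this follows from the stationary-independent-increments structure of $L_t$ and the definition of $\nu$ via $\mathbb{E}[N(1,U)]=\nu(U)$). In particular $\{N(t,B)\}_{t\ge 0}$ is Poisson with rate $\lambda^*:=\nu(B)$, so the jump times of $X_t$, namely $\tau_1<\tau_2<\cdots$, form a homogeneous Poisson process of rate $\lambda^*$. This furnishes the ``Poisson clock'' component of the compound Poisson representation.

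Third, I would address the joint law of the jump sizes using the marking/independence property of the Poisson random measure. Since $N$ has independent increments on disjoint Borel sets in $[0,\infty)\times \mathbb{R}$ and intensity $ds\otimes \nu(dz)$, conditioning on a jump occurring in $[0,t]\times B$, the size of that jump is distributed as $\nu(\cdot\cap B)/\nu(B)$ independently of its time, and successive jump sizes are independent. Thus $X_t = \sum_{k=1}^{N(t,B)} Z_k$ where $N(t,B)$ is Poisson$(\nu(B)\,t)$, the $Z_k$'s are i.i.d.\ with law $\nu(\cdot)|_B/\nu(B)$, and $(Z_k)$ is independent of $(\tau_k)$; this is precisely the definition of a compound Poisson process with the claimed rate and jump-size distribution.

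The main obstacle is the third step, the rigorous justification of the independence between jump times and marks, which relies on the marking theorem for Poisson point processes (equivalently, the product structure of the intensity measure $ds\otimes \nu(dz)$). A clean way to bypass a direct verification is to compute the characteristic function of $X_t$ via the L\'evy--Khintchine formula applied to the restricted measure $\nu|_B$, obtaining $\mathbb{E}[e^{iu X_t}] = \exp\bigl(t\int_B (e^{iuz}-1)\nu(dz)\bigr) = \exp\bigl(\lambda^* t \int_{\mathbb{R}} (e^{iuz}-1)\,dF^*(z)\bigr)$ with $F^*=\nu|_B/\nu(B)$, which is exactly the characteristic function of the stated compound Poisson process and then yields the result by uniqueness of laws.
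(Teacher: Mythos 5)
Your proof is correct, but there is nothing in the paper to compare it against: the paper states this result in its appendix as a known fact imported from Kyprianou's book and gives no proof at all. What you have written is essentially the standard argument from that reference, and both of your routes work. The direct route (restrict the Poisson random measure to $[0,t]\times B$, observe that $N(t,B)$ is a Poisson process of rate $\nu(B)$, and invoke the marking theorem to get i.i.d.\ jump sizes with law $\nu(\cdot\cap B)/\nu(B)$ independent of the arrival times) gives the pathwise compound Poisson representation directly. The characteristic-function route via the exponential formula is a clean way to sidestep the marking theorem, though on its own it only identifies the one-dimensional marginals; to conclude that $X$ is a compound Poisson \emph{process} you should add that $X$ inherits stationary independent increments from $N$, so matching the time-$t$ marginal with a compound Poisson law suffices (or simply fall back on your pathwise representation, which already shows $X$ is a piecewise-constant finite-activity jump process). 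Your opening observation is also a genuine and worthwhile correction to the statement as printed: the theorem as written allows an arbitrary $B\in\mathcal{B}(\mathbb{R})$, but the conclusion is false (and the normalization $\nu(\cdot)|_B/\nu(B)$ meaningless) unless $\nu(B)<\infty$, which is guaranteed only if $B$ is taken in $\textbf{B}_0$ or otherwise bounded away from the origin; making that hypothesis explicit, as you do, is the right thing to do.
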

	In this paper, we will focus on jump terms in the form above. We can now present the following theorem:
	\begin{theorem}[It\^o - L\'evy decomposition]
		Let $\{L_t\}_{t \geq 0}$ be a L\'evy process. Then, we have 
		\begin{eqnarray}
		L_t = bt + \sigma B_t + \int_{|z|<1} z \tilde{N}(t,dz) + \int_{|z| \geq1} z N(t,dz), 
		\end{eqnarray}
		for $ t \geq 0 $, where $b, \sigma \in \mathbb{R}$, $B_t$ is a Brownian motion and $\tilde{N}(t,dz):=N(t,dz)- \nu(dz)t$ is the compensated Poisson measure. 	
	\end{theorem}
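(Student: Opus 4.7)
The plan is to decompose the L\'evy process by isolating jump contributions of different sizes from the residual continuous Gaussian part, following the classical strategy. First, I would verify that $N(t,U)$ truly is a Poisson random measure with intensity $\nu(dz)\,dt$: for any fixed $U \in \mathbf{B}_0$, the counting process $t\mapsto N(t,U)$ inherits the stationarity and independence of increments from $L_t$, and being integer-valued and right-continuous it must be a Poisson process with some rate $\nu(U) = \mathbb{E}[N(1,U)]$. Countable additivity over disjoint sets promotes $N$ to a Poisson random measure on $(0,\infty)\times\mathbb{R}\setminus\{0\}$.

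Next, I would split jumps at the cutoff $|z|=1$. Because $\nu(\{|z|\geq 1\})<\infty$, which follows from the fact that only finitely many such jumps occur on any compact time interval, the large-jump process $J^{L}_t := \int_{|z|\geq 1} z\,N(t,dz) = \sum_{0<s\leq t}\Delta L_s\,\mathbf{1}_{\{|\Delta L_s|\geq 1\}}$ is well defined pathwise as a compound Poisson sum (cf.\ Theorem \ref{KYPRIANOU-APP}). For small jumps a pathwise definition fails since they may accumulate; here I would construct $J^{S}_t := \int_{|z|<1} z\,\tilde N(t,dz)$ as an $L^2$ limit. Setting $J^{S,\varepsilon}_t := \int_{\varepsilon\leq |z|<1} z\,\tilde N(t,dz)$, each $J^{S,\varepsilon}$ is a square-integrable martingale with second moment $t\int_{\varepsilon\leq|z|<1} z^2\,\nu(dz)$, and one passes to the limit $\varepsilon\downarrow 0$ using the fundamental property $\int_{|z|<1} z^2\,\nu(dz) < \infty$ of L\'evy measures.

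The core of the argument, and the main obstacle, is identifying the residual $C_t := L_t - J^{L}_t - J^{S}_t$ as $bt + \sigma B_t$ for a Brownian motion $B_t$. By construction $C_t$ has no jumps, so it is a continuous process, and it inherits stationary and independent increments from $L_t$ once one justifies that $J^{L}_t$ and $J^{S}_t$ are themselves independent-increments L\'evy processes independent of the continuous residual. This last independence is delicate: it relies on the fact that integrals with respect to $N$ over disjoint Borel sets of jump sizes are independent, combined with the orthogonality between the compensated Poisson-integral martingales and any continuous martingale. Once $C_t$ is established as a continuous L\'evy process, the classical characterization of such processes forces $C_t = bt + \sigma B_t$: writing $\phi_t(\xi) = \mathbb{E}[e^{i\xi C_t}]$, the semigroup relation $\phi_{t+s} = \phi_t\phi_s$ and continuity yield $\phi_t(\xi) = \exp(t\psi(\xi))$, and absence of jumps forces $\psi$ to contain only a Gaussian and drift term, i.e.\ $\psi(\xi) = ib\xi - \tfrac{1}{2}\sigma^2\xi^2$.

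The hard part, as indicated, is the simultaneous handling of the $L^2$ convergence of the compensated small-jump integral and the proof that $\int_{|z|<1} z^2\,\nu(dz)<\infty$; in a streamlined exposition one first derives the L\'evy--Khintchine formula for $\mathbb{E}[e^{i\xi L_1}]$ via Fourier-analytic methods, from which the finiteness of the second-moment integral on small jumps and the Gaussian form of $\psi$ both emerge. An alternative route builds the Poisson random measure and its compensated stochastic integrals first (using general Poisson measure theory), then uses It\^o's formula for semimartingales to verify that the decomposed object reproduces $L_t$. Either route funnels into the identification of the continuous piece as Brownian motion with drift, completing the decomposition.
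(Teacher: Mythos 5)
The paper does not prove this statement: it is quoted in Appendix C as a classical background result, with the surrounding development following \cite{oksendal2007applied}, \cite{applebaum2009levy} and \cite{kyprianou2006introductory}, so there is no in-paper proof to compare against. Your outline is the standard textbook proof of the It\^o--L\'evy decomposition and is correct as a plan: extract the Poisson random measure from the jumps, treat the large jumps ($|z|\geq 1$) as a compound Poisson process using $\nu(\{|z|\geq 1\})<\infty$, build the compensated small-jump integral as an $L^2$ (hence, via Doob's maximal inequality, locally uniform) limit of $\int_{\varepsilon\leq|z|<1}z\,\tilde N(t,dz)$, and identify the continuous residual with independent stationary increments as $bt+\sigma B_t$. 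You also correctly flag the two genuinely delicate points rather than glossing over them: the finiteness of $\int_{|z|<1}z^2\,\nu(dz)$, which in a self-contained proof must be derived (e.g.\ via the non-vanishing of the characteristic function near the origin, or a Kolmogorov three-series/maximal-inequality argument applied to the truncated compensated sums) and cannot simply be assumed, and the mutual independence of the jump components over disjoint size-bands and of the jump part from the continuous residual. As a complete proof these two points would still need to be filled in, and the claim that $C_t$ ``inherits'' independent increments requires the independence statement you defer; but as a proposal the architecture is sound and matches the route taken in the references the paper cites.
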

	More generally, we can define the stochastic process $ X(t)$, as: 
	\begin{eqnarray}
	dX_t=a(t)dt + \sigma(t)dB(t) + \int_{|z|<1} H(t,z) \tilde{N}(dt,dz) + \int_{|z| \geq1} H(t,z) N(dt,dz),
	\end{eqnarray}
	known as L\'evy - It\^o processes. Moreover, by combining the compensator with the drift term the above can be written as: 
	\begin{eqnarray}\label{levy-ito}
	dX_t=a(t)dt + \sigma(t)dB(t) + \int_{z \in \mathbb{R}} H(t,z) N(dt,dz).
	\end{eqnarray}
	We will adopt this formulation throughout the remainder of this work. For such processes, we have the following results, which are extension of the standard It\^o and Generator formulas. 
	
	\begin{theorem}[It\^o formula]
		Let $X_t \in \mathbb{R}$ be an It\^o- L\'evy process and consider a function $f(x,t)$, with $f \in C^{2}(\mathbb{R} \times [0,T])$. Then, the dynamics of the process $f(X_t,t)$ are given by the following version of the It\^o formula:
		\begin{align*}
		& df(X_t,t) = \frac{\partial f}{\partial t}(X_t,t)dt + \frac{\partial f}{\partial x} (X_t,t) \big(a(t) dt  +  \sigma(t) dB_t \big) + \frac{1}{2} \frac{\partial^2 f}{\partial x^2} (X_t,t) \sigma^2(t) dt \\
		& + \int_{\mathbb{R}}\big( f\big(X_{t-}+ H(t,z),t\big) - f(X_{t-},t)\big) N(dt, dz)
		\end{align*}
		\end {theorem}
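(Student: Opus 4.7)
The plan is to establish the L\'evy-It\^o It\^o formula by splitting $X_t$ into its continuous It\^o part and its pure-jump part, applying the classical It\^o formula on each inter-jump interval, and then accounting for the jumps exactly (not via Taylor expansion, since the jumps are not small). Write $X_t = X_t^c + J_t$ where $dX_t^c = a(t)dt + \sigma(t)dB_t$ and $J_t = \int_0^t \int_{\mathbb{R}} H(s,z)N(ds,dz)$. I would first handle the finite-activity case in which $N$ has finite total mass on $[0,T] \times \mathbb{R}$, so the jump times $\{\tau_i\}$ form a discrete increasing sequence almost surely, and reduce the general case to this by truncation afterwards.

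Between consecutive jump times $\tau_{i-1}$ and $\tau_i$, the process $X$ agrees with a pure It\^o process, so the classical It\^o formula applied on $[\tau_{i-1}, \tau_i)$ gives
\begin{align*}
f(X_{\tau_i^-}, \tau_i) - f(X_{\tau_{i-1}}, \tau_{i-1}) = & \int_{\tau_{i-1}}^{\tau_i}\!\Big[\tfrac{\partial f}{\partial t}(X_s,s) + a(s)\tfrac{\partial f}{\partial x}(X_s,s) + \tfrac{1}{2}\sigma^2(s)\tfrac{\partial^2 f}{\partial x^2}(X_s,s)\Big]ds \\
& + \int_{\tau_{i-1}}^{\tau_i}\sigma(s)\tfrac{\partial f}{\partial x}(X_s,s)\,dB_s.
\end{align*}
At each jump time, using that $\Delta X_{\tau_i} = H(\tau_i, Z_i)$ where $Z_i$ is the mark of the Poisson random measure at $\tau_i$, the increment of $f$ is
$$f(X_{\tau_i},\tau_i) - f(X_{\tau_i^-},\tau_i) = f(X_{\tau_i^-} + H(\tau_i,Z_i), \tau_i) - f(X_{\tau_i^-}, \tau_i).$$
Telescoping over $[0,t]$ and rewriting the sum of jump contributions as an integral against $N$,
$$\sum_{\tau_i \leq t}\!\big[f(X_{\tau_i^-} + H(\tau_i,Z_i), \tau_i) - f(X_{\tau_i^-}, \tau_i)\big] = \int_0^t\!\!\int_{\mathbb{R}}\!\big[f(X_{s^-}+H(s,z),s) - f(X_{s^-},s)\big]N(ds,dz),$$
which is just the defining property of the Poisson random measure, yields the stated formula in the finite-activity case. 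Note that continuity of $X^c$ makes the left and right limits of $X$ coincide inside the Lebesgue and It\^o integrals, so writing $X_s$ or $X_{s^-}$ there is immaterial.

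The main obstacle is extending to the general situation in which $\nu$ is only $\sigma$-finite on $\mathbb{R} \setminus \{0\}$ with potentially infinite mass near zero. The standard route is to truncate: for $\epsilon > 0$, define $X^\epsilon$ by restricting the jump integral to $\{|z| > \epsilon\}$, which has finite intensity $\nu(\{|z|>\epsilon\}) < \infty$; apply the finite-activity formula just derived to $X^\epsilon$; then pass to the limit $\epsilon \downarrow 0$. The integrability assumptions on $H$ against $\nu$ (implicit in the well-posedness of the SDE in the form (\ref{levy-ito})) together with It\^o isometry and dominated convergence justify the termwise convergence; the only delicate point is controlling the small-jump martingale contribution, which is handled via the $L^2$-isometry for compensated Poisson integrals. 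This yields the identity in its full generality and completes the proof.
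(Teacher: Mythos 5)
The paper does not actually prove this statement: it is recalled as standard background in Appendix \ref{levy}, with the reader pointed to \cite{oksendal2007applied} and \cite{applebaum2009levy}, so there is no in-paper argument to compare yours against step by step. That said, your proposal is the standard and correct interlacing proof: the finite-activity part (classical It\^o between consecutive jump times, exact --- not Taylor-expanded --- increments at the jumps, telescoping, and identification of the jump sum with the integral against $N$) is sound, and it is in fact all the paper ever uses, since every application takes the jump part to be a compound Poisson process with $\lambda=\nu(\mathbb{R})<\infty$. The one imprecision is in your limiting step for infinite activity: the formula as stated integrates $f(X_{t-}+H)-f(X_{t-})$ against the \emph{uncompensated} measure $N$ over all of $\mathbb{R}$, which is only well defined pathwise when the jump part has finite variation, i.e.\ $\int_{|z|<1}|H(t,z)|\,\nu(dz)<\infty$ (this is implicit in the paper's writing of (\ref{levy-ito}) with the compensator absorbed into the drift). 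Under that hypothesis the truncation $\epsilon\downarrow 0$ is justified by pathwise dominated convergence of the jump sum, using $|f(x+H)-f(x)|\leq C|H|$ locally; no $L^{2}$-isometry for compensated integrals is needed, and indeed none is available here since the statement contains no $\tilde{N}$ term or compensator correction $\int\big(f(x+H)-f(x)-H\,\partial_x f\big)\nu(dz)$. If the jumps were only square-integrable rather than of finite variation, the theorem would have to be restated in compensated form before your $L^{2}$ argument could apply. With that caveat recorded, the proof is complete and consistent with the cited sources.
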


		\begin{definition}[Generator]
			For a L\'evy-It\^o process, given by (\ref{levy-ito}), and function $f:\mathbb{R} \times [0,T] \rightarrow \mathbb{R}$  we define the generator $\mathcal{L}$ by:
			\begin{eqnarray}
			\mathcal{L}f(x,t) = \lim_{t \downarrow 0 }\frac{\mathbb{E}^x[f(X_t,t)]-f(x,t)}{t},
			\end{eqnarray}
			where $\mathbb{E}^x[f(X_t,t)] = \mathbb{E}[f(X_t,t)|X_0=x].$	
		\end{definition}
		It particular, it can be shown that the generator admits the following form: 
		\begin{eqnarray}
		\mathcal{L}f(x,t) = \frac{\partial f}{\partial t} +a(t)\frac{\partial f}{\partial x} + \frac{1}{2} \sigma(t)^{2} \frac{\partial^{2} f}{\partial x^{2}} +\int_{\mathbb{R}}\big(f(x+z,t)-f(x,t)\big) \nu(dz), 
		\end{eqnarray}

		\section{Infinitesimal generators and PDEs for the continuous Ornstein - Uhlenbeck models}\label{gens-cou}
		
		In this Appendix we detail the generators for the continuous counterparts of the processes considered in this paper, built upon the continuous OU process given by (\ref{ou-cont}) in Appendix \ref{ou-app}. Firstly, we recall that the survival probability $Q(x,t):= \mathbb{P}\Big(\inf_{\substack{r \leq t}} X_r^{x} > 0 \Big)$, as well as the corresponding transition density $p(\cdot,x,t)$ satisfy the equation: 
		
		\begin{eqnarray}\label{kolm}
			\frac{\partial f}{\partial t}(x,t) = \mathcal{L}f(x,t),
		\end{eqnarray}
		where $\mathcal{L}$ represents the operator:
		\begin{eqnarray}\label{ou-continuous}
			\mathcal{L}f(x,t) = k(\theta-x) \frac{\partial f}{\partial t} (x,t) + \frac{1}{2}\sigma^2\frac{\partial f^2}{\partial x^2} (x,t).
		\end{eqnarray}
		This is known as the Kolmogorov backward equation. 
		In general, when considering the survival probabilities and transition densities under the continuous versions of the regime switching, stochastic volatility and generalized models, analogous equations to (\ref{kolm}) are produced. These depend on the generators of the processes, which now include terms to capture the evolution of the regime and/or volatility processes. For more details on the generators of regime switching and stochastic volatility models see e.g., \cite{hainaut2011financial,zhu2015feynman}. Below we display the Kolmogorov backward equations under the continuous regime switching, stochastic volatility and generalized models, respectively:
		\begin{flalign}
			\frac{\partial f}{\partial t}(x,\rho, t)&=\mathcal{L}_1Q(x,\rho, t) &&\notag\\&:=k_\rho(\theta_{\rho}-x) \frac{\partial f}{\partial x}(x,\rho,t) + \frac{1}{2} \sigma_\rho^2 \frac{\partial^{2} f}{\partial x^{2}}(x,\rho,t)+ \sum_{j \neq \rho} q_{\rho j} \Big(Q(x,j,t) - f(x,\rho,t)\Big) \label{l_1}&&\\
			\frac{\partial f}{\partial t}(x,y,t)&=\mathcal{L}_2 f(x,y, t)&&\notag\\&:=k(\theta - x) \frac{\partial f}{\partial x}(x,y,t) + \kappa (\mu -y)\frac{\partial f}{\partial y}(x,y,t) + \frac{1}{2} y \frac{\partial^{2} f}{\partial x^{2}}(x,y,t) + \frac{1}{2} \xi^2 y \frac{\partial^{2} Q}{\partial y^{2}}(x,y,t) \label{l_2}&&\\
			\frac{\partial f}{\partial t}(x,\rho, y,t)&=\mathcal{L}_3 f(x,\rho,y, u):=k_\rho(\theta_{\rho}-x) \frac{\partial f}{\partial x}(x,\rho,y,t) + \kappa (\mu -y)\frac{\partial f}{\partial y}(x,\rho, y,t)  &&\notag \\ &+\frac{1}{2} \sigma_\rho^2 y\frac{\partial^{2} f}{\partial x^{2}}(x,\rho,y,t)+\frac{1}{2} \xi^2 y \frac{\partial^{2} f}{\partial y^{2}}(x,\rho, y,t) +\sum_{j \neq \rho} q_{\rho j} \Big(f(x,j,y,t) - f(x,\rho,y,t)\Big)  \label{l_3},&&
		\end{flalign}
		where we define separately the operators $\mathcal{L}_1, \mathcal{L}_2$ and $\mathcal{L}_3$ for notational convenience. Note that we write the dependency on the regime as a subscript on the right hand side of the equations above.
		
		\subsection{Regularity of solutions to parabolic PDEs}\label{reg-par-pde}
		We will require the following results pertaining to the regularity of solutions of the second order parabolic PDE (\ref{kolm}). The following are due to \cite{garroni1992green}. We first define some relevant function spaces that will be required for the subsequent regularity results.
		
		\begin{definition}
			Consider $\Omega \subset \mathbb{R}^n$ an open set, with closure $\bar{\Omega}$. Furthermore, consider a fixed time horizon $T >0$ and define $Q_T = \Omega \times [0,T]$, with closure $\bar{Q}_T$. We then define the following spaces, for $0<\alpha<1$: 
			\begin{itemize}
				\item $C^0(\bar{\Omega})$ is the Banach space of bounded continuous functions in $\bar{\Omega}$, with the natural supremum norm:
				$$
				\|\cdot\|_{C^0(\bar{\Omega})} \equiv\|\cdot\|_{0, \bar{\Omega}}=\sup _{\Omega}|\cdot|
				$$
				\item $C^{2,1}\left(\bar{Q}_T\right)$ is the Banach space of functions $\varphi(x, t)$ belonging to $C^0\left(\bar{Q}_T\right)$ together their derivatives $\frac{\partial f}{\partial x}, \frac{\partial^2 f}{\partial x^2}, \frac{\partial f}{\partial t}$ in $\bar{Q}_T$ with natural norm.
				
				\item $C^{\alpha, \frac{\alpha}{2}}\left(\bar{Q}_T\right)$ is the Banach space of function $\varphi$ in $C^0\left(\bar{Q}_T\right)$ which are Hölder continuous in $\bar{Q}_T$ with exponent $\alpha$ in $x$ and $\frac{\alpha}{2}$ in $t$ i.e. having a finite value for the seminorm
				$$
				\langle f \rangle_{\bar{Q}_T}^{(\alpha)} \equiv\langle f \rangle_{x, \bar{Q}_T}^{(\alpha)}+\langle f \rangle_{t, \bar{Q}_T}^{\left(\frac{\alpha}{2}\right)}
				$$
				where
				$$
				\begin{aligned}
					&\langle f \rangle_{x, \bar{Q}_T}^{(\alpha)}=\inf \left\{C \geq 0:\left|f(x, t)-f\left(x^{\prime}, t\right)\right| \leq C\left|x-x^{\prime}\right|^\alpha, \forall x, x^{\prime}, t\right\} \\
					&\langle f \rangle_{t, \bar{Q}_T}^{\left(\frac{\alpha}{2}\right)}=\inf \left\{C \geq 0:\left|f(x, t)-f \left(x, t^{\prime}\right)\right| \leq C\left|t-t^{\prime}\right|^{\frac{\alpha}{2}}, \forall x, t, t^{\prime}\right\}
				\end{aligned}
				$$
				The quantity
				$$
				\|f\|_{C^{\alpha, \frac{\alpha}{2}}\left(\bar{Q}_T\right)} \equiv\|f\|_{\alpha, \bar{Q}_T}=\|f\|_{0, \bar{Q}_T}+\langle f\rangle_{\bar{Q}_T}^{(\alpha)}
				$$
				defines a norm.
				\item $C^{2+\alpha, \frac{2+\alpha}{2}}\left(\bar{Q}_T\right)$ is the Banach space of functions $f(x, t)$ in $C^{2,1}\left(\bar{Q}_T\right)$ having a finite value for the seminorm:
				$$
				\langle f\rangle_{\bar{Q}_T}^{(2+\alpha)}=\left\langle\partial_t f\right\rangle_{\bar{Q}_T}^{(\alpha)}+\sum_{i, j=1}^d\left\langle\partial_{i j} f\right\rangle_{\bar{Q}_T}^{(\alpha)}+\sum_{i=1}^d\left\langle\partial_i f\right\rangle_{t, \bar{Q}_T}^{\frac{1+\alpha}{2}} .
				$$
				Then, the quantity
				$$
				\|f\|_{C^{2+\alpha, \frac{2+\alpha}{2}}\left(\bar{Q}_T\right)} \equiv\|f\|_{2+\alpha, \bar{Q}_T}=\sum_{2 r+s \leq 2}\left\|\partial_t^r \partial_x^s f\right\|_{0, \bar{Q}_T}+\langle f \rangle_{\bar{Q}_T}^{(2+\alpha)}
				$$
				defines a norm.
			\end{itemize}
		\end{definition}
		
		\begin{theorem}
		Consider a bounded domain $\Omega$, the operator $L:= \frac{\partial f}{\partial t}(x,t) - \mathcal{L}f(x,t)$ and the PDE:
		\begin{eqnarray}\label{pde-reg}
			\begin{cases}
				Lf = g(x,t) & \text{ for } (x,t) \in Q_T \\ 
				f(x,0)= \varphi(x) & \text { for } x \in \Omega \\
				 f(x,t) = \psi(x,t) & \text{ for } x \in \Sigma_T:=\partial\Omega \times [0,T].
			\end{cases}
		\end{eqnarray}
		Then, for any $g \in C^{\alpha, \frac{\alpha}{2}}\left(\bar{Q}_T\right), \varphi \in C^{2+\alpha}(\bar{\Omega}), \psi \in C^{2+\alpha, \frac{2+\alpha}{2}}\left(\Sigma_T\right)$, with $0<a<1$, (\ref{pde-reg}) has a unique solution from the class $C^{2+\alpha, \frac{2+\alpha}{2}}\left(\bar{Q}_T\right)$ and satisfies the inequality:
		$$
		\|f\|_{2+\alpha, \bar{Q}_T} \leq C\left(\|g\|_{\alpha, \bar{Q}_T}+\|\varphi\|_{2+\alpha, \bar{\Omega}}+\|\psi\|_{2+\alpha, \Sigma_T}\right),
		$$
		with the constant $C$ independent of $f, \varphi$ and $\psi$.	
		\end{theorem}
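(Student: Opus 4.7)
The plan is to follow the classical approach to parabolic Schauder theory, combining a reduction to homogeneous data, an a priori Hölder estimate for the resulting problem, and existence via the method of continuity. First I would eliminate the initial and boundary data by constructing an extension $w \in C^{2+\alpha, (2+\alpha)/2}(\bar{Q}_T)$ with $w(\cdot, 0) = \varphi$ on $\bar{\Omega}$, $w = \psi$ on $\Sigma_T$, and $\|w\|_{2+\alpha, \bar{Q}_T} \leq C\bigl(\|\varphi\|_{2+\alpha, \bar{\Omega}} + \|\psi\|_{2+\alpha, \Sigma_T}\bigr)$. This is a standard construction using local charts near $\partial\Omega$ and Whitney-type extension operators adapted to the anisotropic parabolic Hölder spaces, and requires the implicit compatibility condition $\varphi|_{\partial\Omega} = \psi(\cdot, 0)$. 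Setting $u := f - w$ converts the problem to $Lu = \tilde g$ with zero initial and boundary data, where $\tilde g := g - Lw \in C^{\alpha, \alpha/2}(\bar{Q}_T)$ has norm controlled by the original data.

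Next I would establish the a priori estimate $\|u\|_{2+\alpha, \bar{Q}_T} \leq C\|\tilde g\|_{\alpha, \bar{Q}_T}$ for the homogenized problem. The workhorse here is the classical freezing-of-coefficients technique: at each point $(x_0, t_0)$ write the equation as $L_{(x_0, t_0)} u = \tilde g + (L_{(x_0, t_0)} - L)u$, where $L_{(x_0, t_0)}$ is the constant-coefficient parabolic operator obtained by freezing the coefficients of $L$. For such operators the fundamental solution is explicit, and the associated singular integral operators satisfy Hölder bounds by Calderón–Zygmund-type arguments, giving the interior estimate on parabolic cylinders. The error term $(L_{(x_0, t_0)} - L)u$ is then controlled by the $\alpha$-Hölder continuity of the coefficients combined with localization to small parabolic cylinders of radius $r$; under the parabolic scaling $(x,t) \mapsto (rx, r^2 t)$ this error picks up a factor $r^{\alpha}$, which for sufficiently small $r$ can be absorbed into the left-hand side. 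Boundary behavior is handled by flattening $\partial\Omega$ via local $C^{2+\alpha}$ diffeomorphisms, reducing to a model half-space problem where explicit Poisson-type kernels are available; a covering argument together with interpolation inequalities $\|u\|_{2+\alpha} \leq \epsilon \langle u \rangle^{(2+\alpha)} + C(\epsilon)\|u\|_0$ and the maximum-principle bound $\|u\|_0 \leq C\|\tilde g\|_0$ yield the global estimate.

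For existence I would then apply the method of continuity to the family $L_s := (1-s) L_0 + s L$, $s \in [0,1]$, where $L_0 := \partial_t - \Delta$ is the standard heat operator for which solvability in the required spaces is classical. Each $L_s$ is a parabolic operator with coefficients in $C^{\alpha, \alpha/2}$, so the a priori estimate applies uniformly in $s$, giving a uniform bound on $\|L_s^{-1}\|$ whenever $L_s$ is invertible. A standard Banach-space perturbation argument shows that the set $S := \{s \in [0,1] : L_s \text{ is surjective onto } C^{\alpha, \alpha/2}(\bar{Q}_T) \text{ with zero data}\}$ is open, and compactness of bounded sets in $C^{2+\alpha, (2+\alpha)/2}$ under the $C^{\alpha, \alpha/2}$ topology (via Arzelà–Ascoli applied to second spatial and first temporal derivatives) shows $S$ is closed. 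Since $0 \in S$, we conclude $1 \in S$, yielding existence. Uniqueness follows from the parabolic maximum principle applied to the difference of two solutions with zero data.

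The main technical obstacle is the a priori Schauder estimate itself, specifically the careful bookkeeping in the freezing-of-coefficients argument. One must track how the $\alpha$-Hölder seminorms of $\tilde g$ and of the coefficient errors rescale under parabolic dilations, and verify that the smallness gained from small radii genuinely absorbs the perturbative terms rather than the main Hessian and time-derivative contributions. A secondary obstacle is the construction of the extension $w$ and the treatment of compatibility conditions at the corner $\partial\Omega \times \{0\}$; although standard, the anisotropic Hölder scales in space and time require dedicated extension operators rather than the isotropic Whitney extension. These issues are what make the full proof in Garroni–Menaldi run to many pages, so the proposal above is necessarily an outline of the structure rather than a reconstruction of those arguments.
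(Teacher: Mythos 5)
The paper does not prove this statement: it is quoted verbatim as a known result from Garroni and Menaldi \cite{garroni1992green}, so there is no in-paper argument to compare against. Your outline is the standard classical route to parabolic Schauder theory (homogenization of the data, freezing of coefficients with absorption on small parabolic cylinders, boundary flattening, method of continuity, and the maximum principle for uniqueness), which is essentially the strategy underlying the cited source; the one point worth making explicit is that the statement as written silently requires the compatibility conditions at the corner $\partial\Omega \times \{0\}$ (at least $\varphi|_{\partial\Omega} = \psi(\cdot,0)$ and the first-order condition $L$ imposes there), without which no solution in $C^{2+\alpha,\frac{2+\alpha}{2}}(\bar{Q}_T)$ exists -- you flag this in passing, but it belongs in the hypotheses rather than as an aside in the extension step.
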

		
		When extending to L\'evy models and the corresponding integro-differential equations, we will need the following result. 
		\begin{lemma}\label{int-op}
			Consider $f \in C^{\alpha+2, \frac{2+\alpha}{2}}\left(\bar{Q}_T\right)$ and the differential operator:
			$$If(x, t)=\int_{\Omega}[f(x+z, t)-f(x, t)]\nu(dz).$$
			Then, for $0<a<1$, we have that:
			$$\|I f\|_{C^{\alpha, \frac{\alpha}{2}}\left(\bar{Q}_T\right)} \leq \varepsilon\|\nabla f\|_{C^{\alpha, \frac{\alpha}{2}}\left(\bar{Q}_T\right)}+C(\varepsilon)\|f\|_{C^{\alpha, \frac{\alpha}{2}}\left(\bar{Q}_T\right)}.$$
		\end{lemma}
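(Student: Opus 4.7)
The plan is to split the integration domain in $z$ into a ``small jumps'' region and a ``large jumps'' region, controlled by a cutoff radius $\delta>0$ that will ultimately be tuned against $\varepsilon$. Concretely, write
$$I f(x,t) = I^{\delta} f(x,t) + I_{\delta} f(x,t),$$
where
$$I^{\delta} f(x,t) := \int_{|z|\le \delta} \big[f(x+z,t) - f(x,t)\big]\,\nu(dz), \qquad I_{\delta} f(x,t) := \int_{|z|> \delta} \big[f(x+z,t) - f(x,t)\big]\,\nu(dz).$$
The strategy is to bound $I_{\delta}f$ by $\|f\|_{\alpha,\bar{Q}_T}$ with a $\delta$-dependent constant (this will contribute the $C(\varepsilon)\|f\|$ term), and to bound $I^{\delta}f$ by $\|\nabla f\|_{\alpha,\bar{Q}_T}$ with a coefficient that vanishes as $\delta\downarrow 0$ (this will contribute the $\varepsilon\|\nabla f\|$ term).

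For the large-jump piece, one uses the triangle inequality together with the fact that the translation $f(\cdot+z,\cdot)$ preserves each H\"older seminorm on $\bar{Q}_T$; combined with the finiteness of $\nu$ away from the origin (a defining property of any L\'evy measure), this yields
$$\|I_{\delta} f\|_{C^{\alpha,\alpha/2}(\bar{Q}_T)} \le 2\,\nu(\{|z|>\delta\})\,\|f\|_{C^{\alpha,\alpha/2}(\bar{Q}_T)}.$$
For the small-jump piece, I would apply the first-order Taylor expansion
$$f(x+z,t) - f(x,t) = \int_0^1 \nabla f(x+sz,t)\cdot z \, ds,$$
which, since $f\in C^{2+\alpha,(2+\alpha)/2}(\bar{Q}_T)$, is valid and leaves the integrand H\"older in both $x$ and $t$. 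Taking the pointwise supremum bounds $|I^{\delta}f(x,t)|\le \|\nabla f\|_{0,\bar{Q}_T}\int_{|z|\le \delta}|z|\,\nu(dz)$. For the spatial H\"older seminorm, the difference $I^{\delta}f(x,t)-I^{\delta}f(x',t)$ is dominated by $|x-x'|^{\alpha}\langle\nabla f\rangle_{x,\bar{Q}_T}^{(\alpha)}\int_{|z|\le \delta}|z|\,\nu(dz)$; the temporal seminorm is treated analogously using the temporal H\"older regularity of $\nabla f$ that is built into $C^{2+\alpha,(2+\alpha)/2}(\bar{Q}_T)$. Putting these together gives
$$\|I^{\delta} f\|_{C^{\alpha,\alpha/2}(\bar{Q}_T)} \le \Big(\int_{|z|\le \delta}|z|\,\nu(dz)\Big)\,\|\nabla f\|_{C^{\alpha,\alpha/2}(\bar{Q}_T)}.$$

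To conclude, since the paper works in the finite-activity (compound Poisson) setting with $\lambda=\nu(\mathbb{R})<\infty$ and jumps possessing a finite first moment, dominated convergence yields $\int_{|z|\le\delta}|z|\,\nu(dz)\to 0$ as $\delta\downarrow 0$. Given $\varepsilon>0$, choose $\delta=\delta(\varepsilon)$ so that the small-jump coefficient is at most $\varepsilon$, and then set $C(\varepsilon):=2\nu(\{|z|>\delta(\varepsilon)\})$. The main obstacle in this plan is the temporal H\"older seminorm: the Taylor identity is inherently spatial, so to estimate $I^{\delta}f(x,t)-I^{\delta}f(x,t')$ one must exploit the temporal regularity of $\nabla f$ rather than of $f$ itself, which is precisely why the hypothesis $f\in C^{2+\alpha,(2+\alpha)/2}$ (and not merely $C^{\alpha,\alpha/2}$) enters. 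A secondary subtlety is the finiteness of $\int_{|z|\le \delta}|z|\,\nu(dz)$; this is automatic under the paper's compound-Poisson assumption, but for a general L\'evy measure one would instead employ a second-order Taylor expansion paired with $\int(1\wedge|z|^2)\nu(dz)<\infty$, yielding an analogous estimate with $\nabla^2 f$ in place of $\nabla f$.
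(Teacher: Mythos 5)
Your proposal is correct, but note that the paper itself offers no proof of this lemma: it is stated as a ``simplified version of the corresponding results in'' the cited monograph of Garroni and Menaldi, and the authors defer entirely to that reference. Your small-jump/large-jump decomposition, with the first-order Taylor identity $f(x+z,t)-f(x,t)=\int_0^1\nabla f(x+sz,t)\cdot z\,ds$ on $\{|z|\le\delta\}$ and the crude two-term bound on $\{|z|>\delta\}$, is precisely the canonical argument behind such interpolation estimates, and your treatment of the temporal seminorm is the right one: the exponent-$\tfrac{\alpha}{2}$ H\"older continuity of $\nabla f$ needed there is supplied (on the bounded interval $[0,T]$, up to a factor $T^{1/2}$) by the seminorm $\langle\partial_i f\rangle_{t}^{(1+\alpha)/2}$ built into $C^{2+\alpha,\frac{2+\alpha}{2}}(\bar{Q}_T)$. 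Two remarks. First, in the paper's actual setting $\nu$ is a finite (compound Poisson) measure with $\nu(\mathbb{R})=\lambda$, so the lemma is in fact trivial there: $\|If\|_{\alpha,\bar{Q}_T}\le 2\lambda\|f\|_{\alpha,\bar{Q}_T}$ holds with no gradient term at all, and $\int_{|z|\le\delta}|z|\,\nu(dz)\le\delta\lambda$ without any appeal to dominated convergence or a first-moment hypothesis; your proof buys the genuinely non-trivial case of infinite-activity measures with $\int(1\wedge|z|)\,\nu(dz)<\infty$, which is closer to the generality of the cited source. Second, a cosmetic gap you share with the paper: for $x\in\Omega$ and $|z|>0$ the point $x+z$ may leave $\bar{\Omega}$, so the translates $f(\cdot+z,\cdot)$ must be understood via the extension of $f$ by its boundary values, as the paper implicitly does when setting up the numerical scheme; this does not affect the estimates.
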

	
		Note that Lemma \ref{int-op} is a simplified version of the corresponding results in \cite{garroni1992green}, where the authors consider additional integral operators of higher orders.

\section{Kolmogorov equations for regime switching and stochastic volatility models}\label{kolm-models}

Below we recall the Kolmogorov backward equations under the continuous regime switching, stochastic volatility and generalized models, respectively. These depend on the generators of the processes, which now include terms to capture the evolution of the regime and/or volatility processes. 
\begin{flalign}
	\frac{\partial f}{\partial t}(x,\rho, t)&=\mathcal{L}_1Q(x,\rho, t):= &&\notag\\&k_\rho(\theta_{\rho}-x) \frac{\partial f}{\partial x}(x,\rho,t) + \frac{1}{2} \sigma_\rho^2 \frac{\partial^{2} f}{\partial x^{2}}(x,\rho,t)+ \sum_{j \neq \rho} q_{\rho j} \Big(Q(x,j,t) - f(x,\rho,t)\Big) \label{l_1}&&\\
	\frac{\partial f}{\partial t}(x,y,t)&=\mathcal{L}_2 f(x,y, t):= &&\notag\\k(\theta -& x) \frac{\partial f}{\partial x}(x,y,t) + \kappa (\mu -y)\frac{\partial f}{\partial y}(x,y,t) + \frac{1}{2} y \frac{\partial^{2} f}{\partial x^{2}}(x,y,t) + \frac{1}{2} \xi^2 y \frac{\partial^{2} Q}{\partial y^{2}}(x,y,t) \label{l_2}\\
	\frac{\partial f}{\partial t}(x,\rho, y,t)&=\mathcal{L}_3 f(x,\rho,y, u):=k_\rho(\theta_{\rho}-x) \frac{\partial f}{\partial x}(x,\rho,y,t) + \kappa (\mu -y)\frac{\partial f}{\partial y}(x,\rho, y,t)  &&\notag \\ +\frac{1}{2} \sigma_\rho^2 &y\frac{\partial^{2} f}{\partial x^{2}}(x,\rho,y,t)+\frac{1}{2} \xi^2 y \frac{\partial^{2} f}{\partial y^{2}}(x,\rho, y,t) +\sum_{j \neq \rho} q_{\rho j} \Big(f(x,j,y,t) - f(x,\rho,y,t)\Big)  \label{l_3},&&
\end{flalign}
where we define separately the generator operators $\mathcal{L}_1, \mathcal{L}_2$ and $\mathcal{L}_3$, for notational convenience. For more details on the generators of regime switching and stochastic volatility models see e.g., \cite{hainaut2011financial,zhu2015feynman}.  

\section{PIDEs for the PD functions in Sobolev spaces}\label{sob-app}
For completeness, we first recall some basic definitions pertaining to the theory of weak derivatives in Sobolev spaces. 

\begin{definition} ({\bf Weak derivative})\label{weak-deriv}
	Consider an open subset $\Omega \subset \mathbb{R}^n$ and the space of continuous functions which are $k$ times continuously differentiable, for $k=1,2,\dots,$ denoted by $C^k( \Omega )$ and $L^1_{loc}(\Omega)$ the space of locally integrable functions. Furthermore, let $\alpha =(\alpha_1, \dots, \alpha_n)$ be a multi-index, with order $|\alpha|:= \sum_i \alpha_i$, and denote $D^a u$ by:
	$$D^\alpha u=\frac{\partial^{|\alpha|} u}{\partial x_1^{\alpha_1} \ldots \partial x_n^{\alpha_n}}=\frac{\partial^{\alpha_1}}{\partial x_1^{\alpha_1}} \ldots \frac{\partial^{\alpha_n}}{\partial x_n^{\alpha_n}} u.$$ Then, for $f \in L^1_{\text{loc}}(\Omega)$ we define $u \in L^1_{\text{loc}}(\Omega)$ to be the $\alpha$th weak derivative of $f$, $D^{\alpha}f = u$, if:
	$$  \int_{\Omega} f D^\alpha \varphi d x=(-1)^{|\alpha|} \int_{\Omega} u \varphi d x, $$ for every smooth test functon with compact support, $\varphi$. 
\end{definition}
With $\mathcal{Q}:= \mathcal{D}\times \mathcal{V} \times [0,T]$, we are interested in functions which are twice weakly differentiable with respect to the initial condition and once with respect to the time until maturity. Hence, we can therefore work in the Sobolev space containing all such functions $W^{2,1}(\mathcal{Q})=\left\{f \in L^p(\mathcal{Q}): D^\alpha f \in L^1(\mathcal{Q}),|\alpha| \leqslant 2\right\}$. 
\par To work in this space, we also need an appropriate weak version of the It\^o formula, pertaining to cases where the underlying function may not enjoy the required regularity properties; these results are given by Theorems \ref{jump1} and \ref{jump2}, 
due to \cite{krylov2008controlled} and \cite{okhrati2015ito}, respectively. We include the results below, for completeness:

\begin{theorem} \label{jump1}  
	Consider the stochastic process $$dX_t=a(x,t) dt + \sigma(x,t) d B_{t}$$ and a region $\mathcal{Q}$, where $B_t$ is a standard Brownian motion, with function $f$ such that function $f \in W^{2,1}(\mathcal{Q})$. Moreover, let $\tau$ be some Markov time such that $\tau < \tau_\mathcal{Q}$, where $t_Q$ is the exit time of the process from the region $\mathcal{Q}$. Then, if there exists some constant $K$ such that $|\sigma(x,t)|+|a(x,t)| \leq K$, for some fixed time $s$ we have that:
	\begin{flalign}
		f(X_\tau, s+\tau)&- f(X_t,s+t) = \int_{t}^{\tau} \frac{\partial f}{\partial u}(X_u, s+u) du \notag\\
		&+\int_{t}^{\tau} \frac{\partial f}{\partial x}(X_u, s+u) \sigma(X_u,u) dB(t)
		+\frac{1}{2}\int_{t}^{\tau}  \frac{\partial^2 f(X_u, s+u)}{\partial x^2} \sigma^2(X_u,u) du,
	\end{flalign}
	almost surely.
\end{theorem}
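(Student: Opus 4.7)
The plan is to establish this weak It\^o formula through a mollification argument, since the classical It\^o formula requires the $C^{2,1}$ regularity that $f \in W^{2,1}(\mathcal{Q})$ may lack on a set of Lebesgue measure zero. The strategy is to approximate $f$ by smooth functions $f_n$, apply the classical formula to each $f_n$, and then pass to the limit using the boundedness of $a$ and $\sigma$ together with appropriate occupation-time estimates for $X_u$.

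First, I would localize by choosing an open bounded subregion $\mathcal{Q}'$ compactly contained in $\mathcal{Q}$ such that $(X_u, s+u) \in \mathcal{Q}'$ for $u \in [t, \tau]$ almost surely; this is permissible because $\tau < \tau_{\mathcal{Q}}$ by hypothesis, together with a further localization of $\tau$ by a sequence of stopping times at which the process stays away from $\partial \mathcal{Q}$. On $\mathcal{Q}'$, I would mollify $f$ with a standard symmetric mollifier $\rho_n$ and set $f_n := f * \rho_n$, so that $f_n \in C^\infty(\mathcal{Q}')$ with $f_n \to f$, $\partial_u f_n \to \partial_u f$, $\partial_x f_n \to \partial_x f$, and $\partial_x^2 f_n \to \partial_x^2 f$ in $L^p(\mathcal{Q}')$ for any $p$ for which $f$ and its weak derivatives are integrable.

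Next, I would apply the classical It\^o formula to each $f_n$ to obtain
\begin{equation*}
f_n(X_\tau, s+\tau) - f_n(X_t, s+t) = \int_t^\tau \partial_u f_n(X_u, s+u)\,du + \int_t^\tau \partial_x f_n(X_u, s+u)\sigma(X_u,u)\,dB_u + \tfrac{1}{2}\int_t^\tau \partial_x^2 f_n(X_u, s+u)\sigma^2(X_u,u)\,du.
\end{equation*}
The key analytic tool for passing to the limit is the Krylov occupation-time estimate, which states that for bounded measurable $g$ on $\mathcal{Q}'$ and an SDE with bounded coefficients,
\begin{equation*}
\mathbb{E}\left[\int_t^\tau |g(X_u,s+u)|\,du\right] \leq C(K,\mathcal{Q}')\,\|g\|_{L^{d+1}(\mathcal{Q}')}.
\end{equation*}
Applying this with $g = \partial_u f_n - \partial_u f$, and analogously with the second-derivative term, shows that the drift and It\^o correction integrals converge in $L^1$ on the underlying probability space. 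The stochastic integral is handled by It\^o isometry combined with the same estimate applied to $|\partial_x f_n - \partial_x f|^2\sigma^2$, using $|\sigma|\leq K$. Almost-sure convergence of the endpoint terms $f_n(X_\tau, s+\tau) \to f(X_\tau, s+\tau)$ follows either from Sobolev embedding (for $W^{2,1}$ in one spatial dimension, $f$ admits a continuous representative) or from taking the continuous version of $f$ at the outset.

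The main obstacle will be justifying the Krylov occupation-time estimate itself, which normally requires uniform non-degeneracy of $\sigma$ to prevent the process from interacting pathologically with the null sets where the weak derivatives are not pointwise defined. Without such a hypothesis, the weak It\^o formula cannot hold for merely weakly differentiable $f$, so the theorem implicitly rests on such a condition as in Krylov's framework \cite{krylov2008controlled}. Granted this estimate, the remaining work is careful bookkeeping of the three convergences above; the boundedness of $a$ and $\sigma$ by $K$ and the localization to $\mathcal{Q}'$ keep every moment that appears finite, and a standard subsequence argument upgrades the various $L^1$/$L^2$ limits to the required almost-sure identity.
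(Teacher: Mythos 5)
The paper offers no proof of this statement: it is imported as a black box from \cite{krylov2008controlled} (``we include the results below, for completeness''), so there is no internal argument to compare yours against. What you have written is the standard proof of the It\^o--Krylov formula --- localize strictly inside $\mathcal{Q}$, mollify, apply the classical formula to $f_n = f * \rho_n$, and pass to the limit term by term using Krylov's occupation-time estimate for the Lebesgue integrals and the It\^o isometry combined with the same estimate for the stochastic integral --- and as a sketch it is sound. You are also right to flag that the occupation-time estimate requires uniform non-degeneracy of $\sigma$: for degenerate diffusions Krylov's bound carries a factor involving the determinant of the diffusion matrix and gives no control over the time the process spends on the null sets where the weak derivatives are not pointwise defined. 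That hypothesis is genuinely absent from the statement as transcribed in the paper, so your argument exposes an omission in the theorem rather than introducing one.

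One bookkeeping error to fix: the classical It\^o formula applied to $f_n$ produces the drift contribution $\int_t^\tau a(X_u,u)\,\frac{\partial f_n}{\partial x}(X_u,s+u)\,du$, which you (following the paper's own transcription of the theorem) have dropped from your intermediate identity for $f_n$. That term does not vanish in general and survives the limit by the same Krylov estimate applied to $a\,(\partial_x f_n - \partial_x f)$, so the correct limiting identity must contain $\int_t^\tau a(X_u,u)\,\frac{\partial f}{\partial x}(X_u,s+u)\,du$ as well. This is evidently a transcription slip in the paper's statement (compare the corresponding formula in \cite{krylov2008controlled}), but your proof as written inherits it: step three is not the classical It\^o formula, and the chain of limits would then establish an identity that is false for, say, $f(x,t)=x$ with nonzero drift. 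Restoring the drift term throughout repairs this with no other change to your argument.
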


\begin{theorem} \label{jump2}
	Consider the stochastic process with representation
	$$
	X_t=\gamma t +\int_{0}^{t}\int_{\mathbb{R}} z N(dz, du),$$
	where $\gamma \in \mathbb{R}$. 
	Assume $f: \mathcal{Q} \rightarrow \mathbb{R}$ is a continuous function on $U\mathcal{Q}$ such that $f \in L_{loc}^{1}(\mathcal{Q})$, i.e., $f$ is locally integrable. Furthermore, assume the existence of locally bounded weak first order derivatives, as defined in \ref{weak-deriv}. Then:
	\begin{eqnarray}
		f(X_t,t)=f(X_0,0)+\int_{0}^{t} \frac{\partial f}{\partial s}( X_u,u) d u+\gamma \int_{0}^{t} \frac{\partial f}{\partial x}(X_u,u) du+\nonumber \\ 
		\quad+\int_{0}^{t} \int_{\mathbb{R}}f( X_{u-}+z,u)-f(X_{u-},u) N(d u, d z),
	\end{eqnarray}
	where all derivatives are understood in the weak sense.
\end{theorem}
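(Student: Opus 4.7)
The plan is to extend the classical It\^o formula to the weak-derivative setting by smooth approximation, exploiting that $X_t$ has no Brownian component and therefore the formula involves only first-order derivatives in space. For smooth $f \in C^{1,1}(\mathcal{Q})$ the stated identity follows at once: between jumps we have $dX_t = \gamma\, dt$, so the fundamental theorem of calculus produces the drift-and-time-derivative contribution, while at each jump time $\tau$ we pick up $f(X_{\tau-}+\Delta X_\tau,\tau) - f(X_{\tau-},\tau)$, which assembles into the Poisson random-measure integral. My task is therefore to transfer this identity from $C^{1,1}$ to the class of continuous $f \in L^1_{\mathrm{loc}}(\mathcal{Q})$ with locally bounded weak first-order derivatives.

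I would proceed by mollification: let $\phi_\varepsilon$ be a standard smooth mollifier on $\mathbb{R} \times \mathbb{R}_+$ and set $f_\varepsilon := f * \phi_\varepsilon$. Each $f_\varepsilon$ is $C^\infty$, so the classical formula
\begin{align*}
f_\varepsilon(X_t,t) &= f_\varepsilon(X_0,0) + \int_0^t \partial_s f_\varepsilon(X_u,u)\, du + \gamma \int_0^t \partial_x f_\varepsilon(X_u,u)\, du \\
&\quad + \int_0^t \int_{\mathbb{R}} \bigl[f_\varepsilon(X_{u-}+z,u) - f_\varepsilon(X_{u-},u)\bigr]\, N(du,dz)
\end{align*}
holds path-wise, and the problem reduces to passing $\varepsilon \to 0$ in each term. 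By continuity of $f$ we have $f_\varepsilon \to f$ locally uniformly, which handles the left-hand side (after localising by a stopping time that keeps $|X|$ bounded if necessary). The jump term converges via dominated convergence, since on any finite horizon only finitely many jumps of size bounded below contribute almost surely and $f_\varepsilon \to f$ uniformly on compact sets.

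The delicate terms are the two time integrals of the weak derivatives. Here $\partial_x f_\varepsilon = (\partial_x f) * \phi_\varepsilon$ converges to $\partial_x f$ almost everywhere and in $L^1_{\mathrm{loc}}$, and similarly for $\partial_s f_\varepsilon$. Between consecutive jump times $\tau_i$ and $\tau_{i+1}$ the trajectory traces the line segment $(X_{\tau_i} + \gamma(u-\tau_i),\, u)$ with direction vector $(\gamma,1)$; invoking the ACL (absolutely continuous on lines) characterisation of Sobolev functions, almost every such line is a curve along which $f$ is absolutely continuous with derivative $\partial_s f + \gamma\, \partial_x f$. The local boundedness hypothesis on the weak derivatives supplies a dominating envelope, so dominated convergence delivers convergence of the two time integrals to their limiting analogues and the formula follows.

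The main obstacle is precisely this last step: one must show that the random starting point $X_{\tau_i}$ almost surely lands in the full-measure set of ``good lines'' on which $f$ is ACL. This Fubini-type issue is resolved by the absolute continuity of the law of $X_{\tau_i}$, which propagates from the absolutely continuous jump-size distribution $F$ through the compound-Poisson structure of the driving measure $N$. Once this is in place, the standard dominated-convergence argument closes the proof, and the use of local boundedness (rather than mere integrability) of the weak derivatives is what makes the envelope step go through cleanly.
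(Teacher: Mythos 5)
This statement is not proved in the paper at all: it is quoted verbatim (in Appendix \ref{sob-app}) as a known result attributed to \cite{okhrati2015ito}, included ``for completeness''. So there is no in-paper proof to compare against; what you have written is an attempted proof of a cited background theorem. Your overall strategy --- mollify $f$, apply the elementary pathwise formula to the smooth $f_\varepsilon$ (fundamental theorem of calculus along the drift between jumps, plus jump corrections), and pass to the limit --- is the standard route and is in the spirit of the cited reference.

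There is, however, a genuine gap in the limiting step for the two time integrals. You resolve the ``path is a planar null set'' problem by a Fubini/ACL argument: almost every line in direction $(\gamma,1)$ is good, and the law of $X_{\tau_i}$ is absolutely continuous, so the path almost surely lies on good lines. This fails for the very first segment of the path: from time $0$ until the first jump, the trajectory traces the \emph{deterministic} line $u \mapsto (X_0+\gamma u,\,u)$ starting from the fixed initial point, and no randomization is available to place it in the full-measure family of good lines. Moreover, the theorem as stated assumes nothing about absolute continuity of the jump-size distribution ($N$ is a general Poisson random measure), so the claimed propagation of absolute continuity to the laws of the $X_{\tau_i}$ is an extra hypothesis you have smuggled in; without it the same objection applies to every segment, not just the first. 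The standard repair does not go through Fubini at all: continuity of $f$ together with \emph{locally bounded} weak first-order derivatives forces $f \in W^{1,\infty}_{\mathrm{loc}}$, hence $f$ is locally Lipschitz, hence $u \mapsto f(x_0+\gamma u,\,u)$ is absolutely continuous along \emph{every} flow line. One then identifies the limit of $\int (\partial_s f_\varepsilon + \gamma\,\partial_x f_\varepsilon)(X_u,u)\,du$ segment by segment via $\int_a^b \tfrac{d}{du}f_\varepsilon(x_0+\gamma u,u)\,du = f_\varepsilon(x_0+\gamma b,b)-f_\varepsilon(x_0+\gamma a,a)$ and local uniform convergence $f_\varepsilon \to f$, rather than via dominated convergence of the integrands on almost every line. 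This is also precisely where the local boundedness hypothesis earns its keep, and where one must be explicit about which representative of the weak derivatives the stated formula refers to along the (planar-null) trajectory.
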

In the Sobolev setting, we can now derive a PIDE using the common approach of appropriate martingale arguments (similar analysis has been given in e.g., \cite{moller1995stochastic}).
\begin{lemma} \label{gen1}
	The survival probability with a variable starting time and fixed maturity $T$, $\Phi(x,\rho, y, s;T)$ satisfies the partial integro-differential equation, almost surely:
	\begin{flalign} \label{integrodiff}
		&\frac{\partial \Phi}{\partial s}(x,\rho, y, s; T)+ \mathcal{L}_3\Phi(x,\rho, y, s; T) + \int_{ \mathbb{R}}\Big(\Phi(x+z,\rho, y,s;T)-\Phi(x,\rho, y, s; T)\Big)\nu(dz)=0,
	\end{flalign}
	for $(x,\rho,y,s) \in  \mathcal{D} \times \mathcal{R} \times \mathcal{V}\times [0,T]$, with initial and boundary conditions: 
	\begin{align} \nonumber
		\Phi(x,\rho,y,T;T)= \mathds{1}_{\{x > 0\}} ,\,\,\ (x,\rho,y) \in \mathcal{D} \times \mathcal{R} \times \mathcal{V},\\ \nonumber
		\Phi(0,\rho,y, s;T) = 0, \,\,\ (\rho, y, s ) \in  \mathcal{R} \times \mathcal{V} \times [0,T], \nonumber \\
		\Phi(x,\rho, y, s;T) \rightarrow 1, \text{ as } x \rightarrow \infty, \,\,\ (\rho,y, s) \in \mathcal{R} \times \mathcal{V} \times  [0,T] , \nonumber \\
		\frac{\partial \Phi}{\partial y}(x, \rho, y,s;T) = 0, \text{ as } y \rightarrow \infty \,\,\ (x,\rho,s) \in  \mathcal{D} \times \mathcal{R} \times [0,T],
	\end{align} 
	with the generator operator $\mathcal{L}_3$ as given in (\ref{l_3}). 
\end{lemma}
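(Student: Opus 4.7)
The plan is to obtain the PIDE by combining a martingale characterization of the conditional survival probability with the weak It\^o formulas (Theorems \ref{jump1} and \ref{jump2}), adapted to the generalized asset process. First I would introduce the martingale
\[
M_s := \mathbb{E}\!\left[\mathds{1}\!\big(\inf_{s\le r\le T}G_r>0\big)\,\big|\,\mathcal{F}_s\right],\qquad s\in[0,T],
\]
and argue, via the Markov property for the triple $(G_s,R_s,Y_s)$ together with the fact that the maturity $T$ is fixed, that $M_s = \Phi(G_s,R_s,Y_s,s;T)$ almost surely on the event $\{\inf_{r\le s}G_r>0\}$. Thus the candidate PIDE will be the one whose left-hand side is the drift of the semimartingale decomposition of $s\mapsto \Phi(G_s,R_s,Y_s,s;T)$.

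Next I would decompose the generalized process $(G_s,R_s,Y_s)$ into its diffusive, regime-switching, and pure-jump parts, and apply the weak It\^o formulas separately to each contribution. The diffusive and temporal parts produce $\partial_s\Phi$ plus the differential operators $k_\rho(\theta_\rho-x)\partial_x + \kappa(\mu-y)\partial_y + \tfrac12\sigma_\rho^2 y\,\partial_{xx} + \tfrac12\xi^2 y\,\partial_{yy}$ acting on $\Phi$ via Theorem \ref{jump1}, the jump component of $G$ produces the nonlocal term $\int_{\mathbb R}\big(\Phi(x+z,\rho,y,s;T)-\Phi(x,\rho,y,s;T)\big)\nu(dz)$ via Theorem \ref{jump2}, and the CTMC $R_s$ contributes the difference term $\sum_{j\ne\rho}q_{\rho j}(\Phi(x,j,y,s;T)-\Phi(x,\rho,y,s;T))$ coming from the standard Dynkin formula for a finite-state Markov chain. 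Recognising the sum of the three deterministic drift contributions as $\mathcal{L}_3\Phi$ plus the L\'evy integral term as in (\ref{l_3}), the martingale property of $M_s$ forces this drift to vanish, yielding (\ref{integrodiff}).

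The verification of the initial and boundary data is then relatively short. The terminal condition at $s=T$ is immediate from the definition, since then $\inf_{T\le r\le T}G_r = x$; the condition at $x=0$ uses that the infimum is already non-positive; the behaviour as $x\to\infty$ follows from continuity together with the fact that trajectories starting arbitrarily high have vanishing probability of reaching zero in finite time; and the Neumann condition as $y\to\infty$ follows from the asymptotic flatness of the CIR transition density (cf.\ Remark \ref{gen-unif-cont}), provided we make a standard localisation argument near the volatility boundary.

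The main obstacle is justifying the applicability of Theorems \ref{jump1}--\ref{jump2} to $\Phi$: these weak It\^o results require $\Phi(\cdot,\rho,\cdot,\cdot;T)$ to lie in $W^{2,1}_{\text{loc}}(\mathcal{Q})$, whereas the results already established give only uniform continuity (Lemma \ref{cont-res}) and viscosity regularity in the forward-time variable $u=T-s$. To bridge this gap I would mollify $\Phi$ by convolution against a smooth approximation of the identity, apply the classical It\^o formula to the mollified function, obtain the PIDE in a distributional sense by letting the regularisation parameter shrink to zero, and then use the uniform continuity of $\Phi$ together with dominated convergence (the L\'evy kernel $\nu$ is finite by the compound-Poisson assumption, so the nonlocal term poses no integrability issue) to pass to the limit in the weak formulation. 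Under the additional regularity from Proposition \ref{regularity-result} and its higher-dimensional analogue indicated in the remark thereafter, this weak PIDE upgrades to a pointwise (strong) identity.
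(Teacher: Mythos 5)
Your proposal follows essentially the same route as the paper's proof in Appendix \ref{sob-app}: the conditional survival probability is identified as a martingale, the weak It\^o formulas of Theorems \ref{jump1} and \ref{jump2} give the semimartingale decomposition of $s\mapsto\Phi(G_s,R_s,Y_s,s;T)$, and rewriting the jump term against the compensated measure and forcing the resulting drift to vanish yields the PIDE, with the boundary data read off from the definition. The only divergence is that the paper simply takes $\Phi\in W^{2,1}(\mathcal{Q})$ as the standing hypothesis of its Sobolev framework, whereas you supply a mollification argument to justify applying the weak It\^o formulas; this is a refinement of a step the paper leaves implicit rather than a different method.
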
 
\begin{proof}
	We begin by considering the dynamics of the survival probability. As $\Phi$ is differentiable in $W^{2,1}(Q)$, we will employ Theorems \ref{jump1} and \ref{jump2} above. We then obtain:
	\begin{flalign}
		\Phi(G_w,R_w,& Y_w, w) - \Phi(G_s,\rho, y, s) = \int_{s}^{w} \Big(\frac{\partial \Phi}{\partial r}(G_r,R_r, Y_r, r) +  \mathcal{L}_3\Phi(x,\rho, y, s) \Big) dr \nonumber\\&+  
		\int_{s}^{w} \sigma(R_r)\sqrt{Y_r} \frac{\partial \Phi}{\partial x}(G_r,R_r, Y_r, r) dB_{r} + \int_{s}^{w} \xi \sqrt{Y_r} \frac{\partial \Phi}{\partial x}(G_r,R_r, Y_r, r) dW_{r}&\nonumber\\&+\int_{s}^{w} \int_{ \mathbb{R}} \Big( \Phi(G_r+z,R_r, Y_r, r)-\Phi(G_r,R_r, Y_r, r) \Big) N(dr, dz), \quad \quad
	\end{flalign}
	where the derivatives are understood in the weak sense in accordance to definition \ref{weak-deriv}. Also note that we omit the dependence on the $t$ parameter, for brevity. 
	We are now able to formulate the following result regarding the survival probability.
	We write the dynamics above in terms of the compensated Poisson measure $\tilde{N}(dt,dz)=N(dt,dz)-\nu(dz)dt$.
	The last term then becomes:
	$$\int_{s}^{w} \int_{\mathbb{R}}\Big(\Phi(G_r+z,R_r, Y_r, r)-\Phi(G_r,R_r, Y_r, r) \Big)(\tilde{N}(dr, dz)+\nu(dz) dr).$$
	Combining with the dynamics of $\Phi$ above and using the fact that the sum of the non-martingale quantities must be identically zero, we obtain PIDE (\ref{integrodiff}), as required. The boundary conditions follow by definition of the survival probability.
\end{proof}

\section{Existence and continuity of the PD function}

\begin{theorem} \label{arzela-ascoli} {\bf Arzelà-Ascoli}. 
	Let $(X, d)$ be a compact metric space and $C(X)$ the space of continuous functions on $X$. Then, if a sequence of continuous functions $\{f\}_{n=1}^{\infty}$ in $C(X)$ is bounded and equicontinuous it has a uniformly convergent subsequence.
\end{theorem}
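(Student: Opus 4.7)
The plan is to establish the result via the classical diagonal subsequence argument, exploiting the separability of compact metric spaces together with the equicontinuity hypothesis to upgrade pointwise convergence to uniform convergence.

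First, I would invoke the fact that every compact metric space $(X,d)$ is separable, and fix a countable dense subset $D = \{x_k\}_{k=1}^{\infty} \subset X$. The goal of the first stage is to produce a subsequence of $\{f_n\}$ that converges pointwise on $D$. Since $\{f_n\}$ is uniformly bounded, the sequence of real numbers $\{f_n(x_1)\}$ is bounded, so by Bolzano-Weierstrass there is a subsequence $\{f_{n,1}\}$ with $f_{n,1}(x_1)$ convergent. Applying the same reasoning to $\{f_{n,1}(x_2)\}$ extracts a further subsequence $\{f_{n,2}\}$ convergent at $x_2$ (and still at $x_1$). Iterating, I obtain a nested family of subsequences $\{f_{n,k}\}$, and the diagonal sequence $g_n := f_{n,n}$ converges at every point of $D$.

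The second stage upgrades this to uniform convergence on $X$, and this is where equicontinuity and compactness do the real work. Fix $\epsilon > 0$. By equicontinuity, choose $\delta > 0$ such that $d(x,y) < \delta$ implies $|f_n(x) - f_n(y)| < \epsilon/3$ for all $n$ (hence for all $g_n$). By compactness of $X$, finitely many open balls $B(y_1,\delta), \dots, B(y_M,\delta)$ cover $X$, and by density of $D$ I may choose each center $y_j$ to lie in $D$. Since $\{g_n\}$ converges at each $y_j$, there exists $N$ such that for all $n,m \geq N$ and all $j = 1,\dots,M$, $|g_n(y_j) - g_m(y_j)| < \epsilon/3$. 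For an arbitrary $x \in X$, pick $y_j$ with $d(x,y_j) < \delta$, and split
\begin{equation*}
|g_n(x) - g_m(x)| \leq |g_n(x) - g_n(y_j)| + |g_n(y_j) - g_m(y_j)| + |g_m(y_j) - g_m(x)| < \epsilon.
\end{equation*}
This shows $\{g_n\}$ is uniformly Cauchy, hence uniformly convergent in $C(X)$ equipped with the supremum norm.

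The main obstacle is the passage from pointwise convergence on the countable dense set to uniform convergence on all of $X$; the diagonal extraction is a standard device, but without equicontinuity and compactness one cannot bridge the gap, since pointwise convergence on a dense set does not in general imply uniform convergence. The argument is essentially classical (see e.g.\ Rudin), and I do not anticipate any model-specific complications here since the statement is formulated in abstract metric terms.
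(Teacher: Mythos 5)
Your proof is the standard and correct diagonal-subsequence argument for Arzel\`a--Ascoli: separability of the compact space, Bolzano--Weierstrass plus diagonal extraction for pointwise convergence on a dense set, then equicontinuity and a finite $\delta$-cover to obtain the uniform Cauchy property. The paper states this theorem in the appendix as a classical background result without proof, so there is no argument in the paper to compare against; your write-up is a complete and accurate rendition of the textbook proof.
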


\begin{theorem} {\bf Schauder Fixed Point}\label{schauder-state}
	Let $(X,\|\cdot\|)$ be a Banach space and $S \subset X$ is compact, convex, and nonempty. Any continuous operator $A: S \rightarrow S$ has at least one fixed point.
\end{theorem}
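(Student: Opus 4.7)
The plan is to prove the Schauder fixed point theorem by reducing the infinite-dimensional problem to a sequence of finite-dimensional problems, each handled by Brouwer's fixed point theorem, and then extracting a convergent subsequence via the compactness of $S$. The key idea is that although $S$ may live in an infinite-dimensional Banach space, its compactness forces it to be ``approximately finite-dimensional'' in the sense that for every $\epsilon>0$ there exists a finite $\epsilon$-net $\{x_1,\ldots,x_n\}\subseteq S$.

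Given such a net, I would construct the Schauder projection $P_\epsilon : S \to \mathrm{conv}(\{x_1,\ldots,x_n\})$ by
\[
P_\epsilon(x) \;=\; \frac{\sum_{i=1}^n \mu_i(x)\,x_i}{\sum_{i=1}^n \mu_i(x)}, \qquad \mu_i(x):=\max\{0,\ \epsilon-\|x-x_i\|\}.
\]
The denominator is strictly positive on $S$ since the $x_i$ form an $\epsilon$-net, and a direct convex-combination estimate yields $\|P_\epsilon(x)-x\|\leq \epsilon$ for all $x\in S$. Because $S$ is convex and contains each $x_i$, the set $C_\epsilon := \mathrm{conv}(\{x_1,\ldots,x_n\})$ lies in $S$, is convex, and sits inside a finite-dimensional subspace of $X$; hence $C_\epsilon$ is homeomorphic to a compact convex subset of some $\mathbb{R}^m$. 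The composition $T_\epsilon := P_\epsilon \circ A\big|_{C_\epsilon} : C_\epsilon \to C_\epsilon$ is continuous, so Brouwer's fixed point theorem supplies a point $x_\epsilon\in C_\epsilon$ with $x_\epsilon = P_\epsilon(Ax_\epsilon)$, and therefore
\[
\|x_\epsilon - Ax_\epsilon\| \;=\; \|P_\epsilon(Ax_\epsilon) - Ax_\epsilon\| \;\leq\; \epsilon.
\]

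To conclude, I would choose a sequence $\epsilon_k\downarrow 0$ and the corresponding $x_k := x_{\epsilon_k}\in S$ with $\|x_k - Ax_k\|\leq \epsilon_k$. By compactness of $S$, a subsequence $x_{k_j}$ converges to some $x^*\in S$, and by continuity of $A$ we have $Ax_{k_j}\to Ax^*$. Passing to the limit in the inequality above yields $x^* = Ax^*$, proving the theorem. The main obstacle is the construction and verification of the finite-dimensional Schauder projection $P_\epsilon$ with the desired uniform approximation property; once that is in place, the invocation of Brouwer (treated as a black box) and the compactness argument are routine. If Brouwer itself were not available as input, a further layer of work involving simplicial approximation or topological degree theory would be required, but that lies outside the scope of this appendix-style statement.
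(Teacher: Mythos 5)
The paper states this result without proof: it is quoted in the appendix as a classical background theorem (alongside Arzel\`a--Ascoli) and is only invoked, in Proposition \ref{schauder}, to produce a fixed point of the integral operator $\mathcal{P}$. Your argument is the standard and correct proof of Schauder's theorem: the finite $\epsilon$-net from compactness, the Schauder projection $P_\epsilon$ with $\|P_\epsilon(x)-x\|\le\epsilon$ (which holds because only indices with $\|x-x_i\|<\epsilon$ receive positive weight, so $P_\epsilon(x)-x$ is a convex combination of vectors of norm less than $\epsilon$), Brouwer's theorem on the finite-dimensional convex hull $C_\epsilon\subseteq S$, and the extraction of a convergent subsequence of approximate fixed points. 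There is no gap; since the paper offers no proof of its own, there is nothing to compare against beyond noting that your route is the canonical one.
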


\section{Regularity of solutions to parabolic PDEs}\label{reg-par-pde}
We will require the following results pertaining to the regularity of solutions of the second order parabolic PDE (\ref{kolm}). The following are due to \cite{garroni1992green}. 

\begin{theorem}\label{reg-theorem}
	Consider a bounded domain $\Omega$, the operator $L:= \frac{\partial f}{\partial t}(x,t) - \mathcal{L}f(x,t)$, where $\mathcal{L}$ is the generator operator, and the PDE:
	\begin{eqnarray}\label{pde-reg}
		\begin{cases}
			Lf = g(x,t) & \text{ for } (x,t) \in Q_T \\ 
			f(x,0)= \varphi(x) & \text { for } x \in \Omega \\
			f(x,t) = \psi(x,t) & \text{ for } x \in \Sigma_T:=\partial\Omega \times [0,T].
		\end{cases}
	\end{eqnarray}
	Then, for any $g \in C^{\alpha, \frac{\alpha}{2}}\left(\bar{Q}_T\right), \varphi \in C^{2+\alpha}(\bar{\Omega}), \psi \in C^{2+\alpha, \frac{2+\alpha}{2}}\left(\Sigma_T\right)$, with $0<a<1$, (\ref{pde-reg}) has a unique solution from the class $C^{2+\alpha, \frac{2+\alpha}{2}}\left(\bar{Q}_T\right)$ and satisfies the inequality:
	$$
	\|f\|_{2+\alpha, \bar{Q}_T} \leq C\left(\|g\|_{\alpha, \bar{Q}_T}+\|\varphi\|_{2+\alpha, \bar{\Omega}}+\|\psi\|_{2+\alpha, \Sigma_T}\right),
	$$
	with the constant $C$ independent of $f, \varphi$ and $\psi$.	
\end{theorem}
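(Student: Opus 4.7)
The plan is to follow the classical Schauder-theory strategy for linear parabolic equations with Hölder-continuous coefficients, combining a priori estimates with the method of continuity. First I would reduce the problem to one with homogeneous initial and boundary data: construct an auxiliary function $w \in C^{2+\alpha, \frac{2+\alpha}{2}}(\bar{Q}_T)$ that matches $\varphi$ on $\Omega \times \{0\}$ and $\psi$ on $\Sigma_T$ (this requires, implicitly, a compatibility condition between $\varphi$ and $\psi$ at the parabolic corner $\partial \Omega \times \{0\}$, which I would assume is part of the standing hypotheses), and then set $u = f - w$. The function $u$ solves $Lu = g - Lw$ with zero initial and lateral data, and the right-hand side lies in $C^{\alpha,\frac{\alpha}{2}}(\bar{Q}_T)$ with norm controlled by the data. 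Thus it suffices to handle the case $\varphi = \psi = 0$.

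Next I would establish the core a priori Schauder estimate
$$ \|u\|_{2+\alpha,\bar{Q}_T} \leq C\bigl(\|Lu\|_{\alpha,\bar{Q}_T} + \|u\|_{0,\bar{Q}_T}\bigr) $$
by the standard freezing-coefficients technique. The interior estimate is obtained by approximating $L$ at a base point $(x_0,t_0)$ by the constant-coefficient operator $L_0$, for which the explicit Gaussian fundamental solution (the heat-type kernel associated with the diffusion matrix at $(x_0,t_0)$) yields Hölder estimates on the second spatial derivatives and the first time derivative via a Campanato/Korn-type argument; the error term is absorbed using the Hölder continuity of the coefficients together with an interpolation inequality. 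Boundary estimates are then obtained by flattening $\partial\Omega$ locally and reducing to the half-space, where an analogous potential-theoretic representation applies. A covering argument patches the local estimates into the global one. The maximum principle (applicable because $L$ is linear and parabolic) controls the zero-order term and lets us upgrade to $\|u\|_{2+\alpha} \leq C\|Lu\|_\alpha$.

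For existence I would apply the method of continuity along the homotopy $L_\sigma = (1-\sigma)(\partial_t - \Delta) + \sigma L$, $\sigma \in [0,1]$. At $\sigma=0$ the heat equation with homogeneous boundary/initial data is uniquely solvable in $C^{2+\alpha,\frac{2+\alpha}{2}}(\bar{Q}_T)$ by the classical heat-kernel representation, so it suffices that each $L_\sigma$ satisfies the same Schauder estimate uniformly in $\sigma$; since the coefficients of $L_\sigma$ are Hölder and uniformly parabolic in $\sigma$, the previous step provides such a uniform bound. The set of $\sigma$ for which $L_\sigma$ is an isomorphism between the relevant Banach spaces is thus both open (by perturbation) and closed (by the uniform estimate combined with compactness), hence equal to $[0,1]$. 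Uniqueness follows from the maximum principle applied to the difference of two solutions.

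The main technical obstacle will be the boundary regularity theory together with the compatibility conditions at the parabolic corner $\partial \Omega \times \{0\}$: even for smooth data the interior estimates do not propagate up to $\Sigma_T$ without a careful analysis of the Poisson-type kernels on the half-space, and mismatches between $\varphi|_{\partial \Omega}$ and $\psi|_{t=0}$ would destroy $C^{2+\alpha,\frac{2+\alpha}{2}}$ regularity. Once this corner analysis is handled, the remaining arguments are essentially routine applications of interpolation in Hölder spaces and the continuity method.
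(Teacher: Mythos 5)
The paper does not prove this statement at all: Theorem \ref{reg-theorem} is quoted verbatim from the cited monograph of Garroni and Menaldi \cite{garroni1992green} and used as a black box in Proposition \ref{regularity-result}, so there is no in-paper argument to compare yours against. That said, your outline is precisely the classical Schauder existence theory that underlies the cited result (and its antecedents in Ladyzhenskaya--Solonnikov--Uraltseva and Friedman): reduction to homogeneous data, interior and boundary a priori estimates by freezing coefficients against the Gaussian kernel, interpolation to absorb the error terms, the method of continuity from the heat operator, and the maximum principle for uniqueness and for removing the zero-order term from the estimate. As a plan it is sound, and each step is standard for a uniformly parabolic operator with H\"older coefficients on a bounded domain with $C^{2+\alpha}$ boundary (a smoothness hypothesis on $\partial\Omega$ that the paper's statement also leaves implicit). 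Your most valuable observation is the one the paper suppresses: membership of the solution in $C^{2+\alpha,\frac{2+\alpha}{2}}(\bar{Q}_T)$ up to the parabolic corner genuinely requires first-order compatibility conditions between $\varphi$ and $\psi$ on $\partial\Omega\times\{0\}$ (namely $\varphi=\psi(\cdot,0)$ and $\partial_t\psi(\cdot,0)=\mathcal{L}\varphi+g(\cdot,0)$ on $\partial\Omega$); without them the conclusion as stated is false, and you are right to fold them into the standing hypotheses. The only caution I would add is that this theorem, as used later in the paper, is applied to generators whose diffusion coefficient can degenerate (e.g.\ the factor $y$ in the stochastic volatility model), where uniform parabolicity --- and hence your entire freezing-coefficients machinery --- breaks down; but that is a gap in the paper's application, not in your proposed proof of the theorem itself.
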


When extending to L\'evy models and the corresponding integro-differential equations, we will need the following result. 
\begin{lemma}\label{int-op}
	Consider $f \in C^{\alpha+2, \frac{2+\alpha}{2}}\left(\bar{Q}_T\right)$ and the differential operator:
	$$If(x, t)=\int_{\Omega}[f(x+z, t)-f(x, t)]\nu(dz).$$
	Then, for $0<a<1$, we have that:
	$$\|I f\|_{C^{\alpha, \frac{\alpha}{2}}\left(\bar{Q}_T\right)} \leq \varepsilon\|\nabla f\|_{C^{\alpha, \frac{\alpha}{2}}\left(\bar{Q}_T\right)}+C(\varepsilon)\|f\|_{C^{\alpha, \frac{\alpha}{2}}\left(\bar{Q}_T\right)}.$$
\end{lemma}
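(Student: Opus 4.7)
\textbf{Proof plan for Lemma \ref{int-op}.} The plan is to exploit the classical splitting of the L\'evy integral into a ``small jump'' part, where a Taylor expansion makes the gradient appear, and a ``large jump'' part, where only the function itself needs to be controlled. Concretely, fix $\delta>0$ (to be tuned later in terms of $\varepsilon$) and decompose
\begin{equation*}
If(x,t) = \int_{|z|<\delta}\bigl[f(x+z,t)-f(x,t)\bigr]\nu(dz) + \int_{|z|\geq\delta}\bigl[f(x+z,t)-f(x,t)\bigr]\nu(dz) =: I_{1}f(x,t) + I_{2}f(x,t).
\end{equation*}
For $I_{1}f$, I would apply the mean value theorem to write $f(x+z,t)-f(x,t) = \int_{0}^{1}\nabla f(x+\theta z,t)\cdot z\,d\theta$, so that $|I_{1}f(x,t)|\leq \|\nabla f\|_{0,\bar Q_T}\int_{|z|<\delta}|z|\,\nu(dz)$. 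Because the L\'evy measure satisfies $\int (|z|^{2}\wedge 1)\,\nu(dz)<\infty$, the factor $\eta(\delta):=\int_{|z|<\delta}|z|\,\nu(dz)$ can be made arbitrarily small by choosing $\delta$ small. For $I_{2}f$ we use the trivial bound $|I_{2}f(x,t)|\leq 2\|f\|_{0,\bar Q_T}\,\nu(\{|z|\geq\delta\})$, which is finite but blows up as $\delta\downarrow 0$; this generates the constant $C(\varepsilon)$.

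Next, I would run the same splitting on the H\"older seminorms $\langle\cdot\rangle_{x,\bar Q_T}^{(\alpha)}$ and $\langle\cdot\rangle_{t,\bar Q_T}^{(\alpha/2)}$. For the spatial seminorm, consider
\begin{equation*}
If(x,t)-If(x',t) = \int_{\mathbb{R}}\bigl[\bigl(f(x+z,t)-f(x,t)\bigr)-\bigl(f(x'+z,t)-f(x',t)\bigr)\bigr]\nu(dz),
\end{equation*}
and again split at $|z|=\delta$. On $\{|z|<\delta\}$, write each difference via the integral of $\nabla f$ and bound the resulting cross-difference by $\langle\nabla f\rangle_{x,\bar Q_T}^{(\alpha)}|x-x'|^{\alpha}|z|$, producing an estimate of the form $\eta(\delta)\langle\nabla f\rangle_{x,\bar Q_T}^{(\alpha)}|x-x'|^{\alpha}$. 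On $\{|z|\geq\delta\}$, bound each difference $f(x+z,t)-f(x'+z,t)$ and $f(x,t)-f(x',t)$ directly by $\langle f\rangle_{x,\bar Q_T}^{(\alpha)}|x-x'|^{\alpha}$, which contributes $C(\delta)\langle f\rangle_{x,\bar Q_T}^{(\alpha)}|x-x'|^{\alpha}$. The temporal seminorm is handled identically, replacing $x,x'$ by $t,t'$ and $\alpha$ by $\alpha/2$, and using H\"older continuity of $f$ and of $\nabla f$ in $t$.

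Summing the four estimates (supremum plus spatial and temporal seminorms, for both $I_{1}$ and $I_{2}$) yields
\begin{equation*}
\|If\|_{C^{\alpha,\alpha/2}(\bar Q_T)} \leq \eta(\delta)\,\|\nabla f\|_{C^{\alpha,\alpha/2}(\bar Q_T)} + C(\delta)\,\|f\|_{C^{\alpha,\alpha/2}(\bar Q_T)}.
\end{equation*}
Given $\varepsilon>0$, choose $\delta$ so small that $\eta(\delta)\leq\varepsilon$ and set $C(\varepsilon):=C(\delta(\varepsilon))$ to conclude. The main technical obstacle is the cross-difference estimate on $\{|z|<\delta\}$ for the H\"older seminorms: one must keep track of two independent small parameters (the increment $|x-x'|$ and the jump size $|z|$) so that the integrability factor $\int_{|z|<\delta}|z|\,\nu(dz)$ multiplies $\langle\nabla f\rangle^{(\alpha)}$ rather than $\|\nabla f\|_{0}$. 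This requires an interpolation between the two increments—using $\nabla f$ at shifted base points—and a careful application of the H\"older regularity of $\nabla f$ in the appropriate variable.
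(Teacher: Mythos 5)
The paper does not actually prove this lemma: it states it and defers entirely to Garroni and Menaldi \cite{garroni1992green}, remarking only that the statement is a simplified version of their estimates for integro-differential operators. So you are supplying an argument where the paper gives none. Your small-jump/large-jump splitting at $|z|=\delta$, with a first-order Taylor (mean value) representation on $\{|z|<\delta\}$ to extract $\nabla f$ and the trivial bound on $\{|z|\geq\delta\}$, is exactly the mechanism behind the cited result, and the seminorm estimates go through as you describe: for the spatial seminorm of $I_1f$ one subtracts the two mean-value representations at the \emph{same} shifted base point $x+\theta z$ versus $x'+\theta z$, so the bound $\eta(\delta)\langle\nabla f\rangle_{x,\bar Q_T}^{(\alpha)}|x-x'|^{\alpha}$ falls out directly; the ``interpolation between two increments'' you flag as the main obstacle is in fact not needed.

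There is, however, one genuine gap in the justification of the key smallness claim. You assert that $\int(|z|^{2}\wedge 1)\,\nu(dz)<\infty$ makes $\eta(\delta)=\int_{|z|<\delta}|z|\,\nu(dz)$ small. That implication is false for a general L\'evy measure: e.g.\ $\nu(dz)=|z|^{-2-\beta}dz$ with $1\leq\beta<2$ satisfies the second-moment condition near the origin but has $\eta(\delta)=+\infty$ for every $\delta>0$. Your first-order Taylor argument requires the finite-variation condition $\int_{|z|<1}|z|\,\nu(dz)<\infty$; without it one must expand to second order and the right-hand side would involve $D^{2}f$ rather than $\nabla f$. In the setting of this paper the gap is harmless, because the jump part is a compound Poisson process with $\nu(dz)=\lambda\,dF(z)$ and $\nu(\mathbb{R})=\lambda<\infty$, so $\eta(\delta)\leq\lambda\delta\to 0$ — indeed, with a finite L\'evy measure the lemma already follows from the crude bound $\|If\|_{C^{\alpha,\alpha/2}(\bar Q_T)}\leq 2\lambda\|f\|_{C^{\alpha,\alpha/2}(\bar Q_T)}$, with no gradient term at all. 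You should either state the finite-variation hypothesis explicitly or invoke the finiteness of $\nu$ as used throughout the paper; as written, the sentence justifying the smallness of $\eta(\delta)$ is incorrect.
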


Note that Lemma \ref{int-op} is a simplified version of the corresponding results in \cite{garroni1992green}, where the authors consider additional integral operators of higher orders.

		\bibliographystyle{plain}
		\bibliography{Levy_OU_PD}
		

	\end{document}